%% file: main.tex
\documentclass[11pt]{article}

\usepackage{amsmath,amssymb,amsthm}
\usepackage{mathtools}
\usepackage{booktabs}
\usepackage{microtype}
\usepackage[round]{natbib}
\usepackage[letterpaper,textwidth=360pt,textheight=600pt,centering,%
            headheight=14pt,headsep=18pt,footskip=24pt]{geometry}
\usepackage[hidelinks]{hyperref}
\hypersetup{
  pdftitle={Forced Coincidence and Its Boundary in Finite Condition Structures},
  pdfauthor={Daisuke Hirota},
  pdfsubject={Finite condition structures; forced coincidence; separators; load},
  bookmarksnumbered=true,
}
\usepackage{fancyhdr}
\input{numbers.tex}

\newtheorem{theorem}{Theorem}[section]
\newtheorem{proposition}[theorem]{Proposition}
\newtheorem{lemma}[theorem]{Lemma}
\newtheorem{corollary}[theorem]{Corollary}

\theoremstyle{definition}
\newtheorem{definition}[theorem]{Definition}

\theoremstyle{remark}
\newtheorem{remark}[theorem]{Remark}

\title{Forced Coincidence and Its Boundary\\
in Finite Condition Structures}
\author{Daisuke Hirota}
\date{}

\begin{document}
\maketitle

\begin{abstract}
Accounts of organizational dissolution explain survivor homogenization,
temporally concentrated exit, and fragmentation into cohesive subgroups in
different vocabularies.  Their recurrence is the fact to be explained.  A
finite condition structure with a minimal dynamic extension generates all
three, each in its own regime, with no actor responding to another; we then
determine which conclusions survive when the restrictions defining those
regimes are relaxed.  With two requirement masks the union class is the only
mediator between the pure types and the signature with the largest closed
neighbourhood; under positive endpoint loads it is also the unique maximizer
of deficient load.  The independent derivations of fragmentation and
homogenization therefore both concern that class.  With more than two masks
the three roles need not coincide.  Forced coincidence admits an exact
characterization for every number of masks: it holds if and only if the
requirement-derived join lies in the realized family and separates the
endpoints.  The proportion of family--pair instances meeting this condition
vanishes under a uniform counting measure as the number of masks increases,
and, at one actor per signature, an edge density above a stated bound
excludes it.  A common substrate does not make the mechanisms identical:
selective attrition can hold while correlated collapse fails, and a change
in exit dependence can change the population outcome while every actor's
marginal lifetime law is unchanged.
\end{abstract}

\clearpage
\setcounter{tocdepth}{1}
\tableofcontents
\clearpage

\input{sections/01_introduction}

\input{sections/02_substrate}

\input{sections/03_dynamics}

\input{sections/04_generation}

\input{sections/05_recovery}

\input{sections/06_boundaries}

\input{sections/07_observables}

\input{sections/08_discussion}

\input{sections/09_conclusion}

\appendix
\input{sections/A_appendix}

\bibliographystyle{plainnat}
\bibliography{refs}
\end{document}

%% file: numbers.tex
\newcommand{\pvBalanceReadout}{0.9102}  %
\newcommand{\pvStratPairsRThreeSThree}{15}  %
\newcommand{\pvStratForcedRThreeSThree}{6}  %
\newcommand{\pvStratPairsRThreeSFour}{48}  %
\newcommand{\pvStratForcedRThreeSFour}{12}  %
\newcommand{\pvStratPairsRFourSThree}{151}  %
\newcommand{\pvStratForcedRFourSThree}{25}  %
\newcommand{\pvStratPairsRFourSFour}{1416}  %
\newcommand{\pvStratForcedRFourSFour}{144}  %
\newcommand{\pvNonInvariantIsoClasses}{17}  %
\newcommand{\pvInvGoodCount}{14289}  %
\newcommand{\pvCoordSupportEnum}{60884}  %
\newcommand{\pvInvFailCount}{125614}  %
\newcommand{\pvCapViolPAHZero}{1710}  %
\newcommand{\pvCapGoodHZero}{9891}  %
\newcommand{\pvCapGenUnitCases}{1323}  %
\newcommand{\pvObservablesCollapsed}{32341}  %

\newcommand{\pvCapGoodComposition}{1308}  %
\newcommand{\pvCapViolSigmaComposition}{469}  %
\newcommand{\pvCapGenInstances}{3969}  %
\newcommand{\pvCapGenViolSigma}{0}  %
\newcommand{\pvCapGenViolUnit}{415}  %
\newcommand{\pvCapGenTight}{217}  %

\newcommand{\pvAsaCitations}{6{,}000}  %
\newcommand{\pvAsaHypothesisStudies}{321}  %
\newcommand{\pvAsaTemporalStudiesWord}{one}  %
\newcommand{\pvHaudekDecreased}{42.5}  %
\newcommand{\pvHaudekIncreased}{57.5}  %

%% file: sections/01_introduction.tex
\section{Introduction}\label{sec:intro}

Closely related morphologies of withdrawal, fragmentation, and homogenization
have been explained many times, in separate theoretical vocabularies.  Turnover
arrives in clusters among employees who occupy similar informal roles
\citep{krackhardt1986snowball,porter2021contagion}.  Participation collapses
once a chain of individual thresholds is crossed
\citep{granovetter1978threshold,macy2020threshold}.  Structurally unbalanced
relations resolve toward balance \citep{cartwright1956structural,pham2022triad}.
Withdrawal proceeds through a chain of intermediate decisions
\citep{mobley1977intermediate,lee2017decade}.  Organizations become more
homogeneous through attraction, selection, and attrition
\citep{schneider1987asa,oh2018birds}.

These mechanisms are established; the attraction--selection--attrition
framework alone has been cited in more than $\pvAsaCitations$ papers, with
$\pvAsaHypothesisStudies$ studies using it as the basis of a hypothesis
\citep{vaniddekinge2026asa}.  The accounts developed in separate literatures,
with different explanatory variables and time scales, yet the morphologies they
explain resemble one another.  Three recur:

\begin{itemize}
  \item[(M1)] \emph{Survivor homogenization.}  Survivors end up more alike
    than the population they were drawn from.
  \item[(M2)] \emph{Temporal concentration.}  Exits concentrate in time.
  \item[(M3)] \emph{Fragmentation.}  The group splits into internally cohesive
    fragments.
\end{itemize}

What recurs is the theoretical expectation, not a settled regularity: in the
\pvAsaTemporalStudiesWord{} study the review just cited identifies as
examining homogeneity over time, within-firm variance fell in
$\pvHaudekDecreased\%$ of cases and rose in $\pvHaudekIncreased\%$.  The
mechanisms diverge; the morphologies recur.  This asymmetry is the fact
requiring explanation.  A shared source may lie in the organizations
themselves, in transmission among the literatures, or nowhere, the recurrence
being contingent.  Transmission and contingency cannot be excluded by
construction; a source of the first kind can be exhibited, and we construct one.

The structural substrate is inherited.  A \emph{finite condition
structure}---actors depending on distinct bundles drawn from a common finite set
of conditions that sustain their relations---is the object of our earlier static theory
\citep{imptheom}, which states what such a structure forces once positions are
fixed.  We place a minimal dynamic extension on the same structure, adding no new
primitive to $\mathcal S$ and carrying the veto semantics over as an operator on the surviving
population, and show that the resulting system generates (M1)--(M3); this is a
sufficiency result.  A second question follows: does the shared substrate merely
permit several mechanisms, or does it force roles defined independently of one
another---mediation between the pure types, closed-neighbourhood size, and
deficient load---onto one position?  With two requirement masks it does, and
the independent derivations of fragmentation and homogenization both concern the
class the requirement lattice singles out.  With more masks it is no longer
forced.  Forced coincidence admits an exact characterization for every number of
masks, the join lying in the realized family and separating the endpoints, and
the proportion of family--pair instances meeting it vanishes as the number of
masks increases.

\subsection*{Results}

\begin{enumerate}
  \item \emph{Generation.}  A finite condition structure with a minimal dynamic
    extension generates (M1)--(M3), with no observer in the state space and no
    assumption that actors respond to one another, each in a regime of its own:
    (M1) from concentration at the persistence maximum
    (Lemma~\ref{lem:persistmax}), with the mixed class over-represented among
    the exits at every finite horizon (Proposition~\ref{prop:enrichment}); (M2) from route
    selection at $\beta=1$ (Proposition~\ref{prop:routeselector}); (M3) from
    thin-bridge separation (Theorem~\ref{thm:thinbridge}).  The exit-dependence
    parameter $\beta$ distinguishes the first two on the frozen slice
    (Section~\ref{sec:recovery}); what the other coordinates do to these results
    is stated in Section~\ref{sec:coordinates}.
  \item \emph{Alignment at two masks.}  With two requirement masks the union class
    is at once the only mediator between the pure types, the signature with the
    largest closed neighbourhood, and, under positive endpoint loads, the unique
    maximizer of deficient load (Corollary~\ref{cor:forcedcoincidence}).  The
    derivations of (M3) and (M1) are independent---one uses the mediating
    role (Corollary~\ref{cor:realizedmediator}), the other the load ordering
    (Proposition~\ref{prop:supporthazard})---and the lattice makes both apply
    to the same class.  At three or more masks the alignment is not forced.
  \item \emph{Boundary.}  Forced coincidence is characterized by the
    requirement-derived join together with local connectivity, becomes
    asymptotically negligible under a uniform counting measure as the number of
    masks grows, and implies a cap on an observable density; the same section
    locates the boundary of the dynamic results in the coordinates those results
    hold fixed (Section~\ref{sec:boundaries}).
  \item \emph{Non-identity.}  Sharing a generator does not merge the
    mechanisms: selective attrition holds while correlated collapse fails in one
    regime (Proposition~\ref{prop:nonidentity}), and both hold with different
    outcomes in another (Proposition~\ref{prop:divergentendpoints}).
  \item \emph{Observable implications.}  Distinguishing among latent
    structural possibilities requires information the projected graph does not
    preserve; some projections retain enough to exclude classes of latent
    structure (Corollary~\ref{cor:inverseexclusion},
    Section~\ref{sec:observables}).
\end{enumerate}

%% file: sections/02_substrate.tex
\section{Structural substrate}\label{sec:substrate}

\subsection{The finite condition structure}

\begin{definition}[Finite condition structure]\label{def:substrate}
A \emph{finite condition structure} is a tuple
\[
  \mathcal{S}=\bigl(P,\;G,\;a,\;L,\;\{f_g\}_{g\in G}\bigr),
\]
where $P$ is a finite set of actors with $N:=|P|\ge 2$; $G$ is a finite set of
gain axes; $a\colon P\times G\to\{0,1\}$ is the minimum-condition function;
$L$ is a position space carrying no topology, metric, or order; and each
$f_g\colon L\times L\to\{0,1\}$ is a symmetric compatibility function.
\end{definition}

The axis set $G$ is an axiomatic input rather than something derived.  The map $a$ is fixed once and for all and is \emph{partner-independent}:
whether $i$ requires axis $g$ does not depend on whom $i$ is paired with.
Consequently the \emph{requirement signature}
\[
  s_i:=\{\,g\in G\;:\;a(i,g)=1\,\}\subseteq G
\]
is a derived object rather than a primitive.

A configuration is a map $E\colon P\to L$.  It is not part of $\mathcal{S}$.

\subsection{Inherited statements}

\begin{description}
  \item[(S1) Minimum condition.]  $a(i,g)=1$ means that axis $g$ is a minimum
    condition for $i$: failure on $g$ is sufficient to make the relation
    non-viable for $i$ \citep{imptheom}.
  \item[(S2) Veto asymmetry.]  Severance is disjunctive and confirmation is
    conjunctive.  Incompatibility on axis $g$ at the given configuration, that
    is $f_g(E(i),E(j))=0$, severs the relation when $a(i,g)=1$ \emph{or}
    $a(j,g)=1$.  An axis $g\in s_i\cap s_j$ is \emph{shared} by $i$ and $j$ as a
    property of $\mathcal{S}$ alone; it counts as \emph{confirmed} only when in
    addition $f_g(E(i),E(j))=1$, which depends on $E$.
\end{description}

\noindent
(S2) draws candidate veto axes from the union $s_i\cup s_j$ and candidate axes for bilateral confirmation from the intersection $s_i\cap s_j$; whether a candidate is compatible still depends on $f_g(E(i),E(j))$.

Two claims are made under the word \emph{inheritance}: (I) no new primitive is added to the tuple $\mathcal{S}$ of Definition~\ref{def:substrate}; (II) the minimum-condition rule is taken as the removal semantics of the dynamic model, realized in Section~\ref{sec:dynamics} as an operator on subsets of $P$.

%% file: sections/03_dynamics.tex
\section{Minimal dynamic extension}\label{sec:dynamics}

\subsection{State}

\begin{definition}[Dynamic state]\label{def:state}
Fix a finite condition structure $\mathcal{S}$.  The
dynamic state at time $t$ consists of
\[
  R(t)\subseteq R_0\subseteq P,
  \qquad
  \varphi_{ig}(t)\in\{0,1\}\ \ (i\in P,\ g\in G),
\]
where $R(t)$ is the surviving population, non-increasing in $t$, and
$\varphi_{ig}(t)$ records whether axis $g$ is currently satisfied for actor
$i$.  We call $\varphi$ the \emph{satisfaction state}.
\end{definition}

\noindent
The dynamics replace the pairwise compatibility $f_g(E(i),E(j))$ of
Section~\ref{sec:substrate} by the unary state $\varphi_{ig}$, and use neither
$f_g$ nor $E$ thereafter.  Since $\varphi_{ig}$ is a state variable only for the
incidences that exist, we set $\varphi_{ig}:=0$ when $g\notin s_i$; for $g\in G$
and $t\ge 0$ we write $Q_g(t):=\{\,i\in P:\varphi_{ig}(t)=1\,\}$ for the
\emph{satisfaction support} of $g$.

\subsection{The dynamic law}

The law fixes how conditions fail and are restored, and when a deficient actor
leaves; carrying (S1) over to removal is claim (II) of
Section~\ref{sec:substrate}, taken rather than inherited.

Write $s_i\subseteq G$ for the requirement signature of $i$ and
\[
  d_i(t):=\bigl|\{\,g\in s_i\;:\;\varphi_{ig}(t)=0\,\}\bigr|
\]
for its current number of deficient axes.  For an axis $g$, let the satisfied
and deficient actors on $g$ among survivors be
\[
  P_g(t):=Q_g(t)\cap R(t),
  \qquad
  D_g(t):=\{\,i\in R(t)\;:\;g\in s_i,\ \varphi_{ig}(t)=0\,\}.
\]
The clocks act on these sets.  They differ from the satisfaction support
$Q_g(t)$ by survivorship: exit does not alter $\varphi$, so a departed actor
keeps its satisfaction state, remaining in $Q_g(t)$ while leaving $P_g(t)$.

\begin{definition}[Transition intensities]\label{def:law}
Fix parameters
\[
  \lambda>0,
  \qquad \nu,\rho\ge 0,
  \qquad \beta,\rho_S\in[0,1].
\]
Conditional on an initial state $(R_0,\varphi(0))$, the process is a
continuous-time Markov jump process with the following independent exponential
clocks.

\emph{Loss} $\varphi:1\to 0$.
\begin{itemize}
  \item Shared: one clock per axis $g$ with $P_g(t)\neq\emptyset$, at rate
    $\rho_S\,\nu$, setting $\varphi_{ig}\to 0$ for \emph{every}
    $i\in P_g(t)$ at once.
  \item Idiosyncratic: one clock per incidence $(i,g)$ with $i\in P_g(t)$, at
    rate $(1-\rho_S)\,\nu$, setting $\varphi_{ig}\to 0$.
\end{itemize}

\emph{Supply} $\varphi:0\to 1$.  One clock per incidence $(i,g)$ with
$i\in D_g(t)$, at rate $\rho$.  There is no shared component.

\emph{Exit} $i\in R(t)\to i\notin R(t)$.
\begin{itemize}
  \item Idiosyncratic: one clock per deficient incidence $(i,g)$, at rate
    $(1-\beta)\lambda$, removing $i$.  Aggregating over $g$, actor $i$ leaves
    at rate $(1-\beta)\lambda\,d_i(t)$.
  \item Common: one clock per axis $g$ with $D_g(t)\neq\emptyset$, at rate
    $\beta\lambda$, removing \emph{every} actor in $D_g(t)$ at once.
\end{itemize}
\end{definition}

\noindent
Throughout, \emph{shared} refers to the loss clock scaled by $\rho_S$ and
\emph{axis-common} to the exit clock scaled by $\beta$; the two act on
different transitions.

\subsection{One scale and four free coordinates}

Definition~\ref{def:law} specifies five parameters.  After rescaling time, four
dimensionless coordinates remain.

\begin{remark}[$\lambda$ is a gauge]\label{rem:gauge}
The scale $\lambda$ enters only the exit intensities.  Rescaling time by
$\lambda$ leaves the law invariant provided $\nu$ and $\rho$ are rescaled with
it, so trajectories depend on
$(\beta,\rho_S,\nu/\lambda,\rho/\lambda)$ alone.  We may set $\lambda=1$ when counting free
coordinates, and then write them as $(\beta,\nu,\rho_S,\rho)$.  When $\nu=0$ the
coordinate $\rho_S$ has no effect, since there are no losses for its clock to
drive.
\end{remark}

\begin{center}
\begin{tabular}{@{}llp{0.55\linewidth}@{}}
\toprule
Coordinate & Transition & What it controls \\
\midrule
$\beta$   & exit  & dependence among exits: idiosyncratic incidence clocks
                    versus axis-common clocks \\
$\nu$     & loss  & rate at which satisfied incidences become deficient \\
$\rho_S$  & loss  & shared versus idiosyncratic loss \\
$\rho$    & supply   & irreversible forcing versus resupply \\
\bottomrule
\end{tabular}
\end{center}

\subsection{How far the inherited axiom is preserved}

When $\rho=0$ a deficiency is permanent and the exit clock fires almost surely,
so $\rho=0$ gives eventual forcing; the dynamics relax the timing, not the
forcing.  When $\rho>0$ a deficient actor may be resupplied first, and
Section~\ref{sec:coordinates} gives the probability of permanent survival.
Setting $\nu=\rho=0$ gives the frozen slice, on which the deficiency incidence
is fixed and inherited static statements transfer.  Trajectory-level stability
is not asserted: even at $\rho=0$ an actor can be deficient and still present
until its exit clock fires.  Only $\beta$ is varied by the results below, which
determine the morphologies at both extremes (Section~\ref{sec:recovery}); its loss-side
counterpart $\rho_S$ is treated symmetrically in
Lemma~\ref{lem:marginalinvariance} and degenerates on the slice
(Remark~\ref{rem:gauge}).  Section~\ref{sec:coordinates} treats the other three
coordinates, among them the regime $\nu>0$ with $\rho=0$, which preserves
forcing while lying outside the slice.

%% file: sections/04_generation.tex
\section{The generating structure}\label{sec:generation}

\subsection{What the commonality parameters leave fixed}

\begin{lemma}[Marginal invariance]\label{lem:marginalinvariance}
Fix an axis $g$.  Each satisfied incidence $(i,g)$ with $i\in P_g(t)$ is lost at
total rate $(1-\rho_S)\nu+\rho_S\nu=\nu$, and each deficient incidence $(i,g)$
contributes total rate $(1-\beta)\lambda+\beta\lambda=\lambda$ to the exit of
$i$.  Hence $\rho_S$ and $\beta$ may alter the dependence structure, and hence joint
trajectories, without altering the marginal incidence rates.
\end{lemma}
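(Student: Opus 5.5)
The plan is to read the claim directly off Definition~\ref{def:law} by isolating, for a fixed incidence $(i,g)$ and a fixed current state, the clocks whose firing changes that incidence or removes $i$ through it. Since the process is a continuous-time Markov jump process driven by independent exponential clocks, the rate at which a given transition occurs from a given state is the sum of the rates of all clocks whose firing produces it. So the proof reduces to bookkeeping over the clocks that touch $(i,g)$.

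\emph{Loss side.} Fix $i\in P_g(t)$. Exactly two clocks can set $\varphi_{ig}$ from $1$ to $0$. The first is the shared clock of axis $g$, which exists because $P_g(t)\ni i$ is nonempty, runs at rate $\rho_S\nu$, and flips every member of $P_g(t)$, in particular $i$. The second is the idiosyncratic clock of $(i,g)$, at rate $(1-\rho_S)\nu$. Clocks attached to other axes or other incidences do not change $\varphi_{ig}$. Supply clocks move only $0\to1$, and exit clocks do not alter $\varphi$. The total hazard of the event $\{\varphi_{ig}:1\to0\}$ is therefore $\rho_S\nu+(1-\rho_S)\nu=\nu$.

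\emph{Exit side.} Fix a deficient incidence $(i,g)$ with $i\in D_g(t)$. The idiosyncratic exit clock of $(i,g)$ removes $i$ at rate $(1-\beta)\lambda$. The common clock of axis $g$ exists because $D_g(t)\ni i$, runs at rate $\beta\lambda$, and removes all of $D_g(t)$, in particular $i$. The contribution attributable to $g$ is thus $\lambda$. Summing over deficient axes recovers a total exit hazard $\lambda d_i(t)$, which is free of $\beta$.

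For the final sentence, I would note that $\rho_S$ and $\beta$ enter only through the split between single-incidence clocks and clocks acting on a whole set $P_g(t)$ or $D_g(t)$ simultaneously. Changing this split changes the joint law. The cleanest way to show that it changes is a two-actor example: take $i,j\in D_g(t)$. At $\beta=1$ they exit simultaneously with positive probability. At $\beta=0$ simultaneous exit has probability zero, since distinct independent exponential clocks almost surely do not fire together. An analogous example works for $\rho_S$ with $i,j\in P_g(t)$. The marginal per-incidence rates nevertheless stay the same, as computed above.

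The only delicate point is the existence conditions of the axis clocks. These clocks are defined only when $P_g(t)$ or $D_g(t)$ is nonempty, so I must check that the fixed incidence itself guarantees nonemptiness. It does in both cases. Everything else is immediate from additivity of competing exponential rates.
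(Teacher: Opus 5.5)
Your proposal is correct and matches the paper's proof, which is the same clock-by-clock addition of rates read off Definition~\ref{def:law}, with the exit case handled by replacing $\nu$ with $\lambda$. Your two-actor witness that the joint law actually changes goes beyond what the paper writes, since the statement only says ``may''. For the $\rho_S$ version of that witness you need $\nu>0$, because at $\nu=0$ the coordinate $\rho_S$ has no effect (Remark~\ref{rem:gauge}).
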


\begin{proof}
Immediate from Definition~\ref{def:law}: the shared clock of an axis fires at
rate $\rho_S\nu$ and affects every actor in $P_g(t)$, so it contributes
$\rho_S\nu$ to each; the idiosyncratic clock of the incidence fires at
$(1-\rho_S)\nu$.  The exit case is identical with $\lambda$ in place of $\nu$.
\end{proof}

\noindent Section~\ref{sec:coordinates} strengthens this from rates to laws, for all $\nu,\rho\ge0$.

\subsection{Two graphs, and the difference between them}

\begin{definition}[Cohesion graph]\label{def:cohesion}
For $i,j\in R(t)$ with $i\neq j$, put $\{i,j\}\in E_C(t)$ when there exists
$g\in s_i\cap s_j$ with $\varphi_{ig}(t)=\varphi_{jg}(t)=1$.  Write $G_C(t):=(R(t),E_C(t))$.
\end{definition}

\noindent
Adjacency requires a condition both parties require and both currently
satisfy, using the satisfaction state of Section~\ref{sec:dynamics} rather
than $f_g$.  The generating argument is about which adjacencies the
requirement structure permits at all, a separate and static object.

Group the axes by who requires them: let $g\sim g'$ when
$\{i:a(i,g)=1\}=\{i:a(i,g')=1\}$ and call the resulting blocks
$\mathcal{G}_1,\dots,\mathcal{G}_r$ \emph{requirement masks}.  Requirement is
constant on each block, so the signature of Section~\ref{sec:substrate}
satisfies $s_i=\bigcup_{m}\mathcal{G}_m$ over the masks $m$ that $i$ requires;
the two descriptions determine each other, and from here on $s_i\subseteq[r]$
is the mask form, $r$ the number of masks.  Only nonempty signatures are
considered, giving
\[
  \Sigma_r:=2^{[r]}\setminus\{\emptyset\}.
\]
Joins are taken in the ambient Boolean lattice $2^{[r]}$, of which $\Sigma_r$ is
the join-closed subposet of nonempty elements; disjoint signatures meet at
$\emptyset$, no meet is used below, and \emph{requirement lattice} names
$2^{[r]}$.

\begin{definition}[Ambient structural signature graph]\label{def:sigraph}
Let $H_r:=(\Sigma_r,E_\cap)$ be the simple graph with $\{s,u\}\in E_\cap$
exactly when $s\neq u$ and $s\cap u\neq\emptyset$.
\end{definition}

\noindent
$H_r$ is the subset intersection graph on $\Sigma_r$ \citep{scheinermanwest1989},
in the vocabulary standard for intersection graphs of set families
\citep{mckeemcmorris1999}.  Its vertices are the admissible signature types, not the classes realized in the current population; a vertex may be realized by no actor at all.

\begin{lemma}[Adjacency requires signature intersection]\label{lem:signatureintersection}
If $s_i\cap s_j=\emptyset$ then $\{i,j\}\notin E_C(t)$ for every $t$.
\end{lemma}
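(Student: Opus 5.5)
The argument is a direct unfolding of Definition~\ref{def:cohesion}; no dynamic information is needed, which is why the conclusion holds for every $t$. I would argue by contraposition: suppose $\{i,j\}\in E_C(t)$ for some $t$, and show that $s_i\cap s_j\neq\emptyset$.

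By Definition~\ref{def:cohesion}, adjacency at time $t$ requires $i,j\in R(t)$ with $i\neq j$, together with an axis $g\in s_i\cap s_j$ such that $\varphi_{ig}(t)=\varphi_{jg}(t)=1$. Such a $g$ is an element of $s_i\cap s_j$, so that intersection is nonempty. The satisfaction values are not needed for this step; only the existential quantifier over $s_i\cap s_j$ matters.

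The one point needing care is the passage between axis form and mask form of signatures, since after the grouping into requirement masks $s_i$ is read as a subset of $[r]$. Because requirement is constant on each block $\mathcal{G}_m$, the axis signature is a union of whole blocks. Hence the axis signatures of $i$ and $j$ share an axis if and only if their mask signatures share a mask: a shared axis $g$ lies in a unique block $\mathcal{G}_m$, and that $m$ is required by both actors. The hypothesis $s_i\cap s_j=\emptyset$ therefore means the same thing in either reading, and the argument above applies to both.

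I expect no real obstacle. The only thing to state explicitly is that the hypothesis is a property of $\mathcal{S}$ alone (through $a$), whereas $E_C(t)$ depends on $t$ through $R(t)$ and $\varphi(t)$. The independence from $t$ then comes from the fact that no value of $\varphi$ can supply a shared axis that the signatures do not contain; by the convention $\varphi_{ig}:=0$ for $g\notin s_i$, only required axes can ever be satisfied.
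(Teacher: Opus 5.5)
Your proposal is correct and is the paper's argument: the paper's proof likewise just observes that Definition~\ref{def:cohesion} requires some $g\in s_i\cap s_j$ and there is none. Your remark that axis-form and mask-form disjointness coincide, because requirement is constant on each block $\mathcal{G}_m$, is accurate and makes explicit a reading the paper leaves implicit. It is not needed for the one-line argument.
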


\begin{proof}
Definition~\ref{def:cohesion} requires some $g\in s_i\cap s_j$, and there is
none.
\end{proof}

\begin{remark}[Permission versus realization]\label{rem:permission}
An edge of $H_r$ becomes an edge of $G_C(t)$ only if both parties survive and
both satisfy an axis of a shared mask, and outside the slice of
Definition~\ref{def:frozen} the axes of a mask need not move together.  Every
structural statement below about mediation, closed-neighbourhood size, and separation
is about $H_r$, not about paths simultaneously realized in $G_C(t)$.
\end{remark}

\begin{remark}[Where non-transitivity can come from]\label{rem:nontransitivity}
Fix a single axis $g$.  Adjacency through $g$ holds for every pair in the
satisfaction support $Q_g(t)$, so restricted to the actors who require and
satisfy $g$ it is transitive.  A failure of transitivity in $G_C(t)$ can
arise only through the disjunction over $g$ in Definition~\ref{def:cohesion},
once information from several axes is combined.
\end{remark}

\subsection{Two monotone quantities}

\begin{definition}[Closed neighbourhood]\label{def:connectivecapacity}
For $s\in\Sigma_r$ let
\[
  \mathcal{C}(s):=\{\,u\in\Sigma_r\ :\ s\cap u\neq\emptyset\,\}
\]
be the closed neighbourhood of $s$ in $H_r$, that is, the signatures not
excluded from adjacency with $s$ by
Lemma~\ref{lem:signatureintersection}.  Write $c(s):=|\mathcal{C}(s)|$.
\end{definition}

\begin{lemma}[The closed neighbourhood is support-monotone]\label{lem:capacitymonotone}%
\footnote{Degree monotonicity in the intersection graph of lattice ideals
\citep{afkhami2015} is the analogue of this lemma.  We claim no novelty for it.}
If $s\subseteq s'$ then $\mathcal{C}(s)\subseteq\mathcal{C}(s')$.  Moreover
$\mathcal{C}([r])=\Sigma_r$, and $[r]$ is the unique maximizer of $c(\cdot)$.
\end{lemma}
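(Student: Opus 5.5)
The plan is to argue directly from Definition~\ref{def:connectivecapacity}, in three steps that match the three claims of the lemma.

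\emph{Monotonicity.} Take $u\in\mathcal{C}(s)$, so that $s\cap u\neq\emptyset$. Because $s\subseteq s'$, we have $s\cap u\subseteq s'\cap u$, and therefore $s'\cap u\neq\emptyset$. Hence $u\in\mathcal{C}(s')$.

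\emph{The top element.} Every $u\in\Sigma_r$ is nonempty and satisfies $u\subseteq[r]$, so $[r]\cap u=u\neq\emptyset$. This gives $\mathcal{C}([r])=\Sigma_r$ and $c([r])=|\Sigma_r|=2^r-1$. Since $\mathcal{C}(s)\subseteq\Sigma_r$ for every $s$, $[r]$ is a maximizer of $c$.

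\emph{Uniqueness.} This is the only step needing a witness, and I expect it to be the main (though mild) obstacle. Let $s\in\Sigma_r$ with $s\neq[r]$, and pick $k\in[r]\setminus s$. The singleton $u=\{k\}$ lies in $\Sigma_r$ and satisfies $s\cap u=\emptyset$, so $u\notin\mathcal{C}(s)$. Thus $\mathcal{C}(s)\subsetneq\Sigma_r$ and $c(s)<c([r])$.

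The point to handle carefully is that the witness $\{k\}$ must be an admissible vertex. This holds because $\Sigma_r$ contains every nonempty subset, and in particular every singleton. The argument is valid for all $r\ge1$. When $r=1$, $\Sigma_1=\{[1]\}$, so uniqueness is trivial.
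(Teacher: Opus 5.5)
Your proof is correct and follows the paper's argument essentially step for step: monotonicity from $s\cap u\neq\emptyset$ and $s\subseteq s'$, $\mathcal{C}([r])=\Sigma_r$ because every nonempty signature meets $[r]$, and uniqueness via a singleton $\{k\}$ with $k\in[r]\setminus s$, which lies in $\mathcal{C}([r])\setminus\mathcal{C}(s)$ and so makes the inclusion strict. Your check that the witness is an admissible vertex and your remark on $r=1$ are harmless additions the paper leaves implicit.
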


\begin{proof}
If $s\cap u\neq\emptyset$ and $s\subseteq s'$ then $s'\cap u\neq\emptyset$,
which gives the inclusion.  Every $u\in\Sigma_r$ meets $[r]$, so
$\mathcal{C}([r])=\Sigma_r$.  If $s\subsetneq[r]$, pick $m\in[r]\setminus s$;
then $\{m\}\in\mathcal{C}([r])\setminus\mathcal{C}(s)$, so the inclusion is
strict.
\end{proof}

\begin{lemma}[Closed form]\label{lem:capacitycount}
$c(s)=2^{r}-2^{\,r-|s|}$.  In particular $c(s)$ depends on
$s$ only through $|s|$, and is strictly increasing in it.
\end{lemma}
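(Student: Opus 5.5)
We count the complement.  By Definition~\ref{def:connectivecapacity}, $\mathcal{C}(s)$ consists of the nonempty subsets $u\subseteq[r]$ that meet $s$.  So $\Sigma_r\setminus\mathcal{C}(s)$ is the set of nonempty $u$ with $s\cap u=\emptyset$, that is, the nonempty subsets of $[r]\setminus s$.  This set has exactly $|s|$ fewer elements of $[r]$ to draw from than $[r]$ itself, so it contains $2^{\,r-|s|}-1$ elements.  The value $-1$ removes the empty set, which is excluded from $\Sigma_r$.  Since $|\Sigma_r|=2^r-1$, subtracting gives
\[
  c(s)=(2^r-1)-(2^{\,r-|s|}-1)=2^r-2^{\,r-|s|}.
\]

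An equivalent route counts $\mathcal{C}(s)$ directly: a subset $u$ meets $s$ exactly when $u\cap s$ is nonempty.  There are $2^{|s|}-1$ choices for $u\cap s$ and $2^{\,r-|s|}$ choices for $u\setminus s$.  Their product $(2^{|s|}-1)2^{\,r-|s|}$ is the same quantity, and it also confirms that no empty $u$ is being miscounted.

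The remaining claims follow from the closed form.  It depends on $s$ only through $k=|s|$.  The map $k\mapsto 2^r-2^{\,r-k}$ is strictly increasing on $1\le k\le r$ because $2^{\,r-k}$ is strictly decreasing in $k$.  This agrees with Lemma~\ref{lem:capacitymonotone}: at $k=r$ the formula gives $2^r-1=|\Sigma_r|$.

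There is no real obstacle.  The only point needing care is the bookkeeping of the empty set, which appears in both $2^{[r]}$ and $2^{[r]\setminus s}$ but lies in neither $\Sigma_r$ nor $\mathcal{C}(s)$.  Its contributions cancel in the subtraction, and I would state this cancellation explicitly.
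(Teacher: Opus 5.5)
Your proof is correct and follows the paper's proof: you count the signatures disjoint from $s$ as the $2^{\,r-|s|}-1$ nonempty subsets of $[r]\setminus s$ and subtract this from $|\Sigma_r|=2^r-1$. Your direct count $(2^{|s|}-1)2^{\,r-|s|}$ and the explicit monotonicity check go beyond what the paper writes, but they are only supplementary confirmations.
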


\begin{proof}
The signatures disjoint from $s$ are the nonempty subsets of $[r]\setminus s$,
of which there are $2^{\,r-|s|}-1$.  Subtracting from $|\Sigma_r|=2^{r}-1$
gives the count.
\end{proof}

The second quantity depends on the dynamics and is defined on a restricted
set of states, the slice introduced next.

\begin{definition}[Shared-state frozen slice]\label{def:frozen}
The \emph{shared-state frozen slice} consists of the states with $\nu=\rho=0$
and $\varphi_{ig}=\varphi_{jg}$ for every axis $g$ and every pair $i,j$
requiring $g$.  Since $\nu=\rho=0$, satisfaction does not move, so the common
value persists and we write $\varphi_g$ for it.  Put
\[
  \ell_m:=\bigl|\{\,g\in\mathcal{G}_m:\varphi_g=0\,\}\bigr|,
  \qquad
  L_s:=\sum_{m\in s}\ell_m .
\]
\end{definition}

\begin{proposition}[Support-monotone frozen hazard]\label{prop:supporthazard}
On the shared-state frozen slice, the lifetime of an actor with signature $s$
satisfies
\[
  T_s\sim\mathrm{Exp}(\lambda L_s),
  \qquad\text{equivalently}\qquad
  T_s\overset{d}{=}\min_{m\in s}T'_m,
\]
where, \emph{for each actor separately and with its own family}, the $T'_m$ are
independent, $T'_m\sim\mathrm{Exp}(\lambda\ell_m)$, and
$T'_m=\infty$ when $\ell_m=0$; if $L_s=0$ then $T_s=\infty$.  Consequently
$s\subseteq s'$ gives $L_s\le L_{s'}$ and $T_{s'}\le_{\mathrm{st}}T_s$.  The
marginal distribution does not depend on $\beta$.

If moreover $\beta=0$, the surviving exit clocks of Definition~\ref{def:law} are
indexed by deficient incidences $(i,g)$, and these index sets are disjoint across
actors, so the lifetimes of distinct actors are mutually independent.  For $\beta>0$ actors deficient on a common axis share the axis-common clock, and their lifetimes are positively dependent.
\end{proposition}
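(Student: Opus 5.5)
On the shared-state frozen slice nothing moves except $R(t)$, so the plan is to read the lifetime of a fixed actor $i$ with signature $s$ directly off the clocks of Definition~\ref{def:law}.

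\textbf{Step 1: only exit clocks remain.} Since $\nu=\rho=0$, every loss and supply clock has rate zero. So $\varphi$ is constant in time, and so is $d_i(t)=d_i$. Because $\varphi_{ig}=\varphi_g$ for every $g\in s_i$ and requirement is constant on masks, $d_i=\sum_{m\in s}\ell_m=L_s$.

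\textbf{Step 2: the exit rate of $i$.} Two kinds of clocks can remove $i$:
\begin{itemize}
\item the idiosyncratic clocks on its $L_s$ deficient incidences, each of rate $(1-\beta)\lambda$;
\item the axis-common clocks of the axes $g$ with $i\in D_g(t)$. There are exactly $L_s$ such axes, each of rate $\beta\lambda$, and each removes $i$.
\end{itemize}
A common clock is active whenever $D_g(t)\neq\emptyset$, which holds while $i$ survives. These $2L_s$ clocks are independent and do not depend on the rest of the population: other exits change $D_g(t)$ but never empty it while $i\in R(t)$. Hence the hazard of $i$ is the constant $\lambda L_s$, which is Lemma~\ref{lem:marginalinvariance} summed over $g$, and $T_s\sim\mathrm{Exp}(\lambda L_s)$.

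I would make this rigorous by a thinning/superposition argument. The first firing among a fixed finite family of independent exponential clocks is exponential with the summed rate. The process before $i$'s exit does not influence these clocks, by memorylessness. If $L_s=0$ no clock touches $i$, so $T_s=\infty$. Independence of $\beta$ is then immediate, because the rate is $\lambda L_s$ whatever $\beta$ is.

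\textbf{Step 3: the min representation and monotonicity.} The minimum of independent $\mathrm{Exp}(\lambda\ell_m)$ variables, $m\in s$, is $\mathrm{Exp}(\lambda\sum_{m\in s}\ell_m)$, with the convention $\mathrm{Exp}(0)=\infty$. This gives $T_s\overset{d}{=}\min_{m\in s}T'_m$. If $s\subseteq s'$, then $L_s\le L_{s'}$ because every $\ell_m\ge 0$. Stochastic order follows from comparing the survival functions $e^{-\lambda L_{s'}t}\le e^{-\lambda L_s t}$. Alternatively, couple through the min representation: a minimum over the larger index set is pointwise smaller.

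\textbf{Step 4: dependence.} At $\beta=0$ the common clocks have rate zero. Each actor is then killed only by clocks indexed by its own incidences $(i,g)$, and these index sets are disjoint across actors. Independent clocks on disjoint index sets give independent lifetimes.

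For $\beta>0$, take two actors deficient on a common axis $g$. Write each lifetime as the minimum of the shared clock $\tau_g$ and independent private clocks. I would show positive dependence by computing
\[
P(T_i>t,\,T_j>u)=e^{-\beta\lambda\max(t,u)}\cdot(\text{product of private factors}),
\]
which is at least the product of the marginals, since $\max(t,u)\le t+u$. Association of nondecreasing functions of independent variables gives the same conclusion more generally.

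\textbf{Main obstacle.} I expect the main obstacle to be justifying Step 2 as an exact statement about a process whose clock set changes with $R(t)$. To handle it, I would realize all clocks as fixed independent Poisson processes on the full index set (incidences and axes). Each actor's exit is then the first arrival among a fixed sub-collection, and the activation condition $D_g(t)\neq\emptyset$ is automatically true whenever it matters for $i$.
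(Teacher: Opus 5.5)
Your proposal is correct and follows essentially the same route as the paper. The shared steps are: total exit rate $\lambda L_s$ from marginal invariance summed over the $L_s$ fixed deficient incidences, the minimum representation obtained by splitting the sum by mask, and independence at $\beta=0$ from disjoint clock index sets. For $\beta>0$ both arguments use the shared-shock form $T_i=\min_g\min\{E_{ig},F_g\}$ with association, and both compare the factor $e^{-\beta\lambda\max(t,u)}$ against $e^{-\beta\lambda(t+u)}$. One small correction: your explicit joint-survival formula needs one such factor for each deficient axis the two actors share, not just one, but the association argument you also give already covers that case.
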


\begin{proof}
See Appendix~\ref{app:proofs-generation}.
\end{proof}

The closed neighbourhood is set-theoretic and does not depend on the
parameters; the load needs Definition~\ref{def:frozen}.

\subsection{Forcing at \texorpdfstring{$r=2$}{r=2}}

\begin{lemma}[Unique mediator at $r=2$]\label{lem:uniquemediator}
Let $r=2$ and write $\Sigma_2=\{A,AB,B\}$ with $A=\{1\}$, $B=\{2\}$ and
$AB=\{1,2\}$.  Then $AB$ is the only signature meeting both $A$ and $B$; every
$A$--$B$ path in $H_2$ contains $AB$, so $\{AB\}$ is an $A$--$B$ separator in
$H_2$; and $AB=[2]$ is the top element of the signature lattice.
\end{lemma}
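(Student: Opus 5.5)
The plan is to prove the lemma by direct enumeration in $H_2$, since $\Sigma_2$ has only three elements. The argument has four short steps.

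\emph{Step 1: the edges of $H_2$.} By Definition~\ref{def:sigraph}, $\{s,u\}\in E_\cap$ iff $s\neq u$ and $s\cap u\neq\emptyset$. I will check the three unordered pairs:
\[
  A\cap B=\emptyset,\qquad A\cap AB=\{1\}\neq\emptyset,\qquad B\cap AB=\{2\}\neq\emptyset .
\]
Hence $H_2$ is the path $A - AB - B$, and $A$ and $B$ are not adjacent.

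\emph{Step 2: $AB$ is the only common neighbour.} This follows directly from Step 1. A signature other than $A$ and $B$ that meets both must be $AB$, since it is the only remaining element of $\Sigma_2$.

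\emph{Step 3: the separator claim.} Let $A=v_0,v_1,\dots,v_k=B$ be any $A$--$B$ path. Since $\{A,B\}\notin E_\cap$, we have $k\ge 2$, so $v_1$ is defined and differs from $B$. Because paths repeat no vertices, $v_1\neq A$. Therefore $v_1=AB$. Deleting $AB$ leaves the two isolated vertices $A$ and $B$, so $\{AB\}$ is an $A$--$B$ separator.

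\emph{Step 4: $AB$ is the top element.} $AB=\{1,2\}=[2]$ is the top of $2^{[2]}$, and hence of the subposet $\Sigma_2$. Alternatively, I will note that this agrees with Lemma~\ref{lem:capacitymonotone}, which makes $[r]$ the unique maximizer of $c$.

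There is no real obstacle. The only point needing care is Step 3: I must use that $A\notin\mathcal{C}(B)$, i.e.\ Lemma~\ref{lem:signatureintersection} at the structural level, to exclude a path of length one.
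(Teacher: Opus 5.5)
Your proof is correct and follows the paper's argument. The paper likewise gets the separator from the absence of the edge $\{A,B\}$ together with $AB$ being the only other vertex. For uniqueness of the common neighbour, the paper argues that a signature meeting both $A$ and $B$ must contain $1$ and $2$; this also disposes of $A$ and $B$ themselves, a case your Step~2 leaves implicit.

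One small correction: in Step~3 the non-adjacency of $A$ and $B$ in $H_2$ is just Definition~\ref{def:sigraph} applied to $A\cap B=\emptyset$. Lemma~\ref{lem:signatureintersection} concerns the realized graph $G_C(t)$, not $H_2$, and is not needed there.
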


\begin{proof}
A signature meeting both $A$ and $B$ must contain $1$ and $2$, hence equals
$[2]=AB$.  Since $A\cap B=\emptyset$, the edge $\{A,B\}$ is absent from $H_2$,
and $AB$ is the only other vertex, so every $A$--$B$ path runs through it.  That
$AB=[2]$ is the top element is immediate.
\end{proof}

\noindent
Write $\ell_A:=\ell_1$ and $\ell_B:=\ell_2$.

\begin{corollary}[Realized consequence]\label{cor:realizedmediator}
Any realized cohesion path between an actor with signature $A$ and an actor
with signature $B$, if one exists, passes through an actor with signature $AB$.
\end{corollary}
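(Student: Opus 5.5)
The plan is to project a realized cohesion path onto the signature graph and then apply Lemma~\ref{lem:uniquemediator}. Fix a time $t$ and suppose $i=v_0,v_1,\dots,v_k=j$ is a path in $G_C(t)$ with $s_i=A$ and $s_j=B$. By Definition~\ref{def:cohesion} each $v_\ell$ lies in $R(t)$, so each carries a signature $s_{v_\ell}\in\Sigma_2=\{A,AB,B\}$. Signatures are nonempty by the convention preceding Definition~\ref{def:sigraph}.

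The first step is to show that consecutive signatures in the path either coincide or are adjacent in $H_2$. Since $\{v_{\ell-1},v_\ell\}\in E_C(t)$, Lemma~\ref{lem:signatureintersection} gives $s_{v_{\ell-1}}\cap s_{v_\ell}\neq\emptyset$. So either the two signatures are equal, or they form an edge of $H_2$ by Definition~\ref{def:sigraph}.

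The second step deletes repeated consecutive signatures from the sequence $s_{v_0},\dots,s_{v_k}$. What remains is a walk in $H_2$ from $A$ to $B$, and shortening it if necessary gives an $A$--$B$ path in $H_2$. By Lemma~\ref{lem:uniquemediator}, every such path contains $AB$. Because the walk only uses signatures actually occurring among the $v_\ell$, some actor $v_\ell$ on the realized path has signature $AB$.

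A more direct alternative avoids the walk reduction entirely. Let $\ell$ be the first index with $s_{v_\ell}\neq A$; it exists because $s_{v_k}=B$. If $s_{v_\ell}=B$, then $s_{v_{\ell-1}}=A$ and $A\cap B=\emptyset$, which contradicts Lemma~\ref{lem:signatureintersection}. Hence $s_{v_\ell}=AB$, which already proves the claim.

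There is no real obstacle here. The only care needed is with Remark~\ref{rem:permission}: separation is a statement about $H_2$, and the transfer to $G_C(t)$ runs only in the direction used above, that realized edges imply signature intersection. The phrase "if one exists" means nothing is claimed about existence. The argument never uses the dynamics beyond the fact that survivors carry fixed signatures.
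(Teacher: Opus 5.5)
Your proof is correct and follows the paper's own route: project the realized path into $H_2$ via Lemma~\ref{lem:signatureintersection}, then apply Lemma~\ref{lem:uniquemediator}. You add one step the paper's phrase ``projects to a path'' leaves implicit, namely collapsing consecutive actors with the same signature so that the projection becomes a walk and then a path; your first-exit-from-$A$ alternative is also valid, and it needs only $A\cap B=\emptyset$ and $\Sigma_2=\{A,AB,B\}$.
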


\begin{proof}
By Lemma~\ref{lem:signatureintersection} a realized path projects to a path in
$H_2$ between $A$ and $B$, and Lemma~\ref{lem:uniquemediator} applies.
\end{proof}

\begin{corollary}[The three roles align at $r=2$]\label{cor:forcedcoincidence}
Let $r=2$.  The unique structural mediator between the pure types is $AB=[2]$,
and $[2]$ is simultaneously the unique maximizer of $c(\cdot)$
(Lemma~\ref{lem:capacitymonotone}) and a maximizer of $L_\cdot$
(Proposition~\ref{prop:supporthazard}).  For $r\ge3$ the forcing can fail: the
chain $\{1\},\{1,2\},\dots,\{r-1,r\},\{r\}$ connects the pure types without
using $[r]$, and if $\ell_1,\ell_r>0$ while $\ell_2=\dots=\ell_{r-1}=0$ then
every internal signature $s$ on that chain satisfies $L_s<L_{[r]}$.  The witness
is one configuration; it shows that the coincidence is no longer forced at
$r\ge3$, not that it never occurs there.
\end{corollary}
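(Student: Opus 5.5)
The corollary has two halves. The positive half at $r=2$ is assembled from earlier results. The negative half at $r\ge3$ needs one explicit witness plus a short load computation. I would treat them in that order.

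For $r=2$, Lemma~\ref{lem:uniquemediator} already shows that $AB=[2]$ is the only signature meeting both $A$ and $B$, and that it separates them in $H_2$. So $[2]$ is the unique structural mediator. Lemma~\ref{lem:capacitymonotone} with $r=2$ gives that $[2]$ is the unique maximizer of $c(\cdot)$; concretely, Lemma~\ref{lem:capacitycount} yields $c(AB)=3$ and $c(A)=c(B)=2$. For the load, $\ell_m\ge0$ on the slice, so $L_\cdot$ is monotone under inclusion. Every $s\in\Sigma_2$ is contained in $[2]$, hence $L_s\le L_{[2]}$, and $[2]$ is a maximizer; this monotonicity is the inclusion clause of Proposition~\ref{prop:supporthazard}. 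If $\ell_A,\ell_B>0$, then $L_{AB}=\ell_A+\ell_B$ strictly exceeds both $L_A=\ell_A$ and $L_B=\ell_B$. This gives the uniqueness advertised in the introduction, although the corollary itself only asks for ``a maximizer''. All three roles therefore fall on $[2]$.

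For $r\ge3$, I would verify the chain $\{1\},\{1,2\},\{2,3\},\dots,\{r-1,r\},\{r\}$ directly.
\begin{itemize}
\item \emph{It is a path in $H_r$.} Consecutive members share an element: $\{1\}\cap\{1,2\}=\{1\}$, $\{k-1,k\}\cap\{k,k+1\}=\{k\}$, and $\{r-1,r\}\cap\{r\}=\{r\}$. Consecutive members are also distinct, so each consecutive pair is an edge of $E_\cap$.
\item \emph{It avoids $[r]$.} Every internal vertex has size $2<r$ because $r\ge3$. The endpoints have size $1$.
\end{itemize}

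Next comes the load comparison. Set $\ell_1,\ell_r>0$ and $\ell_2=\dots=\ell_{r-1}=0$. Then $L_{[r]}=\ell_1+\ell_r$. The internal vertices split into three kinds:
\begin{itemize}
\item $\{1,2\}$ has load $\ell_1<\ell_1+\ell_r$;
\item $\{r-1,r\}$ has load $\ell_r<\ell_1+\ell_r$;
\item any middle $\{k,k+1\}$ with $2\le k\le r-2$ has load $0$.
\end{itemize}
Here $r\ge3$ matters: it guarantees $\{1,2\}\ne\{r-1,r\}$ except at $r=3$, where the chain is $\{1\},\{1,2\},\{2,3\},\{3\}$. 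In that case $\{1,2\}$ has load $\ell_1$ and $\{2,3\}$ has load $\ell_3$, both still strictly below $\ell_1+\ell_3$.

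Such a configuration exists on the slice: choose each mask's axes with $\varphi_g=0$ on exactly the masks $1$ and $r$, which is permitted by Definition~\ref{def:frozen}. Thus a mediating path exists whose internal signatures are neither the capacity maximizer (unique at $[r]$ by Lemma~\ref{lem:capacitymonotone}) nor a load maximizer. The closing sentence of the corollary is interpretive and needs no proof.

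The main obstacle is making ``forcing can fail'' precise. The pure types now have a mediator other than $[r]$, and $[r]$ is no longer a separator, since the displayed path avoids it. I would state the failure as the existence of this path together with the strict load inequalities, rather than claiming anything about all mediators.
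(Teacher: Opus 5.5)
Your proof is correct and follows the paper's route: the paper gives no separate proof, but obtains the $r=2$ half from Lemma~\ref{lem:uniquemediator}, Lemma~\ref{lem:capacitymonotone} and the inclusion monotonicity of $L_\cdot$ in Proposition~\ref{prop:supporthazard}, and checks the chain witness with the same load computation you give. One small slip: $\{1,2\}\neq\{r-1,r\}$ holds for every $r\ge3$, including $r=3$; the two coincide only at $r=2$, where the common set is $[r]$ itself. So drop the ``except at $r=3$'' clause, though your explicit $r=3$ check is right.
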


\noindent
Thus at $r=2$ the three roles, fixed independently of one another, align on a
single class; this is stronger than the separator condition
Section~\ref{sec:boundaries} calls forced coincidence.

At $r=2$, $L_{AB}=L_A+L_B$ by Definition~\ref{def:frozen}, so the two
maximality claims of Corollary~\ref{cor:forcedcoincidence} differ in strength:
the maximizer of $c(\cdot)$ is unique outright, the load maximizer need not be.  If
$\ell_A=0$ then $L_{AB}=L_B$ and the pure type $B$ ties with the union class.
Under the positive endpoint-load condition $\ell_A,\ell_B>0$, the
condition that selective attrition assumes in Section~\ref{sec:recovery}, the
sum is strict on both sides and $[2]$ is the \emph{unique} maximizer of $L_\cdot$ as well.
Statements below asserting uniqueness of the load maximizer are made under that
condition.

Both monotonicities hold at every $r$; only at $r=2$ do a set-theoretic
maximizer of $c(\cdot)$, a load maximizer depending on the frozen dynamics, and
a mediator forced by the lattice coincide on one class.
Section~\ref{sec:recovery} uses the two properties separately, the mediating
side in Theorem~\ref{thm:thinbridge} and the load ordering in
Proposition~\ref{prop:enrichment}, and the alignment adds that the two
derivations both concern the same class.  At $r\ge3$ the chain of
Corollary~\ref{cor:forcedcoincidence} distributes mediation across proper mixed
signatures, and what is lost is the forced alignment, not either monotonicity.

%% file: sections/05_recovery.tex
\section{Generated outcomes and distinct mechanisms}\label{sec:recovery}

\subsection{Recovered mechanisms}

We distinguish eight mechanisms recoverable inside the present model.

\begin{center}
\begin{tabular}{@{}p{0.20\linewidth}p{0.27\linewidth}p{0.33\linewidth}@{}}
\toprule
Mechanism & Regime or condition & What it asserts \\
\midrule
\multicolumn{3}{@{}l}{\emph{Latent mechanisms}}\\
Brokerage & structural; $r=2$, both pure types present
  & the mixed class is the only mediator \\
Broker burden & frozen slice; $L_{AB}=L_A+L_B$
  & it carries the summed load \\
Overlapping membership & structural
  & closed-neighbourhood size grows with the signature \\
Selective attrition & frozen slice, $\ell_A,\ell_B>0$
  & it is stochastically shortest lived \\
Bridge depletion & frozen slice, $\beta=0$; bridge thin in persistence
  & it is exhausted while both pure classes survive \\
Fragmentation & thin-bridge separation
  & survivors occupy cohesive components \\
Correlated collapse & $\beta>0$; an axis with two deficient actors
  & an axis-common removal channel is active \\
\midrule
\multicolumn{3}{@{}l}{\emph{Projection mechanism}}\\
Apparent cohesion & concerns the observation map
  & a projection can present survivors as cohesive \\
\bottomrule
\end{tabular}
\end{center}

\noindent
The entries are predicates about the model: selective attrition orders
lifetimes, correlated collapse says a multi-actor channel is active, and the
first three are relational premises the later results use.

\subsection{Where each mechanism comes from}

Each entry arises from the substrate of Section~\ref{sec:substrate} under the
dynamics of Section~\ref{sec:dynamics}, at the regime named in the table;
Section~\ref{sec:coordinates} states which entries survive when that regime is
left, and which couplings the slice suppresses.  Brokerage follows from
Lemma~\ref{lem:uniquemediator}, in realized form from
Corollary~\ref{cor:realizedmediator}; overlapping membership from
Lemma~\ref{lem:capacitymonotone}; broker burden and selective attrition from
Proposition~\ref{prop:supporthazard} on the slice of
Definition~\ref{def:frozen}; correlated collapse from the axis-common exit
channel of Definition~\ref{def:law}, active when $\beta>0$ and some axis has at
least two deficient actors.  Bridge depletion and fragmentation need more than a
regime label.

\subsection{Bridge depletion and fragmentation}

\begin{definition}[Persistence scale]\label{def:persistence}
On the shared-state frozen slice let $n_s$ be the number of actors with
signature $s$.  The \emph{persistence scale} of the class is
\[
  W_s:=\frac{\log n_s}{\lambda L_s},
  \qquad W_s:=\infty \ \text{ when } L_s=0 .
\]
\end{definition}

\begin{remark}[What the scale leaves to the load]\label{rem:leadingorder}
At $\beta=0$ let $M_s$ be the time at which the last actor of signature $s$
exits.  By Proposition~\ref{prop:supporthazard} the $n_s$ lifetimes are independent
$\mathrm{Exp}(\lambda L_s)$, so $\lambda L_sM_s-\log n_s$ converges in
distribution to a standard Gumbel law as $n_s\to\infty$, and
$M_s=W_s+O_P\bigl((\lambda L_s)^{-1}\bigr)$.  In the leading term, load and
class size trade off logarithmically: a class with twice the load needs the
square of the class size to reach the same $W$.  Where $\log n_s$ stays bounded the leading
terms do not separate and the correction decides.  At $n_s=1$ for every class the
leading terms vanish and $M_s=T_s$, leaving the load ordering of
Proposition~\ref{prop:supporthazard}.
\end{remark}

\begin{theorem}[Thin-bridge separation]\label{thm:thinbridge}
Consider a two-mask shared-state frozen slice in which only the signatures
$A$, $AB$ and $B$ are present.  Assume: (i) $\beta=0$; (ii) $\ell_A,\ell_B\ge1$,
and these remain fixed as the population grows; (iii) each requirement mask also retains
at least one nondeficient axis, so the surviving pure classes are internally
cohesive; and (iv) the mixed class is thin in the persistence scale of
Definition~\ref{def:persistence}, that is
$\min\{W_A,W_B\}-W_{AB}\to\infty$ as the population grows.  Then
the probability that every actor in the mixed class has exited while at least
one pure-$A$ and one pure-$B$ actor remain tends to one.  At that time the
surviving $A$ and $B$ actors lie in distinct components of $G_C(t)$, each
internally cohesive.
\end{theorem}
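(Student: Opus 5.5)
The plan is to reduce the theorem to a comparison of maxima of independent exponentials, using Proposition~\ref{prop:supporthazard} at $\beta=0$. On the slice with $\beta=0$, all $n_A+n_{AB}+n_B$ lifetimes are mutually independent, and
\[
  T\sim\mathrm{Exp}(\lambda L_s),\qquad s\in\{A,AB,B\},
\]
with $L_A=\ell_A$, $L_B=\ell_B$ and $L_{AB}=\ell_A+\ell_B$, all positive by (ii). So the class extinction times $M_A,M_{AB},M_B$ of Remark~\ref{rem:leadingorder} are independent maxima of i.i.d.\ exponentials, and the event to control is
\[
  \{M_{AB}<M_A\}\cap\{M_{AB}<M_B\}.
\]
By a union bound it suffices to show $\Pr(M_{AB}\ge M_A)\to0$, with the $B$ case symmetric.

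First I will use the explicit laws. We have $\Pr(M_s\le t)=(1-e^{-\lambda L_s t})^{n_s}$. Set $t_n:=W_{AB}+\tfrac12(W_A-W_{AB})$. Then
\[
  \Pr(M_{AB}\ge M_A)\le \Pr(M_{AB}>t_n)+\Pr(M_A\le t_n).
\]

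For the first term, $\Pr(M_{AB}>t_n)\le n_{AB}e^{-\lambda L_{AB}t_n}=e^{-\lambda L_{AB}(t_n-W_{AB})}$. This tends to $0$ because $t_n-W_{AB}\to\infty$ by (iv) and $L_{AB}\ge2$ is bounded below.

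For the second term, $\Pr(M_A\le t_n)\le\exp(-n_Ae^{-\lambda\ell_A t_n})=\exp\bigl(-e^{\lambda\ell_A(W_A-t_n)}\bigr)$. This tends to $0$ because $W_A-t_n\to\infty$ and $\ell_A$ is fixed and at least $1$.

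Here I must also handle the degenerate cases.
\begin{itemize}
  \item If $n_{AB}=1$, then $W_{AB}=0$, and the bound still works.
  \item If $W_A=\infty$, which would require $L_A=0$, this is excluded by (ii).
  \item If $n_A=1$, then $W_A=0$, and (iv) would force $W_{AB}\to-\infty$. That is impossible since $n_{AB}\ge1$, so (iv) implicitly forces $n_A,n_B\to\infty$. I will note this.
\end{itemize}

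The main obstacle is that (iv) is stated only through the leading-order scale, while the remark warns that the Gumbel corrections matter. The step I expect to be delicate is making sure the split point $t_n$ absorbs both fluctuations uniformly, without assuming $n_{AB}\to\infty$. The two tail bounds above avoid Gumbel limits entirely and need only $\ell_A,\ell_B$ fixed, which is why (ii) insists on it. If the midpoint split fails for unequal loads, I would instead take $t_n=W_{AB}+\kappa_n$ with $\kappa_n=\tfrac12\min\{W_A,W_B\}-\tfrac12 W_{AB}$.

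Finally, the structural conclusion. At time $\tau:=M_{AB}$, on the good event, the survivors have signatures $A$ or $B$ only. Since $A\cap B=\emptyset$, Lemma~\ref{lem:signatureintersection} gives no $A$--$B$ edges in $G_C(\tau)$, so the surviving $A$ and $B$ actors lie in distinct components (equivalently, by Corollary~\ref{cor:realizedmediator}). For internal cohesion, by (iii) mask $1$ has an axis $g$ with $\varphi_g=1$. On the shared-state slice every actor requiring $g$ satisfies it, so any two surviving $A$ actors are adjacent through $g$, and the $A$ survivors form a clique. The same holds for $B$. Because $\nu=\rho=0$, $\varphi$ is frozen, so this holds at $\tau$.
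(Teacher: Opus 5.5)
Your proof is correct and is essentially the paper's: the paper applies Lemma~\ref{lem:persistmax} with $S=\{A,AB,B\}$ and $\mathcal A=\{A,B\}$. That lemma's proof is exactly your union bound on the mixed class together with the $1-u\le e^{-u}$ estimate on the pure classes at a midpoint between persistence scales; the paper then gets separation from Lemma~\ref{lem:signatureintersection} (equivalently Corollary~\ref{cor:realizedmediator}) and cohesion from (iii), as you do. Your separate midpoint for each pure class changes nothing, and your fallback for unequal loads is unnecessary, since the midpoint split already works as written.
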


\begin{proof}
See Appendix~\ref{app:proofs-recovery}.
\end{proof}

\begin{remark}[What thinness measures]\label{rem:thinness}
Since $L_{AB}=\ell_A+\ell_B$, hypothesis (iv) reads
\[
  n_{AB}\ \ll\ \min\bigl\{\,n_A^{(\ell_A+\ell_B)/\ell_A},\ \,
                        n_B^{(\ell_A+\ell_B)/\ell_B}\,\bigr\},
\]
where $x\ll y$ abbreviates $x/y\to0$ along the sequence, which is what the
divergence in (iv) delivers.  Both exponents exceed one, so a bridge may grow
polynomially in the pure classes and still be exhausted first.  Hypothesis (iv)
also supplies the divergence of the pure classes, since $W_{AB}\ge0$ and
$\ell_A,\ell_B$ are fixed by (ii), so $W_A,W_B\to\infty$ forces
$n_A,n_B\to\infty$.
\end{remark}

\subsection{Selective and correlated exit}

Selective attrition and correlated collapse are derived on the same slice
from the same loads; a regime in which one holds and the other fails shows
that they remain distinct.

\begin{proposition}[Selective attrition without correlated exit]\label{prop:nonidentity}
On the shared-state frozen slice of Definition~\ref{def:frozen} with $\beta=0$,
$\ell_A,\ell_B\ge1$, and all three signatures of $\Sigma_2$ present in the
population, selective attrition holds and correlated collapse fails.  The two
predicates are therefore distinct.
\end{proposition}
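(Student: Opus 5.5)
The proof checks the two predicates of the mechanism table separately on the stated slice and then observes that their truth values differ, so neither can be the other.

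\emph{Selective attrition.}  By Proposition~\ref{prop:supporthazard}, on the shared-state frozen slice an actor of signature $s$ has lifetime $T_s\sim\mathrm{Exp}(\lambda L_s)$, and this law does not depend on $\beta$.  At $r=2$ Definition~\ref{def:frozen} gives $L_A=\ell_A$, $L_B=\ell_B$ and $L_{AB}=\ell_A+\ell_B$.  Under $\ell_A,\ell_B\ge1$ the inequalities $L_{AB}>L_A$ and $L_{AB}>L_B$ are strict, which is the positive endpoint-load condition discussed after Corollary~\ref{cor:forcedcoincidence}.  Exponential laws are stochastically ordered by their rates, and the order is strict when the rates differ.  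Hence $T_{AB}<_{\mathrm{st}}T_A$ and $T_{AB}<_{\mathrm{st}}T_B$, so the mixed class is stochastically shortest lived.  Because all three signatures are present, the comparison is between realized classes and not vacuous.  This is exactly the predicate in the table.

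\emph{Correlated collapse fails.}  The predicate asks that an axis-common removal channel be active.  By Definition~\ref{def:law}, each common exit clock fires at rate $\beta\lambda$, which is $0$ at $\beta=0$.  The only live exit clocks are therefore the idiosyncratic ones, indexed by deficient incidences $(i,g)$.  Each such clock removes a single actor, so no transition removes two actors at once.  By the second paragraph of Proposition~\ref{prop:supporthazard}, these index sets are disjoint across actors and the lifetimes are mutually independent.  The channel is thus inactive, even though the slice does contain axes with two or more deficient actors: $\ell_A\ge1$ gives such an axis shared by all $A$ and $AB$ actors, and $\ell_B\ge1$ gives one for $B$ and $AB$ actors.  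This shows the failure is forced by $\beta=0$ and not by a lack of candidate axes.  Non-simultaneity holds almost surely, since independent exponential clocks with continuous laws do not coincide.

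\emph{Conclusion.}  Since the two predicates take different truth values on one admissible state, they are not identical.

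The main obstacle is definitional rather than technical: correlated collapse must be read as a property of the law, namely an active multi-actor clock or dependence among lifetimes, and not of a sample path.  On that reading the independence clause of Proposition~\ref{prop:supporthazard} settles it.  If one instead wants the stronger statement that exits are never simultaneous, the additional step is the almost-sure distinctness of independent exponential times.
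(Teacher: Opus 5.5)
Your proof is correct and follows the paper's argument. Like the paper, you read off $T_{AB}\sim\mathrm{Exp}(\lambda(\ell_A+\ell_B))$, $T_A\sim\mathrm{Exp}(\lambda\ell_A)$ and $T_B\sim\mathrm{Exp}(\lambda\ell_B)$ from Proposition~\ref{prop:supporthazard}, get the strict stochastic ordering from $\ell_A,\ell_B\ge1$, and conclude that correlated collapse fails because the axis-common exit intensity $\beta\lambda$ vanishes at $\beta=0$. You also note that an axis with $|D_g|\ge2$ exists, so the failure comes from $\beta=0$ rather than vacuity, and that exits are almost surely non-simultaneous; both are sound supplements that the paper leaves implicit.
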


\begin{proof}
See Appendix~\ref{app:proofs-recovery}.
\end{proof}

\begin{corollary}[Finite-horizon form]\label{cor:finitehorizon}
Under the hypotheses of Proposition~\ref{prop:nonidentity}, for every $t>0$ an
actor in the mixed class has strictly larger exit probability by time $t$
than an actor in either pure class:
$1-e^{-\lambda(\ell_A+\ell_B)t}>1-e^{-\lambda\ell_A t}$ and likewise for
$\ell_B$.
\end{corollary}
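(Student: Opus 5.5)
The statement is Corollary~\ref{cor:finitehorizon}.  I will read it off the marginal law in Proposition~\ref{prop:supporthazard} and then compare two exponential distribution functions.

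First, place the mixed class and each pure class on the shared-state frozen slice.  Take the hypotheses of Proposition~\ref{prop:nonidentity}: $\beta=0$, $\ell_A,\ell_B\ge1$, and all three signatures of $\Sigma_2$ present.  Proposition~\ref{prop:supporthazard} then gives $T_s\sim\mathrm{Exp}(\lambda L_s)$ for each actor of signature $s$.  With $L_A=\ell_A$, $L_B=\ell_B$ and $L_{AB}=\ell_A+\ell_B$ from Definition~\ref{def:frozen}, the exit probability by time $t$ is $\Pr(T_s\le t)=1-e^{-\lambda L_s t}$.  All three loads are finite and positive, so none of the lifetimes is the degenerate $T_s=\infty$ case.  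The marginal law does not depend on $\beta$, but fixing $\beta=0$ costs nothing.

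Second, compare the functions.  For fixed $t>0$ and $\lambda>0$, the map $x\mapsto 1-e^{-\lambda x t}$ is strictly increasing in $x\ge0$.  We have
\[
  \ell_A+\ell_B>\ell_A \qquad\text{because } \ell_B\ge1>0,
\]
so
\[
  1-e^{-\lambda(\ell_A+\ell_B)t}>1-e^{-\lambda\ell_A t}.
\]
Exchanging the roles of $A$ and $B$, using $\ell_A\ge1$, gives the inequality against $\ell_B$.

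The only point needing care is strictness, and it is where the positive endpoint-load condition enters.  If $\ell_B=0$, the mixed class and the pure $A$ class would tie at every $t$, as the remark after Corollary~\ref{cor:forcedcoincidence} notes.  The condition $t>0$ is also needed, since both probabilities vanish at $t=0$.  I expect no real obstacle beyond recording these two strictness conditions.
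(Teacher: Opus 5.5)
Your proposal is correct and follows the paper's route. The paper treats the corollary as immediate from Proposition~\ref{prop:supporthazard}, exactly as you do: read off $T_s\sim\mathrm{Exp}(\lambda L_s)$ with $L_{AB}=\ell_A+\ell_B$, then use the strict monotonicity of $x\mapsto 1-e^{-\lambda x t}$ for $t>0$, with $\ell_A,\ell_B\ge1$ supplying strictness as you note.
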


\begin{proof}
Immediate from Proposition~\ref{prop:supporthazard}.
\end{proof}

\begin{proposition}[Compositional enrichment]\label{prop:enrichment}
Assume the hypotheses of Proposition~\ref{prop:nonidentity}.  Fix the reference
time, write $\Sigma$ for the signature of a uniformly drawn actor from the
population at risk, and let
\[
  \pi_s:=\Pr\bigl(\Sigma=s\mid\text{at risk at the reference time}\bigr),
  \qquad s\in\{A,AB,B\},
\]
be the signature composition of the population at risk \emph{at that time}, with
$0<\pi_{AB}<1$; write $p_s(t):=\Pr(T_s\le t)$.  Then for every $t>0$
\begin{multline*}
  E_{AB}(t)\;:=\;\frac{\Pr\bigl(\Sigma=AB\mid T\le t\bigr)}{\pi_{AB}}\\
  \;=\;\frac{p_{AB}(t)}
            {\pi_A p_A(t)+\pi_{AB}p_{AB}(t)+\pi_B p_B(t)}\;>\;1 .
\end{multline*}
The mixed class is over-represented among the exits by time $t$ relative to its
share of the population at risk.
\end{proposition}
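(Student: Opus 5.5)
The plan has three steps: identify the exit law of a uniformly drawn actor, reduce the enrichment ratio to a convex-combination inequality, and then establish the strict ordering of the exit probabilities.

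\emph{Step 1: the mixture law.} On the slice with $\beta=0$, Proposition~\ref{prop:supporthazard} shows that an actor of signature $s$ has lifetime $T_s\sim\mathrm{Exp}(\lambda L_s)$, measured from the reference time. This follows from memorylessness, since the slice is frozen and the loads $\ell_m$ do not move. The law does not depend on which other actors are present. Let $T$ be the lifetime of the uniformly drawn actor at risk. Conditioning on $\Sigma$ gives
\[
\Pr(T\le t)=\sum_{s}\pi_s p_s(t),
\qquad
\Pr(\Sigma=AB,\,T\le t)=\pi_{AB}p_{AB}(t).
\]
Bayes' rule then yields the displayed identity for $E_{AB}(t)$. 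The denominator is positive because $p_{AB}(t)>0$ for $t>0$ (as $L_{AB}\ge2$) and $\pi_{AB}>0$.

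\emph{Step 2: reduction.} The inequality $E_{AB}(t)>1$ is equivalent to
\[
p_{AB}(t)>\pi_Ap_A(t)+\pi_{AB}p_{AB}(t)+\pi_Bp_B(t).
\]
Using $\pi_A+\pi_{AB}+\pi_B=1$, this rearranges to
\[
\pi_A\bigl(p_{AB}-p_A\bigr)+\pi_B\bigl(p_{AB}-p_B\bigr)>0.
\]

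\emph{Step 3: strictness.} Corollary~\ref{cor:finitehorizon} gives $p_{AB}(t)>p_A(t)$ and $p_{AB}(t)>p_B(t)$ for every $t>0$, because $L_{AB}=\ell_A+\ell_B$ strictly exceeds both $\ell_A$ and $\ell_B$ when $\ell_A,\ell_B\ge1$. Both bracketed terms are therefore strictly positive. The weights $\pi_A,\pi_B$ are nonnegative, and $\pi_A+\pi_B=1-\pi_{AB}>0$ by the hypothesis $\pi_{AB}<1$. So at least one weight is positive, and the sum is strictly positive.

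The only delicate point is Step 1. The uniform draw is from the population at risk at the reference time, which may be a survivor set at a later time, not the initial population. I would justify that each conditional lifetime is still exponential with rate $\lambda L_s$. At $\beta=0$ the lifetimes are independent (Proposition~\ref{prop:supporthazard}), and each is memoryless, so conditioning on survival to the reference time and on the realized composition leaves the residual law unchanged. Once this is settled, the rest is the algebra above.
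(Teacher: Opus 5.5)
Your proposal is correct and follows the paper's proof essentially step for step: Bayes' rule for the identity, the rearrangement to $\pi_A(p_{AB}-p_A)+\pi_B(p_{AB}-p_B)>0$ using $\pi_A+\pi_{AB}+\pi_B=1$, and strictness from Corollary~\ref{cor:finitehorizon} together with $\pi_A+\pi_B=1-\pi_{AB}>0$. Your memorylessness argument for residual lifetimes after the reference time is sound at $\beta=0$ on the slice, by independence and memorylessness of the exponential lifetimes; the paper's proof leaves this point implicit.
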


\begin{proof}
See Appendix~\ref{app:proofs-recovery}.
\end{proof}

\begin{lemma}[Concentration at the persistence maximum]\label{lem:persistmax}
Consider the shared-state frozen slice of Definition~\ref{def:frozen} (so
$\nu=\rho=0$ and $\varphi_{ig}=\varphi_{jg}$ for every axis $g$ shared by two
actors requiring it), with $\beta=0$.  Let $S\subseteq\Sigma_r$ be
the set of signatures present in the population at the reference time;
Section~\ref{sec:boundaries} calls this set the realized family, and for
$s\in S$ let $n_s$ be the number of actors with signature $s$.  By
Proposition~\ref{prop:supporthazard} the lifetimes
$T_s\sim\mathrm{Exp}(\lambda L_s)$ are mutually independent across actors.

Let $\mathcal A\subseteq S$ be nonempty with $L_s>0$ for every
$s\notin\mathcal A$, and write $\mathrm{LAST}(\mathcal A)$ for the event that every actor
outside $\mathcal A$ has exited while every $s\in\mathcal A$ still has at least
one surviving actor.  Then for every $x>0$
\begin{equation}\label{eq:persistbound}
  \Pr\bigl[\mathrm{LAST}(\mathcal A)\bigr]\;\ge\;
  1-\sum_{s\notin\mathcal A}n_s\,e^{-\lambda L_sx}
   -\sum_{s\in\mathcal A}\exp\bigl(-n_s\,e^{-\lambda L_sx}\bigr),
\end{equation}
and on $\mathrm{LAST}(\mathcal A)$ the surviving signature set at the time the last actor
outside $\mathcal A$ exits equals $\mathcal A$.

Now consider a sequence of populations in which $S$ and the loads
$\{L_s\}_{s\in S}$ are held fixed and only the multiplicities vary.  If
$\mathcal A$ separates in the persistence scale of
Definition~\ref{def:persistence}, that is if
\[
  \min_{s\in\mathcal A}W_s-\max_{s\notin\mathcal A}W_s\longrightarrow\infty ,
\]
then $\Pr[\mathrm{LAST}(\mathcal A)]\to1$.  Every separating $\mathcal A$ contains
$\operatorname*{argmax}_{s\in S}W_s$, which is therefore the smallest candidate
and always meets the standing condition, since $L_s=0$ gives $W_s=\infty$.  In
particular:
\begin{itemize}
\item if $\mathcal A=\{s^*\}$ is a singleton, then with probability tending to
one every surviving actor has signature $s^*$;
\item if the signatures in $\mathcal A$ are themselves disconnected, that is
if $\mathcal A$ admits a partition $\mathcal A=\mathcal A_1\sqcup\mathcal A_2$
into nonempty parts with $s\cap s'=\emptyset$ for every
$s\in\mathcal A_1,\ s'\in\mathcal A_2$, then with probability tending to one the
survivors occupy at least two components of $G_C$ at that time.
\end{itemize}
\end{lemma}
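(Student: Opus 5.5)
The plan is to prove the finite bound \eqref{eq:persistbound} by a union bound at a deterministic time $x$, then obtain the asymptotic statements by choosing $x$ between the two persistence scales.

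\textbf{The finite bound.} Fix $x>0$ and consider two bad events:
\begin{itemize}
\item $F_1$: some actor with signature outside $\mathcal A$ is still alive at time $x$;
\item $F_2$: some $s\in\mathcal A$ has all its actors dead by time $x$.
\end{itemize}
Off $F_1\cup F_2$, every actor outside $\mathcal A$ has exited by time $x$, and each class in $\mathcal A$ still has a survivor at $x$. Call $\tau$ the last exit time of actors outside $\mathcal A$. Then $\tau\le x$, so each class in $\mathcal A$ still has a survivor at $\tau$. This gives $\mathrm{LAST}(\mathcal A)$.

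To bound $F_1$, use Proposition~\ref{prop:supporthazard}. Each actor with $s\notin\mathcal A$ survives past $x$ with probability $e^{-\lambda L_s x}$, which is finite because $L_s>0$ by the standing condition. A union bound gives the first sum.

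To bound $F_2$, use independence at $\beta=0$, again from Proposition~\ref{prop:supporthazard}. The probability that all $n_s$ actors of class $s$ die by $x$ is $(1-e^{-\lambda L_s x})^{n_s}\le\exp(-n_s e^{-\lambda L_s x})$, using $1-y\le e^{-y}$. When $L_s=0$ this term is $e^{-n_s}$, which is still at most the stated term, read with the convention $e^{0}=1$. Another union bound finishes \eqref{eq:persistbound}.

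The claim about the surviving signature set is immediate from the definition of $\mathrm{LAST}(\mathcal A)$: at time $\tau$, every class outside $\mathcal A$ is extinct and every class inside $\mathcal A$ is represented.

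\textbf{Asymptotics.} This is the step that needs care. Write $a:=\max_{s\notin\mathcal A}W_s$ and $b:=\min_{s\in\mathcal A}W_s$, and take $x:=(a+b)/2$.

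\emph{Terms outside $\mathcal A$.} For $s\notin\mathcal A$ we have $n_s e^{-\lambda L_s x}=\exp(-\lambda L_s(x-W_s))$. Here $x-W_s\ge (b-a)/2\to\infty$. Since $S$ and the loads are fixed, there are finitely many terms and $\min L_s>0$, so each term goes to $0$.

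\emph{Terms inside $\mathcal A$ with $L_s>0$.} Here $n_s e^{-\lambda L_s x}=\exp(\lambda L_s(W_s-x))$, and $W_s-x\ge(b-a)/2\to\infty$. So each term $\exp(-n_s e^{-\lambda L_s x})$ goes to $0$.

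\emph{Terms inside $\mathcal A$ with $L_s=0$.} Such a class never loses an actor, so it contributes nothing. This needs a remark, because the literal term $\exp(-n_s)$ vanishes only if $n_s\to\infty$. I would sharpen the bound so that $F_2$ carries no contribution from classes with $L_s=0$; those classes survive surely.

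If $a$ or $b$ is infinite, the midpoint choice fails. A class with $W_s=\infty$ cannot lie outside $\mathcal A$, because separation would fail. So the only case to handle is $b=\infty$ with $a$ finite, and there one takes $x:=a+\sqrt{\cdot}$, i.e.\ any sequence with $x-a\to\infty$.

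\textbf{Remaining claims.} Suppose a separating $\mathcal A$ omitted some $s^\ast\in\operatorname{argmax}W$. Then $\max_{s\notin\mathcal A}W_s\ge W_{s^\ast}\ge\min_{s\in\mathcal A}W_s$, so the separation gap would be $\le0$, a contradiction. Hence every separating $\mathcal A$ contains the argmax.

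The standing condition $L_s>0$ outside $\mathcal A$ holds whenever $\mathcal A$ contains the argmax. Any class with $L_s=0$ has $W_s=\infty$, so it lies in the argmax and therefore in $\mathcal A$.

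The singleton bullet is the surviving-set statement with $\mathcal A=\{s^\ast\}$.

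For the disconnected bullet, use Lemma~\ref{lem:signatureintersection}. No $E_C$ edge joins a survivor with signature in $\mathcal A_1$ to one with signature in $\mathcal A_2$. Both parts are nonempty and represented at time $\tau$, so the survivors meet at least two components of $G_C$.

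I expect the main obstacle to be the bookkeeping of degenerate cases: infinite scales, classes with $L_s=0$ inside $\mathcal A$, and the exact form of the union-bound term. The probabilistic core is a routine union bound.
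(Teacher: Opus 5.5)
Your proposal is correct and follows the paper's proof essentially step for step. Both use the same union bound at a fixed time $x$; the paper phrases it through the maxima $M_{\mathrm{hi}}$ and $M^{(s)}_{\mathrm{lo}}$, whose good events are the complements of your $F_1$ and $F_2$. Both choose $x$ as the midpoint between the two persistence scales, treat zero-load classes in $\mathcal A$ as contributing nothing, and settle the argmax claim and the two bullets with the same argmax contradiction and Lemma~\ref{lem:signatureintersection}. Your explicit treatment of the case where every class in $\mathcal A$ has zero load, where the midpoint is infinite, is a small bookkeeping refinement the paper passes over.
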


\begin{proof}
See Appendix~\ref{app:proofs-recovery}.
\end{proof}

\noindent
Divergence of the gap is what the statement needs, not mere strictness: with the
gap bounded, \eqref{eq:persistbound} does not force the probability to one
and the limit is nondegenerate.

\begin{corollary}[Load minimum as a special case]\label{cor:loadmin}
In the setting of Lemma~\ref{lem:persistmax} put $L^*:=\min_{s\in S}L_s$,
$\mathcal A:=\operatorname*{argmin}_{s\in S}L_s$ and
$\delta:=\min_{s\in S\setminus\mathcal A}(L_s-L^*)$, so $\delta\ge1$ because
each $L_s$ counts deficient axes.  Assume \textup{(H1)} the non-minimal classes
are bounded, i.e.\ there is $C$ with $n_s\le C$ for every
$s\in S\setminus\mathcal A$, and this bound does not grow with the population;
and \textup{(H2)} every minimal class diverges, i.e.\ $n_s\to\infty$ for every
$s\in\mathcal A$.  Then
$\min_{s\in\mathcal A}W_s-\max_{s\notin\mathcal A}W_s\to\infty$, so the
conclusions of Lemma~\ref{lem:persistmax} hold, and \eqref{eq:persistbound}
becomes, for every $x>0$,
\[
  \Pr\bigl[\mathrm{LAST}(\mathcal A)\bigr]\;\ge\;
  1-\bigl|S\setminus\mathcal A\bigr|\,C\,e^{-\lambda(L^*+\delta)x}
   -\sum_{s\in\mathcal A}\exp\bigl(-n_s\,e^{-\lambda L^*x}\bigr).
\]
\end{corollary}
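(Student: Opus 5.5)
The statement to prove is Corollary~\ref{cor:loadmin}. The plan is to verify the divergence hypothesis of Lemma~\ref{lem:persistmax} directly from (H1) and (H2), and then specialize the bound \eqref{eq:persistbound} term by term.

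\textbf{Standing condition.} For every $s\notin\mathcal A$ we have $L_s\ge L^*+\delta\ge1>0$, so the condition $L_s>0$ off $\mathcal A$ holds automatically.

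\textbf{Gap divergence.} For $s\notin\mathcal A$, (H1) and $L_s\ge L^*+\delta\ge 1$ give
\[
W_s=\frac{\log n_s}{\lambda L_s}\le \frac{\log C}{\lambda}
\]
(taking $C\ge1$). Thus $\max_{s\notin\mathcal A}W_s$ is bounded uniformly along the sequence. For $s\in\mathcal A$ there are two cases.
\begin{itemize}
\item If $L^*=0$, then $W_s=\infty$ by definition.
\item If $L^*>0$, then $W_s=\log n_s/(\lambda L^*)\to\infty$ by (H2), since $L^*$ is fixed.
\end{itemize}
Because $\mathcal A$ is a fixed finite set (as $S$ is fixed), the minimum over $\mathcal A$ also diverges. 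The gap therefore tends to infinity, and Lemma~\ref{lem:persistmax} yields all of its conclusions.

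\textbf{The explicit bound.} Substitute into \eqref{eq:persistbound}. In the first sum, each term satisfies $n_s e^{-\lambda L_s x}\le C e^{-\lambda(L^*+\delta)x}$, using $n_s\le C$ and $L_s\ge L^*+\delta$ together with monotonicity of $e^{-\lambda L x}$ in $L$ for $x>0$. There are $|S\setminus\mathcal A|$ such terms. In the second sum, every $s\in\mathcal A$ has $L_s=L^*$ exactly, so those terms are unchanged.

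\textbf{Degenerate case.} When $L^*=0$ the second sum becomes $\sum_{s\in\mathcal A} e^{-n_s}$, which still fits the displayed form, and the conclusions of the lemma go through unchanged.

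\textbf{Main obstacle.} The only delicate point is this degenerate case together with the convention $W_s=\infty$. One must check that ``$\min-\max\to\infty$'' is meaningful when the minimum is $\infty$. It is, since the maximum is finite by (H1). Everything else is direct substitution.
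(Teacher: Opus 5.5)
Your proposal is correct and follows the paper's proof essentially step for step. You verify the gap divergence from (H1)--(H2), handling the $L^*=0$ convention separately, and then substitute $n_s\le C$ and $L_s\ge L^*+\delta$ off $\mathcal A$, and $L_s=L^*$ on $\mathcal A$, into \eqref{eq:persistbound}. The differences are cosmetic: your bound $W_s\le\log C/\lambda$ is cruder than the paper's $\log C/\bigl(\lambda(L^*+\delta)\bigr)$ but is likewise non-growing, and the paper separately dismisses the trivial case $S=\mathcal A$, where $\delta$ is an empty minimum, which you omit.
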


\begin{proof}
See Appendix~\ref{app:proofs-recovery}.
\end{proof}

\noindent
The lemma and the corollary concern one quantity at the two scales of
Remark~\ref{rem:leadingorder}; the corollary gives a bound in terms of $C$,
$L^*$, $\delta$ and the sizes $n_s$ of the minimal-load classes.

\begin{remark}[Why the second case needs a partition]\label{rem:loadminsplit}
It is not enough that $\mathcal A$ merely \emph{contain} two disjoint signatures:
Lemma~\ref{lem:signatureintersection} forbids an edge between actors with those
signatures, not a path, and for $r\ge3$ a chain of further members of
$\mathcal A$ can join them.
With $r=3$, $\ell_1=\ell_2=1$, $\ell_3=0$, each requirement mask retaining at
least one nondeficient axis as in Theorem~\ref{thm:thinbridge}(iii), and
$S=\{\{1\},\{2\},\{1,3\},\{2,3\},\{1,2\}\}$, the separating set is everything but
$\{1,2\}$, and the disjoint signatures $\{1\}$ and $\{2\}$ are joined through
$\{1,3\}$ and $\{2,3\}$.
\end{remark}

\begin{proposition}[Simultaneous removal has positive probability]\label{prop:simultaneous}
Let $\beta>0$ and let $x$ be a state in which some axis $g$ has $|D_g|\ge2$.
Definition~\ref{def:law} gives finitely many exponential clocks, so the total
intensity $\Lambda(x)$ out of $x$ is finite, and
\[
  \Pr\bigl(\text{the $g$-common exit is the next transition}\mid X=x\bigr)
  \;=\;\frac{\beta\lambda}{\Lambda(x)}\;>\;0 .
\]
On that event every actor in $D_g$ is removed at once, so at least two actors
exit simultaneously.
\end{proposition}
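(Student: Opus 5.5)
The plan is to argue directly from the construction of Definition~\ref{def:law} as a continuous-time Markov jump process with finitely many independent exponential clocks. First I will bound the total exit intensity out of $x$. The clocks present in $x$ are indexed by finitely many objects: one shared-loss clock per axis $g$ with $P_g\neq\emptyset$; one idiosyncratic loss clock per satisfied incidence; one supply clock per deficient incidence; one idiosyncratic exit clock per deficient incidence; and one axis-common exit clock per axis with $D_g\neq\emptyset$. Since $P$ and $G$ are finite, there are at most $|G|+3|P||G|+|G|$ clocks, each with rate at most $\max\{\nu,\rho,\lambda\}$. Hence
\[
  \Lambda(x)=\sum_{\text{clocks active at }x}\text{rate}<\infty .
\]
Moreover $\Lambda(x)\ge\beta\lambda>0$, because the $g$-common exit clock is active: $|D_g|\ge2$ gives $D_g\neq\emptyset$.

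Next I will apply the standard competing-exponentials fact. Given finitely many independent exponential clocks with rates $q_k$ and total $\Lambda=\sum_k q_k\in(0,\infty)$, the minimum is almost surely attained by a single clock. The probability that clock $k$ attains it is $q_k/\Lambda$. By the Markov property this fact applies from state $x$ to the next transition. The $g$-common exit clock has rate $\beta\lambda$, which gives the stated probability. It is strictly positive since $\beta>0$ and $\lambda>0$.

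Finally, on that event the transition is the one prescribed by Definition~\ref{def:law}: every actor in $D_g(x)$ is removed at once. Since $|D_g(x)|\ge2$, at least two actors leave $R$ at the same instant. The only point needing care is the phrasing: the displayed equality is the jump-chain transition probability, conditional on the current state being $x$. If one axis-common clock is associated with each $g$, there is no double counting. If several transitions coincided with probability zero, this would not affect the result, because ties among independent continuous clocks have probability zero.

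No real obstacle is expected. The only step needing attention is confirming finiteness and positivity of $\Lambda(x)$, which is what makes the ratio well defined.
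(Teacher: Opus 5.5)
Your proposal is correct and follows the paper's proof. Both arguments run the same way: finitely many clocks with finite rates give $\Lambda(x)<\infty$, the competing-exponentials fact gives the $g$-common clock (present because $D_g\neq\emptyset$) probability $\beta\lambda/\Lambda(x)>0$ of firing first, and Definition~\ref{def:law} then removes all of $D_g$, which has at least two members.
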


\begin{proof}
See Appendix~\ref{app:proofs-recovery}.
\end{proof}

\noindent
For any $\beta\in(0,1]$ two actors deficient on a common axis are removed
together with positive probability (Proposition~\ref{prop:simultaneous}), and at
$\beta=1$ route selection also becomes determinate.  Actors removed by one
firing share the deficient axis $g$.

\begin{proposition}[Route selection at $\beta=1$]\label{prop:routeselector}
Consider the shared-state frozen slice of Definition~\ref{def:frozen} with
$\beta=1$ and $r=2$, in the $A/AB/B$ configuration of
Lemma~\ref{lem:uniquemediator}, with all three signatures present and
$\ell_A,\ell_B\ge1$.  For each deficient axis $g$ in mask $A$, $D_g(t)$ consists
of the actors with signatures $A$ and $AB$; for each deficient axis $g$ in mask
$B$, of the
classes $B$ and $AB$.

\begin{enumerate}
\item[(1)] At $\beta=1$ the idiosyncratic exit clocks of Definition~\ref{def:law}
have rate $(1-\beta)\lambda=0$ and never fire, and on the frozen slice the loss
and supply clocks are silent as well.  Only the axis-common exit clocks survive:
one per deficient axis, each of rate $\beta\lambda=\lambda$, and mutually
independent.

\item[(2)] Let $\tau$ be the time of the first axis-common exit among all
deficient axes.  Immediately afterward, at $\tau+$, the surviving population
consists of a single signature class: $B$ alone if the firing axis lies in
mask $A$, or $A$ alone if it lies in mask $B$.

\item[(3)] Let ``$A$-side first'' denote the event that the earliest-firing
deficient axis lies in mask $A$.  Then
\[
  \Pr(A\text{-side first})=\frac{\ell_A}{\ell_A+\ell_B} .
\]
\end{enumerate}
\end{proposition}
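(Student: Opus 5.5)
The plan is to read off the clock structure directly from Definition~\ref{def:law} on the frozen slice and then use elementary facts about competing exponential clocks.

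\emph{Membership of $D_g$.} On the shared-state frozen slice every actor requiring $g$ has the common value $\varphi_g$. For a deficient axis ($\varphi_g=0$), $D_g(t)$ is therefore exactly the set of surviving actors whose signature contains the mask of $g$. For $g$ in mask $A$ these are the classes $A$ and $AB$, and for mask $B$ the classes $B$ and $AB$. This holds initially, before any exit, and I will carry it forward up to $\tau$.

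\emph{Part (1).} Setting $\beta=1$ makes the idiosyncratic exit rate vanish. Setting $\nu=0$ kills both loss clocks, shared (rate $\rho_S\nu$) and idiosyncratic. Setting $\rho=0$ kills supply. What remains is one axis-common clock per deficient axis $g$ with $D_g\neq\emptyset$, each of rate $\lambda$, and these are independent by Definition~\ref{def:law}. All three classes are present at the start and $\ell_A,\ell_B\ge1$, so every deficient axis has nonempty $D_g$ up to time $\tau$. Hence exactly $\ell_A+\ell_B$ clocks run until the first firing.

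\emph{Part (2).} Suppose the first firing axis $g$ lies in mask $A$. It removes all of $D_g(\tau)$, that is, every surviving $A$ and $AB$ actor. No exit occurred before $\tau$, so the survivors at $\tau$ are the full initial population, and the firing therefore leaves exactly the $B$ class, which is nonempty by hypothesis. The mask-$B$ case is symmetric. Almost surely no two clocks fire simultaneously, so $\tau+$ is well defined and the firing axis is unique.

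\emph{Part (3).} Among $\ell_A+\ell_B$ independent $\mathrm{Exp}(\lambda)$ clocks, the standard competing-exponentials fact says each is the minimum with probability $1/(\ell_A+\ell_B)$. Summing over the $\ell_A$ axes in mask $A$ gives $\ell_A/(\ell_A+\ell_B)$. Equivalently, the minimum of the mask-$A$ clocks is $\mathrm{Exp}(\lambda\ell_A)$ and that of the mask-$B$ clocks is $\mathrm{Exp}(\lambda\ell_B)$, and the two are independent.

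The only point needing care, and the likeliest obstacle, is the claim that the clock set is constant up to $\tau$. A clock is indexed by an axis with $D_g\neq\emptyset$, and since nothing moves before the first exit, this set cannot change until $\tau$. Everything else is bookkeeping.
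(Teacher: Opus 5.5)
Your proposal is correct and follows essentially the same route as the paper. Both arguments read $D_g$ off the shared-state slice, observe that at $\beta=1$, $\nu=\rho=0$ only the $\ell_A+\ell_B$ independent axis-common $\mathrm{Exp}(\lambda)$ clocks remain with no exit before the first firing, and then apply the competing-exponentials fact; the paper phrases this via $\tau_A\sim\mathrm{Exp}(\lambda\ell_A)$ against $\tau_B\sim\mathrm{Exp}(\lambda\ell_B)$, which is your ``equivalently'' form.
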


\begin{proof}
See Appendix~\ref{app:proofs-recovery}.
\end{proof}

\noindent
Positive loads at both endpoints are what make this a comparison: with no
deficient axis on one side there would be no competing clock there.

\begin{proposition}[Identical marginals, different outcomes]\label{prop:divergentendpoints}
Take the shared-state frozen slice of Definition~\ref{def:frozen} with $r=2$ in
the $A/AB/B$ configuration of Lemma~\ref{lem:uniquemediator}, all three
signatures present, and $\ell_A=\ell_B=\ell\ge1$.  Consider a sequence of
populations in which $S$ and the loads are held fixed and the mixed class is
thin in the persistence scale, as in Theorem~\ref{thm:thinbridge}(iv).  Let
$\tau$ be the instant at which the last actor in the mixed class exits.  Then
\begin{enumerate}
\item[(i)] at $\beta=0$, with probability tending to one the surviving signature
set at $\tau+$ is $\{A,B\}$, and the survivors occupy at least two components of
$G_C$;
\item[(ii)] at $\beta=1$, almost surely the surviving signature set at $\tau+$ is
a singleton, equal to $\{A\}$ or to $\{B\}$ with equal probability;
\item[(iii)] the marginal lifetime law of every actor is the same in (i) and
in (ii).
\end{enumerate}
\end{proposition}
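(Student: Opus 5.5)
The plan is to treat the three parts separately, since each reduces to a result already stated. Part (iii) I would settle first, because it is the easiest and it fixes the frame for the others. On the shared-state frozen slice with loads $\ell_A=\ell_B=\ell$, Proposition~\ref{prop:supporthazard} gives every actor of signature $s$ the law $T_s\sim\mathrm{Exp}(\lambda L_s)$, and states explicitly that this marginal does not depend on $\beta$. The loads entering $L_s$ are fixed by $\varphi$ and do not move on the slice, so (iii) follows by comparing the law at $\beta=0$ with the law at $\beta=1$. The only care needed is to confirm that an actor's marginal is computed before any conditioning on other actors' exits. In particular, the axis-common clock that removes a whole class still fires at rate $\lambda$ per deficient axis of that actor, as Lemma~\ref{lem:marginalinvariance} records.

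For (i) I would invoke Theorem~\ref{thm:thinbridge} directly. Hypotheses (i), (ii) and (iv) of that theorem are exactly the present assumptions: $\beta=0$, $\ell\ge1$ fixed, and thinness in the persistence scale. Hypothesis (iii) of the theorem, that each mask retains a nondeficient axis, is needed only for internal cohesion of the pure classes; part (i) here asks for at least two components, not for cohesion. The theorem then says that, with probability tending to one, the mixed class empties while both pure classes survive. So at $\tau+$ the surviving signature set is $\{A,B\}$. Since $A\cap B=\emptyset$, Lemma~\ref{lem:signatureintersection} forbids every $A$--$B$ edge, and the survivors occupy at least two components. This is the second bullet of Lemma~\ref{lem:persistmax} with $\mathcal A=\{A,B\}$. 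If the nondeficient-axis assumption is not meant to be in force, I would cite Lemma~\ref{lem:persistmax} directly rather than the theorem; its separation hypothesis is exactly the thinness assumption.

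For (ii) I would use Proposition~\ref{prop:routeselector}.
\begin{itemize}
\item At $\beta=1$ only axis-common exit clocks act.
\item The first firing, at time $\tau_1$, removes all of $AB$ together with one pure class, leaving a single class. This holds because $D_g$ for an $A$-axis is the set of surviving $A$ and $AB$ actors, both of which are present up to $\tau_1$.
\end{itemize}
The key identification is $\tau=\tau_1$ almost surely. Before $\tau_1$ no exit occurs at all, so the mixed class is intact. At $\tau_1$ every $AB$ actor is removed, since each $AB$ actor is deficient on every deficient axis of both masks. After $\tau_1$ no $AB$ actor remains. Hence the last $AB$ exit is exactly $\tau_1$, and the surviving set at $\tau+$ is $\{B\}$ or $\{A\}$. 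Part (3) of Proposition~\ref{prop:routeselector} gives probability $\ell_A/(\ell_A+\ell_B)=1/2$ for each. This holds for every population size, so thinness plays no role in (ii).

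The main obstacle is the bookkeeping in (ii). I must rule out a remaining pure class being empty at $\tau+$, and I must make sure ties between clocks have probability zero. Both follow from continuity of the exponential clocks and from the standing assumption that all three signatures are present. I expect the write-up to spend its effort there.
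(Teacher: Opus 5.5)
Your proposal is correct and follows the paper's proof: Lemma~\ref{lem:persistmax} with $\mathcal A=\{A,B\}$ for (i), Proposition~\ref{prop:routeselector} together with the identification of $\tau$ as the first axis-common firing for (ii), and the $\beta$-free marginal of Proposition~\ref{prop:supporthazard} for (iii). In (i), make your fallback the primary route, citing the lemma directly as the paper does, because the proposition does not assume hypothesis (iii) of Theorem~\ref{thm:thinbridge}.
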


\begin{proof}
See Appendix~\ref{app:proofs-recovery}.
\end{proof}

\noindent
At $\beta=1$ both selective attrition and correlated collapse hold and every
marginal lifetime law is the one at $\beta=0$, yet the generated outcome at
$\tau+$ differs, the difference produced by the axis-common channel.

\subsection{How $\beta$ changes population outcomes}

The parameter $\beta$ selects which route generates the morphologies of
Section~\ref{sec:intro}.  At $\beta=0$ the axis-common channel is
inactive and the frozen-slice exit clocks are independent, so, by
Proposition~\ref{prop:supporthazard}, actors with larger loads have
stochastically shorter lifetimes.
Selective attrition gives the direction of the attrition, the mixed class
over-represented among the exits
(Proposition~\ref{prop:enrichment}); Lemma~\ref{lem:persistmax} gives the
endpoint, the survivors being a set of signatures whose persistence scale
separates from the rest.  That is (M1) when that set is a singleton, and the shape of (M3) when its
signatures are disconnected among themselves;
Corollary~\ref{cor:loadmin} is the case in which that set is the set of least load.  The two cases are mutually exclusive: a singleton admits no partition into disjoint nonempty parts.  At $\beta=1$ the idiosyncratic clocks are silent and the first exit
event removes every actor deficient on one axis at once
(Proposition~\ref{prop:routeselector}), so correlated collapse yields exits that
coincide in time.  That is (M2).  Simultaneity is built into the axis-common
clock; what the results derive is who exits together, the actors deficient on the
firing axis, the route probability $\ell_A/(\ell_A+\ell_B)$
(Proposition~\ref{prop:routeselector}), and the invariance of every marginal
(Proposition~\ref{prop:divergentendpoints}).  Proposition~\ref{prop:nonidentity} is the minimal form of
the tension between the routes, Proposition~\ref{prop:divergentendpoints} the
sharp form.

(M3) comes from bridge depletion and fragmentation through
Theorem~\ref{thm:thinbridge}, and turns on the time at which the survivors are
inspected rather than on $\beta$: at $\beta=0$ the theorem describes them once
the bridge is exhausted and returns the fragmented shape, while
Lemma~\ref{lem:persistmax} describes them once everything outside the
persistence maximum is exhausted.  The fragmented shape is the earlier stage of one
trajectory.

In the two-mask slice $L_{AB}=\ell_A+\ell_B$ exceeds both endpoint loads, yet
when $n_A^{(\ell_A+\ell_B)/\ell_A}\ll n_{AB}$ and
$n_B^{(\ell_A+\ell_B)/\ell_B}\ll n_{AB}$, with $\ll$ as in
Remark~\ref{rem:thinness}, the separating set is $\{AB\}$ and every survivor
has the mixed signature, which is (M1) with the mediator as the surviving class.
Remark~\ref{rem:leadingorder} says why this agrees with the load ordering: the
load ordering applies to individual actors, whereas which class outlasts the
others is a statement about maxima over classes of different size.

\subsection{Latent generation and projected appearance}

Apparent cohesion sits apart: the other mechanisms describe the latent process;
apparent cohesion describes what a projection of that process makes visible.
A projection that discards upstream structure can present the survivors as
internally cohesive whether or not that cohesion is a feature of the latent
state.  Concentration in time is one instance: the two
routes derive the concentration, and the projections of
Section~\ref{sec:observables}, which take one snapshot of the cohesion graph
and forget which axis supported an adjacency, do not separate them.

%% file: sections/06_boundaries.tex
\section{Boundaries}\label{sec:boundaries}

This section says when forced coincidence holds, how often, what it implies for
an observable, and how the geometry weakens as $r$ grows, and where the dynamic results of Sections~\ref{sec:generation}--\ref{sec:recovery} stop once the coordinates they hold fixed are relaxed.  The characterization
and the observable statements are exact and measure-free; the proportion requires
a measure, stated where it is used.  The dynamics change which signature types
remain represented and which edges of $G_C(t)$ are realized, but not the
intersection rule defining $H_r$, so the answers apply to whatever realized family is present at the moment of evaluation.  Section~\ref{sec:coordinates} takes up what the dynamics do change.

\subsection{Realized families and forced coincidence}

\begin{definition}[Realized family]\label{def:skeleton}
The \emph{realized family} is the set $S\subseteq\Sigma_r$ of the
signatures present in a given population.  Write $G[S]:=H_r[S]$ for the
subgraph of the ambient graph induced on $S$.
\end{definition}

\noindent
Fix a realized family $S$ and disjoint $s,t\in S$, and put $u:=s\vee t=s\cup t$.  Say
that \emph{forced coincidence} holds at $(S,\{s,t\})$, written
$\mathrm{GOOD}(S,\{s,t\})$, when $s$ and $t$ are connected in $G[S]$ and $u$ is
the unique cut vertex separating them.  Call a set of vertices whose removal
leaves $s$ and $t$ in different components of $G[S]$ a \emph{separator} for the
pair; a cut vertex is a separator of size one.  If $\{u\}$ separates then so
does any superset, so uniqueness is asserted among single vertices.

\begin{lemma}[Forced coincidence is a union-cut condition]\label{lem:unioncut}
$\mathrm{GOOD}(S,\{s,t\})$ holds if and only if $u\in S$ and $s,t$ lie in
distinct components of $G[S\setminus\{u\}]$.
\end{lemma}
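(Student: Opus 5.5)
The plan is to unpack the definition of $\mathrm{GOOD}(S,\{s,t\})$ and notice that almost all of it comes for free once $u\in S$. The key observation is the two-edge path $s,u,t$. Since $s$ and $t$ are disjoint and nonempty, $u=s\cup t$ differs from both, and $u\cap s=s\neq\emptyset$, $u\cap t=t\neq\emptyset$. So whenever $u\in S$, Definition~\ref{def:sigraph} gives edges $\{s,u\}$ and $\{u,t\}$ of $G[S]$. I will adopt the convention implicit in the definition of a separator: the endpoints $s,t$ themselves are not candidate separators, since removing an endpoint cannot leave both endpoints in different components.

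For the forward direction, assume $\mathrm{GOOD}(S,\{s,t\})$. Then $u$ is a cut vertex of $G[S]$ separating $s$ from $t$. A vertex of $G[S]$ must belong to $S$, so $u\in S$. That $\{u\}$ separates the pair means, by definition, that $s$ and $t$ lie in distinct components of $G[S]\setminus\{u\}=G[S\setminus\{u\}]$. The uniqueness and connectivity clauses are simply not needed for this direction.

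For the converse, assume $u\in S$ and that $s,t$ lie in distinct components of $G[S\setminus\{u\}]$. There are three things to check.
\begin{enumerate}
\item Connectivity: the path $s,u,t$ shows that $s$ and $t$ are connected in $G[S]$.
\item Separation: by hypothesis $\{u\}$ is a separator for the pair, so $u$ is a cut vertex separating them.
\item Uniqueness: let $w\in S\setminus\{s,t\}$ with $w\neq u$. Then the path $s,u,t$ survives the deletion of $w$, so $\{w\}$ does not separate $s$ from $t$. Hence $u$ is the only single vertex that separates.
\end{enumerate}
Together these give $\mathrm{GOOD}(S,\{s,t\})$.

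No step is genuinely hard. The only point that needs care is the uniqueness clause, and the explicit path $s,u,t$ settles it. It also explains why the characterization in Lemma~\ref{lem:unioncut} never needs to mention uniqueness or connectivity: both are automatic as soon as the join $u$ is realized in $S$.
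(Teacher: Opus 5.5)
Your proof is correct and follows the paper's own argument: the forward direction reads $u\in S$ and the separation directly off the definition, and the converse uses the surviving path $s-u-t$ to rule out every other single-vertex separator. You also verify explicitly the connectivity clause of $\mathrm{GOOD}$ and that $u\notin\{s,t\}$, both of which the paper's proof leaves implicit.
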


\begin{proof}
See Appendix~\ref{app:proofs-boundaries}.
\end{proof}

\begin{corollary}[Connectivity form]\label{cor:kappaform}
$\mathrm{GOOD}(S,\{s,t\})$ holds if and only if $u\in S$ and
$\kappa_{G[S]}(s,t)=1$, where $\kappa$ denotes the two-terminal local vertex
connectivity \citep{diestel2017}.
\end{corollary}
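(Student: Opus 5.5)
The plan is to reduce Corollary~\ref{cor:kappaform} to Lemma~\ref{lem:unioncut}, which already characterizes $\mathrm{GOOD}(S,\{s,t\})$ as ``$u\in S$ and $s,t$ lie in distinct components of $G[S\setminus\{u\}]$''. It then suffices to show that, when $u\in S$, the condition ``$s,t$ are separated by deleting $u$'' is equivalent to $\kappa_{G[S]}(s,t)=1$. Here $\kappa$ is the minimum size of a vertex set avoiding $s,t$ whose removal disconnects them. This is well defined because $s\cap t=\emptyset$ makes $s,t$ non-adjacent in $H_r$, hence in $G[S]$.

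First I will record the structural facts at the pair. Since $s,t$ are disjoint and nonempty, $u=s\cup t$ differs from both, so $u$ is a legitimate candidate separator not containing a terminal. Moreover, if $u\in S$ then $u$ meets both $s$ and $t$, so $s\!-\!u\!-\!t$ is a path in $G[S]$ and $s,t$ are connected. Hence $\kappa_{G[S]}(s,t)\ge1$ whenever $u\in S$. This connectivity is the step I expect to need care: $\kappa=1$ should not be achievable vacuously by a disconnected pair, and the value $\kappa=0$ for disconnected terminals must be excluded.

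For the forward direction, assume $u\in S$ and $\{u\}$ separates $s$ from $t$. Then $\{u\}$ is a separator of size one, so $\kappa\le1$. Combined with $\kappa\ge1$ from the path through $u$, this gives $\kappa=1$.

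For the converse, assume $u\in S$ and $\kappa=1$. Then some single vertex $w\notin\{s,t\}$ separates them, and every $s$--$t$ path passes through $w$. But $s\!-\!u\!-\!t$ is such a path, and its only internal vertex is $u$, so $w=u$. Therefore $\{u\}$ separates, and Lemma~\ref{lem:unioncut} gives $\mathrm{GOOD}$. This converse is the only nonroutine step. It is where the specific shape of the join matters: the length-two path through $u$ pins down the unique cut vertex. Menger's theorem is not needed, only the definition of $\kappa$ as a minimum separator size.
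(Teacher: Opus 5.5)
Your proof is correct and follows the paper's argument essentially step for step. You reduce to Lemma~\ref{lem:unioncut}, use the path $s-u-t$ both to get $\kappa\ge1$ and to force any single-vertex separator to be $u$, and take the size-one separator from the lemma for the forward direction; your remark that $\kappa$ is well defined because $s$ and $t$ are non-adjacent is a small point the paper leaves implicit.
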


\begin{proof}
If $u\in S$ then $s-u-t$ is a path, so $s$ and $t$ are connected and
$\kappa\ge1$.  Suppose $\kappa=1$ and let $w$ be a separator.  If $w\neq u$ the
path $s-u-t$ survives its removal, a contradiction; so $w=u$ and
Lemma~\ref{lem:unioncut} applies.  Conversely
Lemma~\ref{lem:unioncut} gives a separator of size one.
\end{proof}

\begin{remark}[The characterization is not a graph invariant]\label{rem:notinvariant}
The condition refers to $u=s\vee t$, fixed by the requirement lattice rather than
read off the graph, and is not determined by the isomorphism type of
$(G[S],\{s,t\})$: the families $\{\{1\},\{1,2\},\{2\}\}$ and
$\{\{1\},\{1,2,3\},\{2\}\}$ both induce the path $s-\text{mid}-t$ on
$s=\{2\},t=\{1\}$, yet only the first has $u\in S$.  The graph supplies the
separator; the requirement structure supplies which vertex must serve as one.
\end{remark}

\subsection{Redundancy, and the shape of failure}

\begin{definition}[Common neighbours and redundancy]\label{def:bridgeset}
For disjoint $s,t$ in the realized family $S$ fixed above put
\[
  B(s,t):=\{\,v\in\Sigma_r\ :\ v\cap s\neq\emptyset\ \text{and}\ v\cap t\neq\emptyset\,\},
  \qquad
  R_{s,t}:=|S\cap B(s,t)\setminus\{s,t\}| .
\]
\end{definition}

\noindent
$B(s,t)$, the \emph{common-neighbour set} of the pair, is the set of signatures
meeting both endpoints; equivalently, the transversals of the two-edge
hypergraph $\{s,t\}$.  Its members are exactly the possible length-two
mediators, so $R_{s,t}$ counts the common neighbours of $s$ and $t$ in $G[S]$.
Only $B(s,t)$ is a function of the pair alone; $R_{s,t}$
depends on $S$ as well, and the subscript is abbreviated because $S$ is fixed
throughout.

\begin{lemma}[Two common neighbours preclude forcing]\label{lem:twobridges}
$R_{s,t}\le\kappa_{G[S]}(s,t)$.  In particular $R_{s,t}\ge2$ implies
$\neg\mathrm{GOOD}(S,\{s,t\})$.
\end{lemma}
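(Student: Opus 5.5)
The plan is to read $R_{s,t}$ as a count of internally vertex-disjoint $s$--$t$ paths and then apply the easy direction of Menger's theorem. Each $v\in S\cap B(s,t)\setminus\{s,t\}$ meets both $s$ and $t$. Since $v\neq s$ and $v\neq t$, the pairs $\{s,v\}$ and $\{v,t\}$ are edges of $H_r$ by Definition~\ref{def:sigraph}. Both endpoints of each edge lie in $S$, so both are edges of $G[S]$, and $s-v-t$ is an $s$--$t$ path of length two in $G[S]$. Distinct common neighbours $v\neq v'$ give paths whose only internal vertices are $v$ and $v'$ respectively. The $R_{s,t}$ paths so obtained are therefore pairwise internally vertex-disjoint.

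Next I would apply the trivial inequality behind Menger's theorem. Take any vertex set $X\subseteq S\setminus\{s,t\}$ whose removal separates $s$ from $t$. Then $X$ must contain an internal vertex of each of these paths. Since the internal vertices are distinct, $|X|\ge R_{s,t}$. Minimizing over separators gives $\kappa_{G[S]}(s,t)\ge R_{s,t}$. The degenerate case $R_{s,t}=0$ is trivial.

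There is one point to check. Disjointness of $s$ and $t$ means $\{s,t\}$ is not an edge of $H_r$, so $s$ and $t$ are non-adjacent in $G[S]$. This makes the two-terminal vertex connectivity $\kappa_{G[S]}(s,t)$ well defined as a minimum separator size, as used in Corollary~\ref{cor:kappaform}. I regard this as the only step needing care. No deeper use of Menger is required, because only the lower bound from disjoint paths is used.

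For the consequence, suppose $R_{s,t}\ge2$. Then $\kappa_{G[S]}(s,t)\ge2\neq1$, and Corollary~\ref{cor:kappaform} gives $\neg\mathrm{GOOD}(S,\{s,t\})$. One could instead argue directly from Lemma~\ref{lem:unioncut}: if $u\notin S$, GOOD fails at once. If $u\in S$, then $u\in B(s,t)$, and a second common neighbour $v\neq u$ leaves the path $s-v-t$ intact in $G[S\setminus\{u\}]$. Either route works, and I would present the Menger-style argument with this remark as a check.
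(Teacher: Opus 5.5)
Your proposal is correct and matches the paper's proof: the $R_{s,t}$ length-two paths $s-v-t$ through distinct common neighbours are internally disjoint, so $\kappa_{G[S]}(s,t)\ge R_{s,t}$, and $R_{s,t}\ge2$ then contradicts Corollary~\ref{cor:kappaform}. The paper cites Menger's theorem at this step; your remark that only its trivial direction is needed, when $\kappa$ is read as a minimum separator size, is accurate.
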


\begin{proof}
Distinct $v_1,\dots,v_R\in S\cap B(s,t)$ give internally disjoint paths
$s-v_i-t$, so by Menger's theorem \citep{menger1927}, in vertex form
\citep[Theorem~8.10]{kortevygen2012},
$\kappa\ge R$.  If $R\ge2$ then $\kappa\ge2$
and Corollary~\ref{cor:kappaform} fails.
\end{proof}

\begin{lemma}[Size of the common-neighbour set]\label{lem:bridgesize}
If $|s|=a$ and $|t|=b$ with $s\cap t=\emptyset$ then
$|B(s,t)|=(2^{a}-1)(2^{b}-1)2^{\,r-a-b}\ \ge\ 2^{\,r-2}$, with equality at
$a=b=1$.
\end{lemma}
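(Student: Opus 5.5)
The count is a product, so the plan is to decompose an arbitrary signature $v\in\Sigma_r$ along the partition $[r]=s\sqcup t\sqcup([r]\setminus(s\cup t))$, which uses the hypothesis $s\cap t=\emptyset$. Writing $v=(v\cap s)\sqcup(v\cap t)\sqcup(v\setminus(s\cup t))$ gives a bijection between subsets $v\subseteq[r]$ and triples of subsets of $s$, $t$ and the complement. By Definition~\ref{def:bridgeset}, membership $v\in B(s,t)$ says exactly that the first two components are nonempty, and the third component is unconstrained. Since $v\cap s\neq\emptyset$ already forces $v\neq\emptyset$, every such $v$ lies in $\Sigma_r$, so no correction for the excluded empty signature is needed. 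Counting triples then gives $(2^{a}-1)(2^{b}-1)2^{\,r-a-b}$ directly.

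For the lower bound, the plan is to compare with the case $a=b=1$. There the formula gives $1\cdot1\cdot2^{r-2}=2^{r-2}$, which accounts for the equality case. For general $a,b\ge1$, I would write the count as
\[
  2^{r-2}\cdot\frac{2^{a}-1}{2^{a-1}}\cdot\frac{2^{b}-1}{2^{b-1}}.
\]
Each factor $(2^{k}-1)/2^{k-1}=2-2^{1-k}$ is at least $1$ for $k\ge1$, with equality exactly at $k=1$. This shows the bound, and it also shows that equality holds only when $a=b=1$. The bound should be read as requiring $a,b\ge1$, which holds because $s,t\in\Sigma_r$ are nonempty. It also needs $a+b\le r$, which is automatic from disjointness.

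I do not expect a genuine obstacle here. The only points to check are that disjointness of $s$ and $t$ is what makes the three-way decomposition a partition, and that nonemptiness of $s\cap v$ handles the removal of $\emptyset$ from the lattice.
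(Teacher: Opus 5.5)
Your proposal is correct and follows the paper's proof. The paper likewise counts members of $B(s,t)$ by a nonempty trace on $s$, a nonempty trace on $t$, and an arbitrary trace on the rest; your factorization $2^{r-2}(2-2^{1-a})(2-2^{1-b})$ just makes explicit its one-line remark that the product is minimized at $a=b=1$, and also shows equality holds only there.
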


\begin{proof}
A member of $B(s,t)$ is determined by a nonempty trace on $s$, a nonempty trace
on $t$, and an arbitrary trace on the rest.  The product is minimized over
$a,b\ge1$ at $a=b=1$.
\end{proof}

\noindent
Combining Corollary~\ref{cor:kappaform} with Lemma~\ref{lem:twobridges} splits
failure into three cases:
\[
  \neg\mathrm{GOOD}
  \iff
  \underbrace{u\notin S}_{\text{join absent}}
  \ \ \text{or}\ \
  \underbrace{u\in S,\ \kappa\ge2}_{\text{redundant}} ,
\]
where the redundant case divides further according to whether the redundancy is
direct, $R_{s,t}\ge2$, or arises through a longer detour around $u$ with
$R_{s,t}\le1$.

\subsection{Counting measure and the asymptotic boundary}

Everything so far is a property of a given realized family and pair.  Asking how common
forced coincidence is requires a measure, and we state one.

\begin{definition}[Sample space and proportion]\label{def:omega}
Let $r\ge2$, so that disjoint nonempty signatures exist, and set
\[
  \Omega_r:=\bigl\{\,(S,\{s,t\})\ :\ S\subseteq\Sigma_r,\ H_r[S]\ \text{connected},\
  s,t\in S,\ s\cap t=\emptyset\,\bigr\},
\]
with $S$ labelled.  Throughout this subsection, and only here, proportions are
taken under the uniform counting measure on $\Omega_r$, and we write
\[
  \mathrm{FAIL}(r):=\Pr_{\omega\sim\mathrm{Unif}(\Omega_r)}\bigl[\neg\mathrm{GOOD}(\omega)\bigr].
\]
\end{definition}

\begin{theorem}[Forced coincidence is asymptotically negligible]\label{thm:failasymptotic}
Under the uniform counting measure on $\Omega_r$,
\[
  1-\mathrm{FAIL}(r)\ \le\ 2^{\,1-2^{\,r-2}}
  \ \longrightarrow\ 0
  \qquad (r\to\infty).
\]
\end{theorem}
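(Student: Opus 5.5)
The plan is a fibre-wise injection argument: fix the unordered pair $\{s,t\}$ and compare, inside the fibre $\Omega_r(s,t):=\{S:(S,\{s,t\})\in\Omega_r\}$, the number of families satisfying $\mathrm{GOOD}$ with the total. If every fibre has good proportion at most $2^{1-|B(s,t)|}$, then summing over pairs (the fibres partition $\Omega_r$) gives the same bound for the uniform measure. Lemma~\ref{lem:bridgesize} then supplies $|B(s,t)|\ge 2^{r-2}$, which gives $1-\mathrm{FAIL}(r)\le 2^{1-2^{r-2}}$.

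\emph{Step 1: structure of a good family.} Suppose $\mathrm{GOOD}(S,\{s,t\})$ holds. By Corollary~\ref{cor:kappaform}, $u=s\cup t\in S$ and $\kappa_{G[S]}(s,t)=1$. By Lemma~\ref{lem:twobridges}, $R_{s,t}\le1$. Since $u$ meets both $s$ and $t$, we have $u\in B(s,t)$, and $u\neq s,t$ because both are nonempty and disjoint. Moreover $s,t\notin B(s,t)$, since $s\cap t=\emptyset$. Hence
\[
S\cap B(s,t)=\{u\}\qquad\text{exactly.}
\]

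\emph{Step 2: the injection.} Put $X_0:=B(s,t)\setminus\{u\}$. For a good $S$ and any $X\subseteq X_0$, consider $S_X:=S\cup X$. It contains $s,t$. It is connected in $H_r$, because every added vertex meets $s$, which lies in the connected set $S$. So $(S_X,\{s,t\})\in\Omega_r$. The set $S$ is recovered from $S_X$ as $S_X\setminus X_0$, because $S\cap X_0=\emptyset$ by Step~1. Distinct pairs $(S,X)$ therefore give distinct families. Hence the fibre over $\{s,t\}$ has at least $2^{|X_0|}=2^{|B(s,t)|-1}$ times as many members as it has good members, and so the good proportion in the fibre is at most $2^{1-|B(s,t)|}$.

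\emph{Step 3: conclusion.} Lemma~\ref{lem:bridgesize} gives $|B(s,t)|\ge 2^{r-2}$ uniformly over disjoint pairs. Averaging the fibre bounds with weights proportional to fibre sizes yields the claimed inequality. The limit $2^{1-2^{r-2}}\to0$ is immediate.

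The only delicate point is Step~1. It must pin down $S\cap B(s,t)$ exactly, not merely bound $R_{s,t}$; this is what makes $S$ recoverable from $S_X$. It relies on $u$ itself being a common neighbour, so that $R_{s,t}\le1$ leaves room for nothing else. The connectivity check in Step~2 is easy but essential, since $\Omega_r$ only contains connected families.
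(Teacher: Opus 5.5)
Your proof is correct and follows essentially the paper's argument: the same identification $S\cap B(s,t)=\{u\}$ from Corollary~\ref{cor:kappaform} and Lemma~\ref{lem:twobridges}, the same per-pair ratio $2^{1-|B(s,t)|}$, then Lemma~\ref{lem:bridgesize} and averaging over pairs. The only difference is the counting step. The paper bounds the good families by the $2^{q-2-|B|}$ choices outside $\{s,t\}\cup B$ (with $q=2^r-1$), and bounds the fibre from below by the $2^{q-3}$ families containing the universal vertex $[r]$. Your injection $(S,X)\mapsto S\cup X$ instead multiplies each good family by the subsets of $B\setminus\{u\}$, using that every member of $B$ meets $s$, and reaches the same bound.
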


\noindent
The bound is a statement about the uniform measure: conditioning on $|S|$ can
raise the proportion above it, and at $r=4$, $|S|=3$ the enumerated value is
$\pvStratForcedRFourSThree/\pvStratPairsRFourSThree$ against $1/8$
(Appendix~\ref{app:orbit}).

\begin{proof}
See Appendix~\ref{app:proofs-boundaries}.
\end{proof}

\noindent
The theorem is asymptotic, and the enumeration remains the finite statement.
The bound is exactly one at $r=2$, where the enumeration finds every admissible
pair forced, and it is informative from $r=3$ on, though it still stands above
the enumerated value at $r=3$ and $r=4$ (Appendix~\ref{app:orbit}).

\subsection{An observable boundary}

\begin{definition}[Morphology observables]\label{def:observables}
For a population $R$ with $|R|\ge2$ and cohesion graph $G_C$, let $M$ be the
number of
adjacent pairs, $Q$ the number of pairs lying in a common component, and $A$ the
number of pairs in a common component that share a required mask.  Put
\[
  \sigma:=\frac{M}{\binom{|R|}{2}},\qquad
  I:=\frac{Q}{\binom{|R|}{2}},\qquad
  K:=\frac{M}{Q},\qquad
  P_A:=\frac{A}{Q},
\]
where $K$ and $P_A$ are defined when $Q>0$, and then $\sigma=I\cdot K$.  We call $\sigma$ the unconditional edge density, $K$
the edge density conditional on reachability, $I$ the integration observable,
the probability that a uniformly drawn pair is mutually reachable, and $P_A$ the
pair cover.
\end{definition}

\begin{proposition}[Edge-density cap under forced coincidence]\label{prop:forced-edge-density-cap}
Let $R$ be a population whose realized family is $S$ with $|S|\ge3$, containing
$n_v\ge1$ actors of signature $v$ for each $v\in S$, and put $N:=|R|=\sum_{v\in S}n_v$.
Suppose the disjoint pair $\{s,t\}\subseteq S$ is forced and write
$u:=s\vee t$.  Then
\[
  \sigma\;\le\;1-
  \frac{\min\bigl\{\,n_s\,(N-n_u-n_s),\ \,n_t\,(N-n_u-n_t)\,\bigr\}}
       {\binom N2}.
\]
With one actor per signature, so that $N=|S|=:m$, the right-hand side is
\[
  \mathrm{cap}(m)
  \;:=\; \frac{\binom{m-1}{2}+1}{\binom m2}
  \;=\; 1-\frac{2(m-2)}{m(m-1)} .
\]
\end{proposition}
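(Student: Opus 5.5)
By Lemma~\ref{lem:unioncut}, forced coincidence at $(S,\{s,t\})$ gives $u\in S$, and $s$ and $t$ lie in distinct components of $G[S\setminus\{u\}]$. The plan is to turn this separation into a lower bound on the number of non-adjacent pairs of actors, using only Lemma~\ref{lem:signatureintersection}: actors with disjoint signatures are never adjacent in $G_C$. Then $\sigma=M/\binom N2\le 1-(\text{non-adjacent pairs})/\binom N2$, and it remains to show that the number of non-adjacent pairs is at least the stated minimum.

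Let $C_s\subseteq S\setminus\{u\}$ be the component of $G[S\setminus\{u\}]$ containing $s$, and let $D:=S\setminus(\{u\}\cup C_s)$, so that $t\in D$. Put $a:=\sum_{v\in C_s}n_v$ and $b:=\sum_{v\in D}n_v$. Then $a+b=K:=N-n_u$, with $a\ge n_s$ and $b\ge n_t$. Two structural facts follow.
\begin{itemize}
\item No signature in $C_s$ meets $t$; otherwise $t$ would lie in $C_s$. Hence every $t$-actor is non-adjacent to every actor whose signature is in $C_s$. This gives $n_t a$ non-adjacent pairs.
\item No signature in $D$ meets $s$, since $C_s$ is a whole component of $G[S\setminus\{u\}]$. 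Hence every $s$-actor is non-adjacent to every actor whose signature is in $D$. This gives $n_s b$ non-adjacent pairs.
\end{itemize}
The two families of pairs overlap exactly in the $n_sn_t$ pairs consisting of one $s$-actor and one $t$-actor. Inclusion--exclusion therefore gives at least
\[
f(a):=n_ta+n_s(K-a)-n_sn_t
\]
non-adjacent pairs.

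The main step, and the only place where care is needed, is to pass from this bound, which depends on the unknown split $a$, to the stated minimum. Since $f$ is affine in $a$ on the interval $[n_s,K-n_t]$, its minimum is attained at an endpoint. At $a=n_s$ we get $f=n_s(K-n_s)$, and at $a=K-n_t$ we get $f=n_t(K-n_t)$. Hence $f(a)\ge\min\{n_s(N-n_u-n_s),\,n_t(N-n_u-n_t)\}$ for every admissible $a$. Dividing by $\binom N2$ yields the cap on $\sigma$. Pairs within a single class, and pairs involving $u$-actors, are simply not counted, which can only weaken the bound and so does no harm.

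For the specialization, set $n_v=1$ for all $v\in S$ and $N=m$. The minimum becomes $m-2$, which is positive because $|S|\ge3$. The cap is then $1-2(m-2)/(m(m-1))$. The identity $\binom m2-(m-2)=\binom{m-1}2+(m-1)-(m-2)=\binom{m-1}2+1$ gives the stated form $\mathrm{cap}(m)$.
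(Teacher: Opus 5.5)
Your proof is correct and is the paper's argument: the union-cut lemma splits $S\setminus\{u\}$ into the component of $s$ and the rest, Lemma~\ref{lem:signatureintersection} turns that split into non-adjacent actor pairs, and the count is minimized over the unknown split at the endpoints of $[n_s,N-n_u-n_t]$. The only difference is in which pairs are counted. The paper counts all $ab$ cross pairs and minimizes the concave product, whereas you count only the pairs that touch an $s$- or $t$-actor, which gives $ab-(a-n_s)(b-n_t)$; this is affine and equals $ab$ at both endpoints, so the minimum is the same.
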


\begin{proof}
See Appendix~\ref{app:proofs-boundaries}.
\end{proof}

\noindent
No connectivity assumption is needed, which is what makes the bound usable: it
holds whatever the realized graph looks like.

\begin{remark}[The bound is a function of the composition]\label{rem:capcomposition}
The cap is not a number attached to the realized family.  Holding $S$ and the forced
pair fixed and letting $n_u$ grow sends the right-hand side to $1$, so the
bound becomes arbitrarily weak as the mediating class grows.  An observer who
reads $\sigma$ off the graph and does not
know $n_s$, $n_t$ and $n_u$ therefore cannot use the bound to exclude anything,
and Corollary~\ref{cor:inverseexclusion} states the exclusion where the
composition is known.  This is one
instance of the reading of Section~\ref{sec:observables}: what discriminates
lies upstream of the graph.
\end{remark}

\begin{remark}[The cap is on $\sigma$, not on $K$]\label{rem:whichobservable}
The bound is on the unconditional edge density.  Since $K=M/Q$, fragmentation
reduces $Q$ and can raise the conditional edge density even while $\sigma$ stays
below the cap.  Take $r=4$ and
$S=\{\{1\},\{2\},\{3\},\{2,3\},\{1,4\}\}$ at unit composition, forced at
$s=\{2\}$, $t=\{3\}$: here
$\sigma=3/10$ lies below $\mathrm{cap}(5)=7/10$ while $K=3/4$ exceeds it.  The
bound does not apply to $P_A$ either, since $P_A$ counts pairs sharing a
required mask whether or not that mask is currently satisfied.  When the
realized graph is connected and every required
axis is satisfied, $\sigma=K=P_A$ and $I=1$.
\end{remark}

\subsection{Graded dependence on \texorpdfstring{$r$}{r}}

Finally, the geometry depends quantitatively on $r$ at every support
size, not only at the boundary between $r=2$ and $r\ge3$.

\begin{proposition}[Cover bounds]\label{prop:coverbounds}
Let $C$ be a component of $G_C$ of size $n\ge2$ whose signatures lie in $\Sigma_r$, and
write $P_A(C)$ for the pair cover computed within $C$.  Then
$P_A(C)\ge 1-t_r(n)/\binom n2$, where $t_r(n)$ is the number of edges of the
$r$-partite Tur\'an graph; in particular $P_A(C)\ge 1/r-O(n^{-1})$.  Moreover
$P_A(C)=1$ if and only if the signatures in $C$ are pairwise intersecting.
\end{proposition}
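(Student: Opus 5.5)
The plan is to turn the statement into a Tur\'an-type extremal count on an auxiliary graph of \emph{disjointness} among the actors of $C$. Within $C$ every pair lies in a common component, so $Q=\binom n2$. By Definition~\ref{def:observables}, $A$ counts the pairs $\{i,j\}\subseteq C$ with $s_i\cap s_j\neq\emptyset$. Hence
\[
  1-P_A(C)=\frac{|\{\{i,j\}\subseteq C:\ s_i\cap s_j=\emptyset\}|}{\binom n2}.
\]
I therefore define the disjointness graph $D_C$ on vertex set $C$, with $\{i,j\}$ an edge exactly when $s_i\cap s_j=\emptyset$. The quantity to bound is $|E(D_C)|$. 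This count ignores whether masks are currently satisfied, in line with Remark~\ref{rem:whichobservable}.

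The key step is to show that $D_C$ contains no $K_{r+1}$. A clique in $D_C$ is a set of actors whose signatures are pairwise disjoint. Two actors with the same signature $s$ share $s$, which is nonempty because only nonempty signatures are considered, so they are not adjacent in $D_C$. The signatures on a clique are therefore distinct, pairwise disjoint, nonempty subsets of $[r]$, and there are at most $r$ of them. Tur\'an's theorem then gives $|E(D_C)|\le t_r(n)$, which is the first inequality $P_A(C)\ge 1-t_r(n)/\binom n2$.

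For the asymptotic form I will use the standard bound $t_r(n)\le(1-1/r)n^2/2$. It gives
\[
  \frac{t_r(n)}{\binom n2}\le\Bigl(1-\frac1r\Bigr)\frac{n}{n-1}
  =\Bigl(1-\frac1r\Bigr)\Bigl(1+\frac1{n-1}\Bigr),
\]
so $P_A(C)\ge 1/r-(1-1/r)/(n-1)=1/r-O(n^{-1})$.

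For the characterization, $P_A(C)=1$ holds exactly when $D_C$ has no edges, that is, when no two actors of $C$ have disjoint signatures. Two actors with equal signatures always intersect, so this is the same as the set of signatures present in $C$ being pairwise intersecting. Both directions are immediate. The only point needing care, and the main (mild) obstacle, is the clique bound: multiplicities of a signature class must not inflate cliques, and this is handled by the observation that a nonempty signature meets itself. Connectivity of $C$ is used only to fix $Q=\binom n2$.
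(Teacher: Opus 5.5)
Your proposal is correct and follows the paper's proof essentially step for step. Like the paper, you form the disjointness graph $D_C$ on $C$ and note that it is $K_{r+1}$-free because at most $r$ nonempty subsets of $[r]$ can be pairwise disjoint; you then apply Tur\'an's theorem and read $P_A(C)=1$ as $D_C$ being edgeless. The only difference is cosmetic: you obtain the asymptotic form from the explicit bound $t_r(n)\le(1-1/r)n^2/2$, which gives the concrete error term $(1-1/r)/(n-1)$, where the paper cites $t_r(n)=(1-1/r)\binom n2+O(n)$.
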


\begin{proof}
See Appendix~\ref{app:proofs-boundaries}.
\end{proof}

\noindent
The floor $1/r$ falls as $r$ grows, so the geometry weakens gradually rather
than switching at a threshold.

\input{sections/06b_coordinates}

%% file: sections/06b_coordinates.tex
\subsection{The boundary in the dynamic coordinates}\label{sec:coordinates}

Section~\ref{sec:dynamics} declared four free coordinates and varied only
$\beta$; the table summarizes the effects of the other three.  A conclusion
\emph{holds} when it stands verbatim off the frozen slice of
Definition~\ref{def:frozen}.  It \emph{changes} when the ordering or mechanism
stands and the deciding quantity is replaced.  It \emph{fails} when the
qualitative content is lost on some configuration.  Throughout this section
$g\in s_i$ abbreviates $g\in\bigcup_{m\in s_i}\mathcal G_m$, the axes the masks
of $i$ require.  Write $\xi_i(t):=\mathbf1\{i\in R(t)\}$,
$\varphi_i(t):=(\varphi_{ig}(t))_{g\in G}$ and
$E_\cap(t):=\{\{i,j\}\subseteq R(t):i\neq j,\ s_i\cap s_j\neq\emptyset\}$, the
intersection graph on survivors, which is independent of the satisfaction state
and contains $E_C(t)$ (Lemma~\ref{lem:signatureintersection}).

\begin{center}
\small
\begin{tabular}{@{}p{0.18\linewidth}p{0.25\linewidth}p{0.15\linewidth}p{0.26\linewidth}@{}}
\toprule
 & $\nu$ & $\rho_S$ & $\rho$ \\
\midrule
Marginal invariance
  & holds (Prop.~\ref{prop:coord-lawinvariance})
  & holds (ibid.) & holds (ibid.) \\
Structural layer
  & holds; rate-independent & holds; rate-independent & holds; rate-independent \\
Load and order (Prop.~\ref{prop:supporthazard})
  & fails: independence is a property of $\nu=0$ (Prop.~\ref{prop:coord-lossdependence}), and inclusion no longer orders (Prop.~\ref{prop:coord-loss}(ii))
  & holds; degenerate (Rem.~\ref{rem:gauge})
  & changes: defective lifetime, ordering kept (Prop.~\ref{prop:coord-supply}) \\
Structural boundary (\S\ref{sec:boundaries})
  & holds; rate-independent & holds; rate-independent & holds; rate-independent \\
Attrition, (M1) (Lem.~\ref{lem:persistmax})
  & changes: effective load, last class can reverse (Prop.~\ref{prop:coord-loss})
  & holds; degenerate
  & changes: permanent survival replaces $W_s$ (Prop.~\ref{prop:coord-supply}) \\
Fragmentation, (M3) (Thm.~\ref{thm:thinbridge})
  & fails at $\rho_S>0$; changes at $\rho_S=0$ (Prop.~\ref{prop:coord-loss})
  & holds; degenerate
  & changes: closed form at finite $t$, no limit one (Prop.~\ref{prop:coord-supply}) \\
Concentration, (M2) (Prop.~\ref{prop:routeselector})
  & changes: a second route at $\beta=0$ (Prop.~\ref{prop:coord-lossdependence})
  & changes: $\beta$ sufficient, not necessary (ibid.)
  & changes: only Prop.~\ref{prop:simultaneous} is unrestricted \\
Support of exit dependence
  & holds: only the sign of $\rho_S\nu$ matters (ibid.)
  & holds: $\rho_S=0$ gives independence
  & changes: $E_C(0)$ becomes $E_\cap$ (Rem.~\ref{rem:coord-support}) \\
\bottomrule
\end{tabular}
\end{center}

\begin{proposition}[Marginal invariance at the level of laws]\label{prop:coord-lawinvariance}
Fix $\lambda>0$, $\nu,\rho\ge0$ and a deterministic initial state
$(R_0,\varphi(0))$.  For each $i\in P$ the law of
$\bigl(\varphi_i(t),\xi_i(t)\bigr)_{t\ge0}$ is the same for every
$(\rho_S,\beta)\in[0,1]^2$.
\end{proposition}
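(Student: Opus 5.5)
The plan is to show that, for a fixed actor $i$, the pair $X_i(t):=\bigl(\varphi_i(t),\xi_i(t)\bigr)$ is a function of the full Markov process that is itself Markov (Dynkin's lumpability criterion). Its generator on the finite space $\{0,1\}^{G}\times\{0,1\}$ should involve only $\nu$, $\rho$ and $\lambda$. Uniqueness of the law of a finite-state Markov chain with given generator and given initial point then yields the claim, since the initial point $X_i(0)$ is fixed by the deterministic $(R_0,\varphi(0))$.

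\emph{Step 1: read off every transition of the full chain that moves a coordinate of $i$.} I would go through Definition~\ref{def:law} clock by clock, using that every loss, supply and exit clock touching $i$ requires $i\in R(t)$ through $P_g(t)$ or $D_g(t)$. Consequently, once $\xi_i=0$ no clock alters $(\varphi_i,\xi_i)$, so the state is absorbing. While $\xi_i=1$, the clocks act as follows.

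\begin{itemize}
\item For $g\in s_i$ with $\varphi_{ig}=1$, the shared clock of $g$ is running, since $i\in P_g(t)\ne\emptyset$. Its firing changes $i$'s coordinates only by $\varphi_{ig}\to0$. The idiosyncratic clock does the same, so $\varphi_{ig}$ drops at total rate $\rho_S\nu+(1-\rho_S)\nu=\nu$. This is Lemma~\ref{lem:marginalinvariance}.
\item For $g\in s_i$ with $\varphi_{ig}=0$, supply flips $\varphi_{ig}\to1$ at rate $\rho$.
\item For each such deficient $g$, exit occurs by the idiosyncratic clock at rate $(1-\beta)\lambda$ or the axis-common clock at rate $\beta\lambda$. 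The latter runs because $i\in D_g(t)$. Either firing changes $i$'s coordinates only by $\xi_i\to0$, so the total exit rate is $\lambda d_i$.
\item Coordinates $g\notin s_i$ stay at $0$.
\end{itemize}

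\emph{Step 2: Dynkin's condition.} For any $f$ on the small space, set $F(x):=f(X_i(x))$ on the full state space. I would compute $\mathcal L F(x)$ by summing over clocks and check that the result equals $\mathcal L_i f(X_i(x))$. Here $\mathcal L_i$ is the generator with the rates of Step 1, namely $\nu$ per satisfied required axis, $\rho$ per deficient required axis, and $\lambda d_i$ to the absorbing dead state.

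The point to verify carefully is that a multi-actor firing never moves two coordinates of $i$ at once. A shared loss on axis $g$ touches only $\varphi_{\cdot g}$. A common exit on $g$ removes $i$ but leaves $\varphi$ unchanged, because exit does not alter $\varphi$. So the per-clock contributions to the change in $F$ depend on $x$ only through $X_i(x)$, and the rates add as in Step 1. Clocks not touching $i$ contribute zero. Since $\mathcal L F=(\mathcal L_i f)\circ X_i$ for all $f$, the Dynkin (strong lumpability) criterion for finite-state chains makes $X_i$ Markov with generator $\mathcal L_i$.

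\emph{Step 3: conclude.} Because $\mathcal L_i$ is free of $(\rho_S,\beta)$ and $X_i(0)$ is deterministic, the law of $(X_i(t))_{t\ge0}$ on path space is the same for all $(\rho_S,\beta)\in[0,1]^2$. The same argument covers $\nu=0$ or $\rho=0$, where some rates simply vanish.

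The main obstacle I expect is Step 2, in particular the bookkeeping that rates for $i$'s coordinates never depend on other actors. The existence conditions of the axis clocks, $P_g\neq\emptyset$ and $D_g\neq\emptyset$, could in principle introduce such dependence, but they are automatically met whenever the clock can affect $i$. I would state this explicitly as the single fact that neutralizes all coupling among actors.
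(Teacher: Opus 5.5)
Your proposal is correct and follows essentially the paper's proof. Both compute the generator clock by clock on functions of $(\varphi_i,\xi_i)$, using that the enabling conditions $P_g\neq\emptyset$ and $D_g\neq\emptyset$ hold automatically whenever a clock can move $i$, so the rates combine to $\nu$, $\rho$ and $\lambda d_i$ free of $(\rho_S,\beta)$; your appeal to Dynkin's lumpability criterion just makes explicit the step the paper states as ``the law is determined by this generator.''
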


\begin{proof}
See Appendix~\ref{app:proofs-coordinates}.
\end{proof}

Lemma~\ref{lem:marginalinvariance} fixes rates, not laws; the proof here uses
that the rates of Definition~\ref{def:law} are constants.

\begin{proposition}[\texorpdfstring{$\beta=0$}{beta=0} alone does not make exit times independent]\label{prop:coord-lossdependence}
Fix $\rho=0$, $\beta=0$, $\nu>0$ and a deterministic $\varphi(0)$.  For $i\in P$
put $A_i:=\{\,g\in s_i:\varphi_{ig}(0)=1\,\}$, and for a pair $i\neq j$ put
$C:=A_i\cap A_j$ and
$D(s,t):=\Pr(T_i\le s,T_j\le t)-\Pr(T_i\le s)\Pr(T_j\le t)$.  If $C=\emptyset$
or $\rho_S=0$, then $T_i$ and $T_j$ are independent.  If $C\neq\emptyset$ and
$\rho_S>0$, then $D(s,t)>0$ for every $s,t>0$.
\end{proposition}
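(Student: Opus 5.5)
The plan is to realize the process on an explicit family of independent Poisson clocks and read $T_i$ off as a deterministic function of the clocks that can touch $i$. At $\beta=0$ and $\rho=0$ the only active clocks are three. The idiosyncratic loss clock of each incidence runs at rate $(1-\rho_S)\nu$. The shared loss clock of each axis runs at rate $\rho_S\nu$. The idiosyncratic exit clock of each deficient incidence runs at rate $\lambda$. I would take one homogeneous Poisson process per axis for the shared loss, running at all times, with a firing having an effect only on actors then in $P_g(t)$. This coupling has the law of Definition~\ref{def:law}, because a clock that is ``off'' while $P_g=\emptyset$ and one that fires to no effect produce the same jump process.

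Since $\rho=0$, an incidence $(i,g)$ with $g\in A_i$ stays satisfied until its first loss. Actor $i$ lies in $P_g(t)$ from time $0$ until that loss or until its exit. The shared clock of $g$ therefore acts on $i$ only through its first firing time $\sigma_g\sim\mathrm{Exp}(\rho_S\nu)$. With the convention $\sigma_g=\infty$ when $\rho_S=0$, this gives
\[
  T_i=F_i\bigl(\text{own idiosyncratic loss and exit clocks},\ (\sigma_g)_{g\in A_i}\bigr),
\]
where the ``own'' clocks are indexed by incidences of $i$ and so are disjoint from those of $j$. Deficient incidences at time $0$ enter $F_i$ only through $i$'s own exit clocks.

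\textbf{Independence cases.} If $C=\emptyset$, the arguments of $F_i$ and $F_j$ are disjoint subfamilies of independent clocks, so $T_i$ and $T_j$ are independent. If $\rho_S=0$, every $\sigma_g=\infty$ almost surely, and the same conclusion follows.

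\textbf{Positive dependence.} Suppose $C\neq\emptyset$ and $\rho_S>0$. I would condition on $\sigma_C:=(\sigma_g)_{g\in C}$. Given $\sigma_C$, the remaining arguments of $F_i$ and $F_j$ are independent, so
\[
  D(s,t)=\operatorname{Cov}\bigl(f(\sigma_C),h(\sigma_C)\bigr),
  \qquad f:=\Pr(T_i\le s\mid\sigma_C),\quad h:=\Pr(T_j\le t\mid\sigma_C).
\]
Next I show that $f$ and $h$ are nonincreasing in each coordinate $\sigma_g$, by a pathwise coupling. Lowering $\sigma_g$ from $y$ to $x<y$ can only make $g$ deficient for $i$ earlier, so $d_i(\tau)$ is pointwise at least as large. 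With the exit clocks realized as a marking of a common Poisson process, the exit time is pointwise no later. Harris' inequality for product measures on the independent coordinates of $\sigma_C$ then gives $D\ge0$.

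\textbf{Strictness (the main obstacle).} Fix $g_0\in C$ and condition on the other coordinates and on the other clocks. The step I expect to be hardest is showing that $f$ is strictly decreasing in $\sigma_{g_0}$ on $(0,s)$ with positive probability. Take $x<y<s$ and consider the event that no other loss of $(i,g_0)$ occurs before $y$ and that $i$ has not exited before $x$; this event has positive probability. On it, $d_i$ is strictly larger on $[x,y)$ under $\sigma_{g_0}=x$, so an exit clock of rate $\lambda$ on $(i,g_0)$ firing in $[x,y)$ occurs with positive probability only in the $x$-world. The same argument applies to $h$ on $(0,t)$.

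Both conditional functions of the single variable $\sigma_{g_0}$ are then nonincreasing, strictly on $(0,\min(s,t))$, which has positive $\mathrm{Exp}(\rho_S\nu)$-measure. The one-dimensional Chebyshev covariance inequality, written as $\tfrac12\mathbb E[(f(X)-f(X'))(h(X)-h(X'))]$ with $X'$ an independent copy, is therefore strictly positive. Averaging over the remaining coordinates with Harris gives $D(s,t)>0$ for every $s,t>0$.
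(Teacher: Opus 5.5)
Your proposal is correct and follows essentially the paper's own route: the shared-loss clocks enter only through their first firing times, $T_i$ and $T_j$ are conditionally independent given those times, coordinatewise monotonicity plus the Harris--Chebyshev association inequality gives $D\ge0$, and strictness comes from a single coordinate $g_0\in C$ through the law of total covariance. The differences are cosmetic. You condition on $(\sigma_g)_{g\in C}$ rather than on every axis, and you prove strict monotonicity by a Poisson coupling, whereas the paper writes $T_i=\inf\{t:H_i(t)\ge E_i\}$ with $E_i\sim\mathrm{Exp}(1)$, so that the conditional survival probability is $e^{-H_i(s)}$ and strictness can be read off directly.
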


\begin{proof}
See Appendix~\ref{app:proofs-coordinates}.
\end{proof}

So $\beta>0$ is sufficient and not necessary for the coupling in
Proposition~\ref{prop:supporthazard}: at $\beta=0$ a shared satisfied axis with
$\rho_S>0$ already couples exit times.  The coordinate carrying temporal
concentration remains $\beta$, sharp at the endpoint
(Proposition~\ref{prop:routeselector}) on the slice, where $\rho_S$ degenerates
(Remark~\ref{rem:gauge}); the loss-side route needs $\nu>0$ and is a second
sufficient condition, not a substitute.

\begin{proposition}[Supply]\label{prop:coord-supply}
Fix $\nu=0$, $\beta=0$ and $\rho>0$, so that $\rho_S$ degenerates
(Remark~\ref{rem:gauge}), and put $q:=\rho/(\lambda+\rho)$ and
$h(t):=\bigl(\rho+\lambda e^{-(\lambda+\rho)t}\bigr)/(\lambda+\rho)$.
(i) An actor with $L$ deficient incidences at time zero has
$\Pr(T>t)=h(t)^{L}$ and permanent survival probability $\Pr(T=\infty)=q^{L}$,
both strictly decreasing in $L$, and the lifetimes of distinct actors are
mutually independent because $\nu=0$ and $\beta=0$, not because of the
frozen slice.
(ii) Class $s$ dies out with probability $(1-q^{L_s})^{n_s}$, so no class dies
out in the limit $n_s\to\infty$; the expected number of permanent survivors
of $s$ is $n_sq^{L_s}$, so as every $n_s\to\infty$ their composition is
proportional to $n_sq^{L_s}$, in which the mixed class is under-represented
against pure $A$ by the factor $q^{\,L_{AB}-L_A}$; and, with $S$, the loads and
$(n_s)_{s\notin\mathcal A}$ fixed and $n_s\to\infty$ for $s\in\mathcal A$,
$\Pr[\mathrm{LAST}(\mathcal A)]\to\prod_{s\notin\mathcal A}(1-q^{L_s})^{n_s}$.
(iii) The conclusion of Proposition~\ref{prop:enrichment} holds at every
reference time $\tau\ge0$ and horizon $t>0$; with $\ell_A=\ell_B=1$ and the three
classes equally numerous, $E_{AB}$ increases in $\tau$ to $(2+q)/(1+q)$ at every
horizon.
(iv) With $\ell_A=\ell_B=\ell$, the probability that at time $t$ the mixed class
is extinct and both pure classes present is
\[
  P_{\mathrm{frag}}(t)=\bigl(1-h(t)^{2\ell}\bigr)^{n_{AB}}
    \bigl(1-(1-h(t)^{\ell})^{n_A}\bigr)\bigl(1-(1-h(t)^{\ell})^{n_B}\bigr),
\]
and at fixed $\ell$, $\sup_tP_{\mathrm{frag}}(t)\le1-q^{2\ell}<1$.
(v) Only exit removes edges: $E_C(t_1)\cap\binom{R(t_2)}{2}\subseteq E_C(t_2)$
for $t_1\le t_2$, and on the permanent survivors $E_C$ is nondecreasing and
increases to the restriction of $E_\cap$ to them.
\end{proposition}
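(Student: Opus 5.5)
The whole proof rests on one observation: with $\nu=0$ and $\beta=0$ the only clocks are per-incidence supply clocks at rate $\rho$ and per-incidence idiosyncratic exit clocks at rate $\lambda$. An actor's trajectory is therefore driven by clocks indexed by its own incidences. So I first fix one actor and reduce everything to $L$ independent two-state chains.

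\emph{Single actor.} Each deficient incidence $(i,g)$ alternates between deficient and satisfied, but satisfied is absorbing for that incidence because $\nu=0$. While deficient, it races a supply clock at rate $\rho$ against an exit clock at rate $\lambda$. The actor exits at the first exit firing among its incidences. A given incidence's exit clock has not fired by $t$ exactly when either (a) no firing occurred in $[0,t]$, or (b) supply occurred first at some time before $t$ and the incidence was thereafter silent. Computing this gives $\Pr(\text{incidence never triggers exit by } t)=e^{-(\lambda+\rho)t}+\int_0^t\rho e^{-(\lambda+\rho)x}\,dx=h(t)$. Independence of the clocks across incidences, together with the fact that satisfied incidences never return to deficiency, gives $\Pr(T>t)=h(t)^L$. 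Letting $t\to\infty$ yields $q^L$. Strict monotonicity in $L$ follows because $0<h(t)<1$ for $t>0$ and $0<q<1$. Mutual independence across actors holds because no clock in Definition~\ref{def:law} touches two actors once $\rho_S\nu=0$ and $\beta=0$. I will stress that this uses only $\nu=0,\beta=0$, not shared states.

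\emph{Class-level statements.} Part (ii) follows directly from independence. Extinction of a class has probability $\Pr(\text{all } n_s \text{ exit})=(1-q^{L_s})^{n_s}$, which tends to $0$. The number of permanent survivors is binomial, with mean $n_sq^{L_s}$. Laws of large numbers for the binomials give proportions $\propto n_sq^{L_s}$, and the ratio of AB to A per capita is $q^{L_{AB}-L_A}$. For $\mathrm{LAST}(\mathcal A)$, I will use the fact that eventually every non-permanent actor exits. So as $t\to\infty$ the event becomes "every $s\notin\mathcal A$ has no permanent survivor and every $s\in\mathcal A$ has at least one". I need some care here, since LAST is defined at the last exit time outside $\mathcal A$. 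The probability is $\prod_{s\notin\mathcal A}(1-q^{L_s})^{n_s}\prod_{s\in\mathcal A}(1-(1-q^{L_s})^{n_s})$, and the second factor tends to $1$.

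For (iv), independence again gives the displayed product, using $L_{AB}=2\ell$. The bound comes from the first factor: $(1-h^{2\ell})^{n_{AB}}\le1-h^{2\ell}\le1-q^{2\ell}$, since $h(t)\ge q$ ($h$ decreases from $1$ to $q$).

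Part (v) holds because without loss clocks $\varphi$ only moves $0\to1$. So adjacency through a satisfied shared axis persists until one endpoint exits. For permanent survivors, every required incidence is eventually satisfied almost surely (each deficient incidence is eventually either supplied or triggers exit). Hence the edges grow to all intersecting pairs.

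\emph{Main obstacle: (iii).} I need the conditional law at reference time $\tau$. Given survival to $\tau$, an actor's remaining deficient count is random. Still, conditioned survival to $\tau$ factorizes across incidences: each incidence independently is satisfied with probability $\rho(1-e^{-(\lambda+\rho)\tau})/((\lambda+\rho)h(\tau))$ and otherwise still deficient. So the residual survival function is $(h(\tau+t)/h(\tau))^{L_s}=:g_\tau(t)^{L_s}$. Enrichment then reduces to showing $1-g^{L_{AB}}>1-g^{L_A}$ for $g\in(0,1)$, weighted as in Proposition~\ref{prop:enrichment}. Because $p_{AB}$ is the strict maximum, the weighted mean of the $p_s$ is below $p_{AB}$ whenever $\pi_{AB}<1$.

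For the explicit case $\ell_A=\ell_B=1$ with equal $n$, the at-risk composition at $\tau$ is proportional to $(h^{2}(\tau),h(\tau),h(\tau))$. This gives $E_{AB}=(1-g^2)(h+2)/[(h^2)(1-g^2)/h^{\,}\cdots]$. I would simplify this to $E_{AB}=(1+g)(h+2)/(h(1+g)+2)$ and check monotonicity in $\tau$ as $h(\tau)\downarrow q$ and $g\to$ its limit. I expect this algebra, and pinning down the limit $(2+q)/(1+q)$ at every horizon, to be the delicate step. The limit suggests that as $\tau\to\infty$ the surviving deficiency structure matters through $g\to1$ with ratio $(1-g^2)/(1-g)\to2$. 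I would verify that directly.
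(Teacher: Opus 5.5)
Your proposal follows the paper's proof essentially step for step. It uses the per-incidence race giving $h(t)$, and independence from the inactivity of the shared clocks, in (i); binomial counts in (ii); the product bound with $h\ge q$ in (iv); and monotonicity of $\varphi$ in (v). In (iii) your $E_{AB}=(1+g)(2+h)/(2+h(1+g))$ is exactly the paper's $u(2+h)/(2+hu)$, where $u=1+g=2-\gamma(\tau)(1-h(t))$ and $\gamma(\tau)=e^{-(\lambda+\rho)\tau}/h(\tau)$.

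One step in (ii) is wrong as stated, though easily repaired. The product $\prod_{s\notin\mathcal A}(1-q^{L_s})^{n_s}\prod_{s\in\mathcal A}\bigl(1-(1-q^{L_s})^{n_s}\bigr)$ is not $\Pr[\mathrm{LAST}(\mathcal A)]$ but only a lower bound. $\mathrm{LAST}(\mathcal A)$ is read at the finite time of the last exit outside $\mathcal A$, and a class in $\mathcal A$ can be present then through actors that die later. To obtain the limit you also need the upper bound. $\mathrm{LAST}(\mathcal A)$ forces every actor outside $\mathcal A$ to exit, an event of probability $\prod_{s\notin\mathcal A}(1-q^{L_s})^{n_s}$. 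The resulting sandwich, whose factor over $\mathcal A$ tends to $1$, is the paper's argument.

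In (iii) the monotonicity you defer takes two lines. The partial derivatives are $\partial_uE_{AB}=2(2+h)/(2+hu)^2>0$ and $\partial_hE_{AB}=2u(1-u)/(2+hu)^2<0$, the latter because $u>1$. As $\tau$ grows, $\gamma(\tau)$ decreases, so $u$ increases, and $h(\tau)$ decreases; both changes raise $E_{AB}$. The limits $u\to2$ and $h\to q$ give $(2+q)/(1+q)$ at every horizon.
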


\begin{proof}
See Appendix~\ref{app:proofs-coordinates}.
\end{proof}

Classes are ordered by $q^{L_s}$: (M1) keeps its ordering, while
$\log n_s/(\lambda L_s)$ is no longer the appropriate scale.  Part (iv) is the
finite-$t$ analogue of Theorem~\ref{thm:thinbridge}, but its supremum stays below one at fixed $\ell$, so the limit statement of that theorem has no analogue unless $\ell\to\infty$, which its hypothesis (ii) forbids; and the double limit does not commute: $\rho\to0$ at fixed multiplicities
recovers that proof, $n\to\infty$ at fixed $\rho>0$ does not.  Part (v) yields
the internal cohesion that hypothesis (iii) of that theorem supplies, but only
in the limit and only among the permanent survivors; a clique needs $P_A=1$
in addition (Proposition~\ref{prop:coverbounds}).

\begin{proposition}[Loss]\label{prop:coord-loss}
Fix $\rho=0$, $\beta=0$, $\nu>0$ and $\rho_S\in[0,1]$.  For an actor $i$ put
$k_i:=\sum_{m\in s_i}|\mathcal G_m|$, the number of axes $i$ requires, and
$L_i:=d_i(0)$, write $\widehat\nu:=\min(\lambda,\nu)$, and
define the \emph{effective load}
$L^{\mathrm{eff}}_i:=L_i+(k_i-L_i)\widehat\nu/\lambda\in[L_i,k_i]$; for a class
$s$ whose actors agree on both counts write $k_s,L_s,L^{\mathrm{eff}}_s$.
(i) $\Pr(T_i>t)=e^{-\lambda L_it}B(t)^{\,k_i-L_i}$, where
$B(t)=(\lambda e^{-\nu t}-\nu e^{-\lambda t})/(\lambda-\nu)$, equal to
$e^{-\lambda t}(1+\lambda t)$ at $\nu=\lambda$; the marginal does not depend on
$\rho_S$, a case of Proposition~\ref{prop:coord-lawinvariance}, and
$-t^{-1}\log\Pr(T_i>t)\to\lambda L^{\mathrm{eff}}_i$.
(ii) If $L_j\ge L_i$ and $k_j-L_j\ge k_i-L_i$ then $T_j\le_{\mathrm{st}}T_i$;
signature inclusion alone does not give the ordering.
(iii) With $M_s$ the time of the last exit in a class of $n_s$ actors,
$M_s=\log n_s/(\lambda L^{\mathrm{eff}}_s)+O_P(1)$ at $\rho_S=0$ and
$\nu\neq\lambda$, with a further $O_P(\log\log n_s)$ at $\nu=\lambda$; and
$M_s=\log n_s/(\lambda k_s)+O_P(1)$ at $\rho_S>0$.
(iv) For $\nu\neq\lambda$, Lemma~\ref{lem:persistmax} survives with $L_s$
replaced by $L^{\mathrm{eff}}_s$ outside $\mathcal A$, at the cost of a constant
$C_s$ multiplying the first sum, and by $k_s$ inside: for every $x>0$,
\[
  \Pr[\mathrm{LAST}(\mathcal A)]\ \ge\
    1-\sum_{s\notin\mathcal A}n_sC_se^{-\lambda L^{\mathrm{eff}}_sx}
     -\sum_{s\in\mathcal A}\exp\bigl(-n_se^{-\lambda k_sx}\bigr),
\]
with separation condition $\min_{\mathcal A}\log n_s/(\lambda k_s)
-\max_{\mathcal A^{c}}\log n_s/(\lambda L^{\mathrm{eff}}_s)\to\infty$.  Two
scales meet: the condition measures $\mathcal A$ in $\log n_s/(\lambda k_s)$, the
scale of the class maximum under shared loss, and its complement in
$\log n_s/(\lambda L^{\mathrm{eff}}_s)$, the scale the marginal tails give; the
two coincide at $\nu>\lambda$, where $L^{\mathrm{eff}}_s=k_s$; at $\nu=\lambda$ the
outer tail carries the factor $(1+\lambda x)^{k_s-L_s}$, a shift of order
$\log\log n_s$ outside the condition.  The last surviving class can differ
from the one selected on the slice; the convention $W_s:=\infty$ at
$L_s=0$ does not survive, a class deficient on no axis at time zero being now
mortal.
(v) In the construction used in the proof, $\Theta':=\max\min(\sigma_g,
\varepsilon_{ig})$ over the incidences $(i,g)$ with $g\in s_i$ and
$\varphi_{ig}(0)=1$ is almost surely finite, and $E_C(t)=\emptyset$ for every
$t\ge\Theta'$.
(vi) If $\rho_S>0$ then $\Theta'\le\max_g\sigma_g$, which does not depend on the
multiplicities, while by (iii) the last mixed exit diverges as
$n_{AB}\to\infty$; hence along a sequence satisfying the separation condition of
(iv) at $\mathcal A=\{A,B\}$ and containing at least two pure-$A$ survivors at that
time, the probability that the surviving pure classes are internally cohesive
when the bridge is exhausted tends to zero.
\end{proposition}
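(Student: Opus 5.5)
The plan is to build everything on an explicit coupling construction. For each axis $g$ take an independent shared clock $\sigma_g\sim\mathrm{Exp}(\rho_S\nu)$, equal to $\infty$ at $\rho_S=0$. For each incidence $(i,g)$ take an idiosyncratic loss time $\varepsilon_{ig}\sim\mathrm{Exp}((1-\rho_S)\nu)$. Since $\rho=0$ there is no resupply, so incidence $(i,g)$ with $\varphi_{ig}(0)=1$ becomes deficient at $\min(\sigma_g,\varepsilon_{ig})\sim\mathrm{Exp}(\nu)$. One caveat: the shared clock only acts on $P_g(t)$, i.e.\ on survivors. Since $\varphi$ of a departed actor is irrelevant to its lifetime, I will simply let the clocks run past exit; this does not change the law of $(\varphi_i,\xi_i)$ while alive.

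At $\beta=0$ each deficient incidence carries an independent $\mathrm{Exp}(\lambda)$ exit clock started at the moment it becomes deficient. Hence
\[
  T_i=\min_{g\in s_i}\bigl(\tau_{ig}+\eta_{ig}\bigr),
\]
where $\tau_{ig}=0$ on initially deficient axes and $\tau_{ig}=\min(\sigma_g,\varepsilon_{ig})$ otherwise, and the $\eta_{ig}$ are i.i.d.\ $\mathrm{Exp}(\lambda)$. For fixed $i$ the variables $\tau_{ig}$ are independent across $g$, because $\sigma_g$ is per axis. So
\[
  \Pr(T_i>t)=e^{-\lambda L_it}\,\Pr(\tau+\eta>t)^{k_i-L_i},
\]
and the hypoexponential convolution gives the stated $B(t)$, with the $\nu=\lambda$ limit obtained by l'H\^opital. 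The limit of $-t^{-1}\log$ follows because $B(t)\sim c\,e^{-\min(\lambda,\nu)t}$, with a polynomial factor at equality. This gives (i).

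For (ii) I will use coordinatewise coupling. $T$ is a minimum of independent variables, each $\mathrm{Exp}(\lambda)$ or $\tau+\eta$, and $\eta\le_{\mathrm{st}}\tau+\eta$. Actor $j$ has at least as many of each kind, and moreover $\mathrm{Exp}(\lambda)$ can replace $\tau+\eta$, so $T_j\le_{\mathrm{st}}T_i$. To show that inclusion alone fails, I will exhibit $s\subseteq s'$ where $s'$ adds masks whose axes are all satisfied while $s$ has a deficient axis that $s'$ lacks. No such axis exists under inclusion, so instead I take actors with different initial deficiency counts. Because agreement within a class is only required within the class, two signatures $s\subsetneq s'$ may have different $L$ once $\varphi$ is no longer shared. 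So pick $L_{s}=1$, $k_s=1$ and $L_{s'}=0$, $k_{s'}=2$ with $\nu$ small, and compare the tails.

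For (iii) at $\rho_S=0$ the lifetimes within a class are i.i.d. with tail $\sim c\,e^{-\lambda L^{\mathrm{eff}}t}$, so the Gumbel-type argument of Remark~\ref{rem:leadingorder} gives $\log n/(\lambda L^{\mathrm{eff}})+O_P(1)$; the polynomial factor at $\nu=\lambda$ yields the $\log\log$ shift. At $\rho_S>0$ I will condition on $\{\sigma_g\}$. After $\max_g\sigma_g$, which is $O_P(1)$ and free of $n$, every axis is deficient for every survivor. Conditionally the actors are independent with rate $\lambda k_s$ beyond that time, which gives $\log n/(\lambda k_s)+O_P(1)$. For (iv) I will redo the union bound of Lemma~\ref{lem:persistmax}. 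The first sum uses $\Pr(T>x)\le C_se^{-\lambda L^{\mathrm{eff}}x}$ from (i). For the second, a survivor of $\mathcal A$ is needed, and lower-bounding $\Pr(T>x)\ge e^{-\lambda k_sx}$ conditionally on $\sigma$ makes the survivors i.i.d.\ given $\sigma$, so that $\Pr(\text{none survive}\mid\sigma)\le\exp(-n_se^{-\lambda k_sx})$. The separation condition is then handled by choosing $x$ between the two scales, as in the lemma.

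Part (v) follows because every initially satisfied incidence is lost at a finite time, and afterward no edge of $E_C$ can be supported. For (vi), $\Theta'\le\max_g\sigma_g$ holds since $\min(\sigma_g,\varepsilon_{ig})\le\sigma_g$. The separation condition forces the bridge exhaustion time to diverge in probability by (iii), hence it exceeds $\Theta'$ with probability $\to1$. At that time $E_C=\emptyset$, so two pure-$A$ survivors are not adjacent. The hardest step is (iv): the conditioning on the shared clocks must be reconciled with an unconditional union bound. I will carry out the argument conditionally on $\sigma$ and then integrate, absorbing the $\sigma$-dependence into the constants $C_s$.
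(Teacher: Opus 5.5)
Your route is the paper's. It uses one shared clock $\sigma_g$ per axis, idiosyncratic $\varepsilon_{ig}$, and losses allowed to run past exit. From there it proceeds through the product formula for (i), a pathwise coupling for (ii), an extreme-value sandwich for (iii), the union bound of Lemma~\ref{lem:persistmax} with $L^{\mathrm{eff}}_s$ outside $\mathcal A$ and $k_s$ inside for (iv), and, for (vi), $\Theta'$ of multiplicity-free law against a diverging bridge. Your competing-risks form $T_i=\min_g(\tau_{ig}+\eta_{ig})$ is equivalent to the paper's single $E_i\sim\mathrm{Exp}(1)$ against the cumulative hazard. It gives $\min_g\eta_{ig}\le T_i\le\max_g\sigma_g+\min_g\eta_{ig}$ pathwise, the analogue of the paper's $E_i/(\lambda k_s)\le T_i\le\Sigma+E_i/(\lambda k_s)$. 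That settles (iii) at $\rho_S>0$ in both directions (your conditioning sketch gives only the upper side) and the inner sum of (iv) without conditioning. The outer sum needs only marginals, so do not route it through $\sigma$ as your last sentence suggests: at $\rho_S=1$, $\nu<\lambda$, the best conditional constant for one satisfied axis is $e^{\nu\sigma_g}$, and $\mathbb{E}\,e^{\nu\sigma_g}=\infty$.

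Two steps fail as written. First, in (ii) you may not choose $\nu$ small, since $\nu>0$ is fixed in the statement. Your tail comparison works only when $B(t)^2\sim c^2e^{-2\widehat\nu t}$ dominates $e^{-\lambda t}$, that is for $\nu\le\lambda/2$; for larger $\nu$ the tail of $T_j$ is the lighter one. Compare near $t=0$ instead: $B(0)=1$ and $B'(0)=0$, so $B(t)^2=1-O(t^2)>e^{-\lambda t}$ for small $t>0$ at every $\nu>0$. This is the paper's argument, with your pair of actors.

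Second, in (vi) the separation condition at $\mathcal A=\{A,B\}$ does not make the bridge exhaustion time diverge. It holds with $n_{AB}=1$ fixed and $n_A,n_B\to\infty$, and then the last mixed exit $M_{AB}$ is $O_P(1)$, of the same order as $\Theta'$, so $\Pr(M_{AB}\ge\Theta')$ need not tend to one. Concretely, take $\rho_S=1$, the pure classes initially satisfied on an axis of their own mask, and the mixed actor initially deficient somewhere. Then the mixed actor leaves before those shared clocks fire with probability bounded away from zero, and the pure survivors are then cliques. The divergence comes from (iii) only through $n_{AB}\to\infty$, which the statement writes into the regime (``diverges as $n_{AB}\to\infty$'') and the paper's proof uses. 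You must carry it as a hypothesis along the sequence rather than derive it from separation; with it, your ending is the paper's.
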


\begin{proof}
See Appendix~\ref{app:proofs-coordinates}.
\end{proof}

Lifetimes are ordered by the effective load, which assigns weight $1$ to each
initially deficient axis and weight $\widehat\nu/\lambda$ to each initially
satisfied axis, and can disagree with $L_s$.  (M3) is no
longer a standing shape: the cohesion graph empties and fragmentation is visible only in a window, and (vi) is a counterexample to the
conclusion of Theorem~\ref{thm:thinbridge} outside its hypotheses, not to the
theorem.  At $\rho_S=0$ the loss clocks are unshared, a pure class atomizes
later the larger it is, and the cohesion clause survives under a second thin
condition against that growth.

\begin{remark}[Support of exit dependence under supply]\label{rem:coord-support}
At $\beta=0$, $\nu>0$ and $\rho_S>0$, the regime in which
Proposition~\ref{prop:coord-lossdependence} gives dependence, which pairs can have dependent
exit times is settled by which pairs can come to share a satisfied required
axis.  At $\rho=0$ satisfaction only decreases, so
that set is $E_C(0)$, and Proposition~\ref{prop:coord-lossdependence} identifies
it in both directions.  At $\rho>0$ a pair can acquire a common satisfied axis.  Enumeration over
$\pvCoordSupportEnum$ configurations (two actors, $r\le4$, every nonempty
signature and every initial satisfaction state) finds a state in which both are
alive and share a satisfied required axis reachable exactly when
$s_i\cap s_j\neq\emptyset$, so the reachable set is $E_\cap$; no proof is
offered.  The enumeration settles the reachable set, and the dependence on it is
Proposition~\ref{prop:coord-lossdependence}, stated at $\rho=0$.
\end{remark}

Propositions~\ref{prop:coord-lawinvariance} and \ref{prop:coord-lossdependence}
hold in a class wider than Definition~\ref{def:law}: rates
$\nu_{ig},\rho_{ig},\lambda_{ig}$ on incidences, and rates
$a_g\le\min_i\nu_{ig}$ and $b_g\le\min_i\lambda_{ig}$ on axes for the shared
loss and the axis-common exit.  Definition~\ref{def:law} is the specialization
$a_g=\rho_S\nu$, $b_g=\beta\lambda$ at constant rates.  One statement does not
transfer: the ordering part of Proposition~\ref{prop:supporthazard} needs
$\lambda_{ig}=\lambda_g$ and the axis-sharing of Definition~\ref{def:frozen}, both supplied by Definition~\ref{def:law}; dropping the latter produces a reversal even when the former holds.

The boundary drawn earlier in this section is a boundary in $r$: the three
roles align by force only at $r=2$, and under the counting measure of
Definition~\ref{def:omega} the proportion of family--pair instances
$(S,\{s,t\})$ in which the join separates the endpoints vanishes as masks
multiply.  The boundary
here is a boundary in the coordinates: off the slice each morphology is either
replaced by a named substitute (the permanent-survival fraction for the
persistence scale, the effective load for the load) or lost on a named
configuration.

%% file: sections/07_observables.tex
\section{Observable implications}\label{sec:observables}

This section asks what a measurement of the latent process can show: what the
theory predicts about observations, and what an observation rules out about the
latent structure.

\subsection{Projection and exact readouts}

Two steps separate the latent state from the observables.  From
$(R(t),\varphi(t))$ and the requirement structure one forms the cohesion graph
$G_C(t)$ of Definition~\ref{def:cohesion}; from that graph, $\sigma$, $I$ and
$K$; $P_A$ is computed from the graph together with the requirement structure,
since it counts pairs in a common component that share a required mask,
satisfied or not.
Each step discards information: the projection forgets which axis supported an
adjacency, and the numbers forget the graph.  Non-transitivity
can enter at the first step: by Remark~\ref{rem:nontransitivity} each per-axis
relation is transitive on its satisfaction support, so a failure of transitivity
in $G_C(t)$ is produced by the disjunction over axes and need not correspond to
anything intransitive upstream.

\begin{proposition}[Complete pair cover implies clique components]\label{prop:endpoint}
Suppose that every axis required by at least two surviving actors is
satisfied at every surviving incidence requiring it, that is
$\varphi_{ig}=1$ for all such $g$ and all surviving $i$ with $g\in s_i$, and
suppose the surviving requirement system has complete pair cover, $P_A=1$.  Then every component of $G_C$ is a clique, so $G_C$ is a
disjoint union of complete graphs; consequently $M=Q$ and $K=1$.
\end{proposition}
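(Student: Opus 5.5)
The plan is to show that, under the two hypotheses, adjacency in $G_C$ coincides with signature intersection on pairs lying in a common component. Completeness of each component then follows from $P_A=1$, and $M=Q$ and $K=1$ follow by counting.

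\emph{Step 1: intersecting survivors are adjacent.} Take survivors $i\neq j$ in one component of $G_C$ with $s_i\cap s_j\neq\emptyset$. Choose a mask $m\in s_i\cap s_j$ and an axis $g\in\mathcal G_m$; the mask is nonempty as a block of axes. Then $g$ lies in both $s_i$ and $s_j$ in the axis sense, so $g$ is required by at least two surviving actors. The first hypothesis therefore gives $\varphi_{ig}=\varphi_{jg}=1$, and Definition~\ref{def:cohesion} puts $\{i,j\}\in E_C$. The converse, that adjacency implies intersection, is Lemma~\ref{lem:signatureintersection}. So on surviving pairs, $\{i,j\}\in E_C$ holds exactly when $s_i\cap s_j\neq\emptyset$.

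\emph{Step 2: $P_A=1$ gives cliques.} By Definition~\ref{def:observables}, $P_A=A/Q=1$ means $A=Q$. Since the pairs counted by $A$ form a subset of the pairs counted by $Q$, every pair in a common component shares a required mask. By Step~1, every such pair is adjacent, so each component is a clique and $G_C$ is a disjoint union of complete graphs. The degenerate case $Q=0$, where $P_A$ is undefined, should be noted: there every component is a singleton, and the conclusion about cliques holds trivially.

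\emph{Step 3: counting.} In a disjoint union of cliques, every adjacent pair lies in a common component, and by Step~2 every pair in a common component is adjacent. Hence the two pair sets coincide, so $M=Q$ and $K=M/Q=1$ whenever $Q>0$.

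The main point of care is Step~1. One must pass correctly between the mask form of signatures and actual axes, using that masks are nonempty blocks. One must also check that the hypothesis is applied only to axes required by two survivors, which is exactly the situation when $s_i\cap s_j\neq\emptyset$. Everything else is definitional.
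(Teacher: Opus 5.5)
Your proof is correct and is the paper's argument: $P_A=1$ gives every pair in a common component a shared required mask, the satisfaction hypothesis applied to an axis of that mask makes the pair adjacent by Definition~\ref{def:cohesion}, and so each component is a clique, $M=Q$ and $K=1$. Your added care (picking an actual axis in the nonempty mask, noting that $Q>0$ is implicit once $P_A=1$ is assumed) only spells out what the paper's proof leaves implicit. The converse direction via Lemma~\ref{lem:signatureintersection} is correct but not needed.
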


\begin{proof}
Let $i,j$ lie in the same component.  By $P_A=1$ they share a required mask, and
by hypothesis every axis of that mask is satisfied for both, so
Definition~\ref{def:cohesion} makes them adjacent.  Hence each component is
complete, $M=Q$, and $K=M/Q=1$.
\end{proof}

\noindent
Proposition~\ref{prop:endpoint} is the formal content of apparent cohesion:
when pair cover is complete and every axis still required is satisfied by
everyone who requires it, each component appears perfectly cohesive,
whatever the history that produced it.

The second result runs the other way: for a two-component state, $I$ is
determined by the relative imbalance, at the scale $\sqrt{S_N}$.

\begin{proposition}[Balance window]\label{prop:balancewindow}
Consider a sequence of two-component separation states of the type produced by
Theorem~\ref{thm:thinbridge}, with pure survivor counts $n_A^{(N)}$ and
$n_B^{(N)}$ and total $S_N=n_A^{(N)}+n_B^{(N)}$.  Assume $S_N\to\infty$ in
probability and put
\[
  Z_N:=\frac{n_A^{(N)}-n_B^{(N)}}{\sqrt{S_N}} .
\]
If $Z_N\Rightarrow Z$ with $\Pr(|Z|=1)=0$, then
$\Pr(I_N<\tfrac12)\to\Pr(|Z|<1)$.  Hence any nondegenerate limiting imbalance
law with $0<\Pr(|Z|<1)<1$ gives a nondegenerate limiting probability that the
integration observable falls below one half.
\end{proposition}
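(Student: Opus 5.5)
The proof is an explicit computation of $I_N$ followed by the continuous mapping theorem. In a two-component separation state of the type produced by Theorem~\ref{thm:thinbridge}, the survivors are $n_A$ pure-$A$ actors and $n_B$ pure-$B$ actors, lying in distinct components of $G_C$. By hypothesis (iii) of that theorem each component is internally cohesive, hence connected. So a pair is mutually reachable exactly when both members have the same pure signature. Writing $S=S_N$ and dropping the superscript,
\[
  I_N=\frac{\binom{n_A}{2}+\binom{n_B}{2}}{\binom{S}{2}} .
\]

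The first step is to rewrite the event $I_N<\tfrac12$ algebraically. Multiplying out gives the condition $n_A(n_A-1)+n_B(n_B-1)<\tfrac12 S(S-1)$. Using $n_A^2+n_B^2=\tfrac12\bigl(S^2+(n_A-n_B)^2\bigr)$, the left side equals $\tfrac12\bigl(S^2+D^2\bigr)-S$, where $D:=n_A-n_B$. The inequality then becomes $S^2+D^2-2S<S^2-S$, that is $D^2<S$, that is $|Z_N|<1$. The identity $\{I_N<\tfrac12\}=\{|Z_N|<1\}$ is therefore exact, with no error term, whenever $S_N\ge2$.

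The second step is the limit. The event $\{S_N\ge2\}$ has probability tending to one because $S_N\to\infty$ in probability. Hence $\Pr(I_N<\tfrac12)$ and $\Pr(|Z_N|<1)$ differ by $o(1)$. The set $\{z:|z|<1\}$ has boundary $\{|z|=1\}$, which is $Z$-null by assumption, so the portmanteau theorem gives $\Pr(|Z_N|<1)\to\Pr(|Z|<1)$. The final sentence of the proposition is then immediate: whenever $0<\Pr(|Z|<1)<1$, the limit lies strictly between $0$ and $1$, so it is nondegenerate.

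The main obstacle is pinning down what "two-component separation state" supplies. The formula for $I_N$ needs exactly two components, namely all pure-$A$ survivors in one and all pure-$B$ survivors in the other, and no surviving mixed actor. I would state this reading explicitly: the state is the one at the bridge-exhaustion time of Theorem~\ref{thm:thinbridge}, where the theorem guarantees separation and internal cohesion. I would also note that $I$ depends only on the component partition, not on the edge count. So the same argument applies to any state with exactly two connected components of sizes $n_A,n_B$, and I would state the result at that generality.
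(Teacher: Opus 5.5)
Your proposal is correct and is essentially the paper's proof: you establish the exact identity $\{I_N<\tfrac12\}=\{(n_A-n_B)^2<S_N\}=\{|Z_N|<1\}$ on $\{S_N\ge2\}$, whose probability tends to one, and then apply the portmanteau theorem to the $Z$-continuity set $\{|z|<1\}$. Your explicit reading of the two-component state (all pure-$A$ survivors in one component, all pure-$B$ survivors in the other, no mixed survivor) is the reading the paper relies on implicitly when it writes $I_N$ as a ratio of binomial coefficients.
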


\begin{proof}
See Appendix~\ref{app:proofs-observables}.
\end{proof}

\begin{corollary}[A model-generated balance law]\label{cor:modelbalance}
Specialize Theorem~\ref{thm:thinbridge} to equal endpoint loads
$\ell_A=\ell_B=\ell>0$, equal initial pure-class sizes $N$, and a single actor
of the mixed class, and write $\alpha:=\lambda\ell$.  Then, with $Z_N$ and $I_N$ as in
Proposition~\ref{prop:balancewindow},
\[
  Z_N\;\Longrightarrow\;Z:=\sqrt{1-\theta}\;G,
  \qquad \theta\sim\mathrm{Beta}(2,1),\quad G\sim N(0,1),
  \quad \theta\perp G,
\]
and $\Pr(|Z|=1)=0$, so Proposition~\ref{prop:balancewindow} applies and
\[
  \Pr\bigl(I_N<\tfrac12\bigr)\;\longrightarrow\;\Pr(|Z|<1)
  \;=\;\mathbb{E}\Bigl[\,2\Phi\bigl((1-\theta)^{-1/2}\bigr)-1\,\Bigr]
  \;\in\;\bigl(2\Phi(1)-1,\;1\bigr),
\]
where $\Phi$ is the standard normal distribution function.  The expectation
evaluates to $\pvBalanceReadout$.
\end{corollary}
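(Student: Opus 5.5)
The plan is to make everything explicit by conditioning on the exit time of the single mixed actor, which is the instant $\tau$ at which the bridge is exhausted. On the frozen slice with $\beta=0$, Proposition~\ref{prop:supporthazard} makes all lifetimes mutually independent. Each pure actor has lifetime $\mathrm{Exp}(\lambda\ell)=\mathrm{Exp}(\alpha)$, and the mixed actor has $T_{AB}\sim\mathrm{Exp}(\lambda L_{AB})=\mathrm{Exp}(2\alpha)$. Put $\theta:=e^{-\alpha T_{AB}}$, the conditional survival probability of a pure actor at $\tau=T_{AB}$. A one-line computation gives
\[
\Pr(\theta\le x)=\Pr\bigl(T_{AB}\ge -\alpha^{-1}\log x\bigr)=x^{2},\qquad x\in(0,1),
\]
so $\theta\sim\mathrm{Beta}(2,1)$, and this law does not depend on $N$. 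Given $\theta$, the survivor counts $n_A^{(N)}$ and $n_B^{(N)}$ are independent $\mathrm{Bin}(N,\theta)$, by independence of the pure lifetimes from $T_{AB}$ and from one another.

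Next I derive the conditional limit. Since $\theta\in(0,1)$ almost surely, the conditional law of large numbers gives $S_N/(2N\theta)\to1$ in probability, so $S_N\to\infty$ in probability. Both pure classes are nonempty at $\tau$ with probability tending to one, so we are in the two-component separation state of Theorem~\ref{thm:thinbridge}, with cohesion supplied by its hypothesis (iii). For the imbalance, $n_A-n_B$ given $\theta$ is a sum of $N$ i.i.d.\ centred terms with variance $2\theta(1-\theta)$. The Lindeberg--L\'evy CLT gives
\[
\frac{n_A-n_B}{\sqrt{2N\theta}}\Rightarrow N(0,1-\theta)\quad\text{conditionally on }\theta.
\]
Slutsky's lemma then replaces $\sqrt{2N\theta}$ by $\sqrt{S_N}$.

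The step I expect to need the most care is passing from the conditional statements to the unconditional limit $Z_N\Rightarrow\sqrt{1-\theta}\,G$. I will do this with characteristic functions. For fixed $\xi$, let $\psi_N(\xi\mid\theta):=\mathbb E[e^{i\xi Z_N}\mid\theta]$. The conditional CLT and Slutsky's lemma give $\psi_N(\xi\mid\theta)\to e^{-\xi^2(1-\theta)/2}$ for almost every $\theta$; the Slutsky step has to be run conditionally, but that is harmless because $\theta$ is fixed there. Since $|\psi_N|\le1$, dominated convergence yields
\[
\mathbb E\,e^{i\xi Z_N}\to\mathbb E\,e^{-\xi^2(1-\theta)/2},
\]
which is the characteristic function of $\sqrt{1-\theta}\,G$ with $G\perp\theta$.

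The remaining claims follow directly. Given $\theta$, $Z$ is centred normal with variance $1-\theta>0$ almost surely, so its law is a mixture of continuous laws and $\Pr(|Z|=1)=0$. Proposition~\ref{prop:balancewindow} therefore applies, and conditioning on $\theta$ gives
\[
\Pr(|Z|<1)=\mathbb E\bigl[2\Phi\bigl((1-\theta)^{-1/2}\bigr)-1\bigr].
\]
For the interval, $(1-\theta)^{-1/2}>1$ almost surely and the map $y\mapsto2\Phi(y)-1$ is strictly increasing with values below $1$. Taking expectations gives the open interval $(2\Phi(1)-1,1)$. The numerical value $\pvBalanceReadout$ comes from a one-dimensional quadrature against the density $2x$ on $(0,1)$.
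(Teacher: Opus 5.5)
Your proposal is correct and follows the paper's proof essentially step for step: you condition on the mixed actor's exit time, identify $\theta=e^{-\alpha\tau}\sim\mathrm{Beta}(2,1)$, apply a conditional CLT and Slutsky to the independent $\mathrm{Bin}(N,\theta)$ counts, mix over $\theta$, and restrict to the high-probability event that both pure classes survive. Your characteristic-function and dominated-convergence argument spells out the mixing step that the paper states in a single phrase.
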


\begin{proof}
See Appendix~\ref{app:proofs-observables}.
\end{proof}

\noindent
The process supplies its own imbalance law, and under it the observable verdict
is neither forced nor excluded: a separated state reads out as poorly integrated,
$I_N<\tfrac12$, in a limiting fraction $\pvBalanceReadout$ of realizations.  The
projection can present a purified population as cohesive, and a fragmented one
as integrated or not according to a quantity the fragmentation does not fix.

\subsection{A comparative prediction}

The forward direction has two tiers.  The structural tier is exact: by
Lemma~\ref{lem:twobridges}, if two or more signatures in the realized family meet both
endpoints then no single class separates them, so the coincidence of
Section~\ref{sec:generation} is not structurally forced, and that is a theorem.
The empirical tier does not follow from the structural tier without (A1)--(A4).

\begin{description}
  \item[(A1)] Diversity of requirement patterns corresponds to $r$.
  \item[(A2)] The presence of two or more distinct mediators meeting both
    endpoints corresponds to $R_{s,t}\ge2$, the focal moderator.  It counts
    mediators, not routes: $R_{s,t}\le\kappa_{G[S]}(s,t)$ can be strict, since a
    detour raises local connectivity without supplying a second length-two
    mediator.
  \item[(A3a)] Brokerage is measured by a pair-specific mediation observable for
    the endpoints in question, of which a separator is the extreme case.
  \item[(A3b)] Vulnerability is measured by an exposure observable tracking
    $L_s$, within the scope of Proposition~\ref{prop:supporthazard}; off the frozen slice the exposure quantity is not $L_s$ (Section~\ref{sec:coordinates}).
  \item[(A4)] Where structural forcing is absent, the distribution over realized
    configurations assigns positive probability, other things equal, to
    configurations outside the coincidence-preserving subset.
\end{description}

\noindent
Forced coincidence in the sense of Section~\ref{sec:boundaries} says that
$u=s\vee t$ is the unique cut vertex, not that it carries the largest load, and
at general $r$ the two come apart even when every $\ell_m$ is positive.  The
comparator is $r=2$ with $\ell_A,\ell_B>0$, where the three roles align
(Corollary~\ref{cor:forcedcoincidence}): $L_{AB}=L_A+L_B$ exceeds both, so
the unique mediator is also the unique load maximizer.

Write $F$ for that regime and $\mathcal{R}$ for a redundant-mediation regime in
the sense of (A2).  Under (A1)--(A4), with outcome
$Y=\mathbf{1}\{\text{brokerage class}=\text{vulnerability class}\}$, the theory
predicts
\[
  \Pr\bigl(Y=1\mid\mathcal{R}\bigr)
  \;<\;
  \Pr\bigl(Y=1\mid F\bigr) = 1 .
\]
The right-hand equality holds inside the model; (A4) supplies the strict
inequality, by denying that realized configurations under $\mathcal{R}$
concentrate on the coincidence-preserving subset.

\subsection{Inverse structural information}

The other direction asks what an observation rules out; the inverse
implications of the four projections differ in evidential status.

\begin{center}
\begin{tabular}{@{}lp{0.28\linewidth}p{0.30\linewidth}l@{}}
\toprule
Projection & Inverse information & Scope & Status \\
\midrule
$\sigma$ & $\sigma>\mathrm{cap}(m)\Rightarrow$ no disjoint pair is forced &
  at one actor per signature, where it is sharp
  (Appendix~\ref{app:extremal}); away from it the bound needs the composition
  (Remark~\ref{rem:capcomposition}) & \emph{Exact} \\
$K$ & partial support separation & observed in the audited domain &
  \emph{Enumerated} \\
$P_A$ & partial exclusion on the low side & observed in the audited domain &
  \emph{Enumerated} \\
$I$ & no one-dimensional threshold detected & in the audited domain &
  \emph{Enumerated} \\
\bottomrule
\end{tabular}
\end{center}

\begin{corollary}[Observable exclusion]\label{cor:inverseexclusion}
Let $R$ be a population as in
Proposition~\ref{prop:forced-edge-density-cap} with one actor per
signature, so that $|R|=|S|=m\ge3$.  If
$\sigma>1-2(m-2)/\bigl(m(m-1)\bigr)$ then no disjoint pair in the realized
family is forced.
\end{corollary}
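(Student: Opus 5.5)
The corollary is the contrapositive of Proposition~\ref{prop:forced-edge-density-cap} specialized to unit composition, so the plan is to reduce to that proposition and then check that the specialization is the one stated. Suppose, for contradiction, that some disjoint pair $\{s,t\}\subseteq S$ is forced, with $u:=s\vee t$. Since $|S|=m\ge3$ and $n_v=1$ for every $v\in S$, the hypotheses of Proposition~\ref{prop:forced-edge-density-cap} hold, and the proposition gives $\sigma\le\mathrm{cap}(m)=1-2(m-2)/\bigl(m(m-1)\bigr)$. This contradicts the assumed strict inequality.

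The one step to verify is the unit-composition evaluation. Forcing gives $u\in S$ by Lemma~\ref{lem:unioncut}, so $n_u=1$. Also $u\neq s,t$, because $s,t$ are nonempty and disjoint. Hence $N-n_u-n_s=m-2$, and likewise for $t$, so the minimum in the proposition is $m-2$. The bound therefore reads
\[
  \sigma\le 1-\frac{m-2}{\binom m2}=1-\frac{2(m-2)}{m(m-1)},
\]
which agrees with $\mathrm{cap}(m)$ as displayed in Proposition~\ref{prop:forced-edge-density-cap}. The identity $\bigl(\binom{m-1}{2}+1\bigr)/\binom m2$ is a one-line check.

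I expect no real obstacle. The only points needing care are that $u\in S$ is distinct from both endpoints, so that the count $m-2$ is correct, and that the proposition needs no connectivity assumption. The conclusion therefore applies to every disjoint pair, whatever shape the realized graph has. The argument uses only the stated bound, so it holds for any forced pair.
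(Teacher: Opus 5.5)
Your proposal is correct and is the paper's own argument: the corollary is the contrapositive of Proposition~\ref{prop:forced-edge-density-cap} at unit composition. Your check that $u\in S$ and $u\notin\{s,t\}$, so that $n_u=n_s=n_t=1$ and the minimum equals $m-2$, is the same unit-composition evaluation that closes that proposition's proof.
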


\begin{proof}
Contrapositive of Proposition~\ref{prop:forced-edge-density-cap}.
\end{proof}

\noindent
Corollary~\ref{cor:inverseexclusion} excludes a structural configuration without
identifying the realized family or which mechanism of Section~\ref{sec:recovery}
operated; the other three projections are enumerated over the audited domain of
Appendix~\ref{app:supports}.

The choice of projection determines which upstream distinctions an observation
still constrains.  That is the sense in which the dispersion of descriptions and
the recurrence of morphologies are two aspects of one situation: different
readings of the same latent process keep different parts of it.

%% file: sections/08_discussion.tex
\section{Discussion}\label{sec:discussion}

\subsection{What is unified}

The unification is of the substrate, not of the mechanisms.  At two masks the
substrate forces mediation, closed-neighbourhood size and load onto one class,
and the independent derivations of (M3) and (M1) both concern that class:
unification without identification, exact at two masks and characterized at
every number of masks.

\subsection{Relation to existing accounts}

Each account named in Section~\ref{sec:intro} names a mechanism, and the model
recovers a regime with the corresponding outcome.  \citet{krackhardt1986snowball}
related clustered turnover to regular equivalence of communication roles, and
threshold accounts \citep{granovetter1978threshold} count the others who have
already gone; correlated collapse gives clustered removal from a
condition-level exit channel with no interpersonal transmission.  The
attraction--selection--attrition framework \citep{schneider1987asa}
corresponds to selective attrition, but there attrition
follows a perceived misfit, whereas here the mixed class is shortest lived
because it carries the summed load of its endpoints.  Structural balance
\citep{cartwright1956structural} corresponds to bridge depletion and
fragmentation and is driven by the tension of an unbalanced triad; neither
bridge depletion nor fragmentation here requires an actor to evaluate a
triad.
Intermediate-linkage models of withdrawal \citep{mobley1977intermediate} concern
the individual exit decision; the selection among the exits is selective
attrition.
In each case the morphology is derived without the behavioural content of the
account.  The static theory takes the same posture at a fixed positioning, where
a non-degenerate axis forces fragmentation and local cohesion together
\citep{imptheom}; that conjunction is (M3) here, and the dynamic extension adds
(M1) and (M2), which need the passage of time to state.

\subsection{Relation to neighbouring mathematical structures}

Appendix~\ref{app:comparisons} compares the characterization of
Section~\ref{sec:boundaries} with intersection graphs of lattice and ring
ideals, quorum systems, and simple games.  A separator statement in
graph theory takes the graph and a pair as given and asks whether some vertex
separates them; here the join $s\vee t$ is designated by the requirement
structure before the graph is consulted.  We know of no prior result that
selects the lattice join in advance for a designated pair in a
subset-intersection graph, characterizes when it is the unique cut vertex, and
ties that role to an independently derived load ordering.
Since that characterization depends on which classes are counted as present,
an indispensability observed in practice may be induced by the representation
rather than by the latent structure: restricting the actors, the conditions, or
the admissible paths removes alternative classes from $S$ and can make the
designated join the unique separator.

%% file: sections/09_conclusion.tex
\section{Conclusion}\label{sec:conclusion}

A finite condition structure with a minimal dynamic extension produces three
morphologies, each in a regime of its own, with no actor responding to another.
At two masks the requirement lattice forces the mediating, the best-connected
and the most exposed class to be one class, so the derivations of fragmentation
and homogenization, though independent, apply to a single position.  With
more than two masks the alignment is no longer forced.  Forced coincidence
in the sense of Section~\ref{sec:boundaries} admits an exact
characterization for every number of masks: it holds if and only if the
requirement-derived join lies in the realized family and separates the
endpoints.  The proportion of family--pair instances meeting this condition
becomes negligible under the stated counting measure as the number of masks
grows.  Which of the morphologies the literatures came to report is a question
about selection, and a different account answers it.

The load results hold on the shared-state frozen slice, and
Section~\ref{sec:coordinates} states what becomes of them off it.  The
comparative prediction rests on assumptions (A1)--(A4) of
Section~\ref{sec:observables}, stated there and not tested; testing them needs
measurements upstream of the graph.

%% file: sections/A_appendix.tex
\section{Proofs}\label{app:proofs}

The statements proved here are those of the body, and nothing is added to them.
Proofs are headed by the result each establishes and follow the order in which
results appear.

\subsection{Proofs from Section~\ref{sec:generation}}\label{app:proofs-generation}

\begin{proof}[Proof of Proposition~\ref{prop:supporthazard}]
An actor with signature $s$ has $L_s$ deficient incidences, and on the slice this set does
not change.  By Lemma~\ref{lem:marginalinvariance} each contributes total rate
$\lambda$ to the exit of that actor, so the exit time is exponential with rate
$\lambda L_s$.  Splitting the sum by mask gives the minimum representation.
Monotonicity of $L_s$ in $s$ is immediate from $\ell_m\ge0$, and a larger
exponential rate gives a stochastically smaller lifetime.  Independence at
$\beta=0$ is the disjointness of the clock index sets.

For $\beta>0$ the surviving exit clocks are the idiosyncratic ones, indexed by
deficient incidences and still disjoint across actors, together with one
axis-common clock per deficient axis, shared by every actor deficient on that
axis.  Writing $E_{ig}$ and $F_g$ for their independent exponential firing
times, the lifetime of an actor $i$ is $T_i=\min_g\min\{E_{ig},F_g\}$ over the
axes on which $i$ is deficient, so $(T_i)_i$ has the multivariate exponential
distribution generated by shared shocks.  Each $T_i$ is a nondecreasing function
of the same family of independent variables $(E_{ig},F_g)$, so the vector is
associated, and association gives
$\Pr(T_i>x,\,T_j>y)\ge\Pr(T_i>x)\Pr(T_j>y)$ for all $x,y$, which is the
asserted positive dependence.  It is strict whenever $i$ and $j$ are deficient
on a common axis $g$, since that axis then contributes
$e^{-\beta\lambda(x\vee y)}$ to the joint survival function in place of the
$e^{-\beta\lambda(x+y)}$ that independence would give.
\end{proof}

\subsection{Proofs from Section~\ref{sec:recovery}}\label{app:proofs-recovery}

\begin{proof}[Proof of Theorem~\ref{thm:thinbridge}]
Apply Lemma~\ref{lem:persistmax} on the slice with $S=\{A,AB,B\}$ and
$\mathcal A=\{A,B\}$.  The only signature outside $\mathcal A$ is $AB$, and
$L_{AB}=\ell_A+\ell_B>0$ by (i)--(ii), so the standing condition holds; the
separation hypothesis $\min_{s\in\mathcal A}W_s-\max_{s\notin\mathcal A}W_s
\to\infty$ is (iv).  Hence $\Pr[\mathrm{LAST}(\{A,B\})]\to1$: every mixed
actor has exited while a pure-$A$ and a pure-$B$ actor remain.  Since
$A\cap B=\emptyset$, the partition $\{A\}\sqcup\{B\}$ is the second case of
the lemma, and Corollary~\ref{cor:realizedmediator} gives the same separation,
every realized $A$--$B$ path passing through a mixed actor.  Internal cohesion
of each component is (iii).
\end{proof}

\begin{proof}[Proof of Proposition~\ref{prop:nonidentity}]
On that slice, by
Proposition~\ref{prop:supporthazard} the lifetimes are
$T_{AB}\sim\mathrm{Exp}(\lambda(\ell_A+\ell_B))$,
$T_A\sim\mathrm{Exp}(\lambda\ell_A)$ and
$T_B\sim\mathrm{Exp}(\lambda\ell_B)$.  Since $\ell_A,\ell_B\ge1$ the rate of
$T_{AB}$ strictly exceeds both others, so
$T_{AB}<_{\mathrm{st}}T_A$ and $T_{AB}<_{\mathrm{st}}T_B$: selective attrition holds.  At
$\beta=0$ the axis-common removal channel of Definition~\ref{def:law} has
intensity $\beta\lambda=0$ and is therefore inactive, so correlated collapse fails.  Two
predicates that agree on every regime cannot disagree on this one.
\end{proof}

\begin{proof}[Proof of Proposition~\ref{prop:enrichment}]
The displayed identity is Bayes' rule applied to the reference composition; the
denominator is $\Pr(T\le t)$ and is positive because $\ell_A,\ell_B\ge1$ and
$t>0$ make each $p_s(t)$ positive.  Subtracting,
\[
  p_{AB}-\bigl(\pi_A p_A+\pi_{AB}p_{AB}+\pi_B p_B\bigr)
  \;=\;\pi_A\bigl(p_{AB}-p_A\bigr)+\pi_B\bigl(p_{AB}-p_B\bigr),
\]
using $\pi_A+\pi_{AB}+\pi_B=1$.  Both differences are strictly positive by
Corollary~\ref{cor:finitehorizon}, and $\pi_A+\pi_B=1-\pi_{AB}>0$ makes at least
one weight strictly positive, so the right-hand side is strictly positive.
\end{proof}

\begin{proof}[Proof of Lemma~\ref{lem:persistmax}]
Fix $x>0$.  Let $M_{\mathrm{hi}}$ be the maximum lifetime among the actors with
signatures outside $\mathcal A$, and for $s\in\mathcal A$ let
$M^{(s)}_{\mathrm{lo}}$ be the maximum lifetime among the $n_s$ actors with signature $s$.
A union bound over the actors outside $\mathcal A$ gives
\[
\Pr(M_{\mathrm{hi}}>x)\;\le\;\sum_{s\notin\mathcal A}n_s\,e^{-\lambda L_sx},
\]
and independence within each class gives, using $1-u\le e^{-u}$,
\[
\Pr\bigl(M^{(s)}_{\mathrm{lo}}\le x\bigr)
=\bigl(1-e^{-\lambda L_sx}\bigr)^{n_s}
\;\le\;\exp\bigl(-n_s\,e^{-\lambda L_sx}\bigr).
\]
On the intersection of $\{M_{\mathrm{hi}}\le x\}$ with
$\{M^{(s)}_{\mathrm{lo}}>x\}$ over $s\in\mathcal A$ we have
$M_{\mathrm{hi}}<M^{(s)}_{\mathrm{lo}}$ for every $s\in\mathcal A$, which is
$\mathrm{LAST}(\mathcal A)$; a union bound over the complements gives
\eqref{eq:persistbound}.  At time $M_{\mathrm{hi}}$ every actor outside
$\mathcal A$ has exited, so the surviving signature set is contained in
$\mathcal A$, and every $s\in\mathcal A$ still has a surviving actor, so it
contains $\mathcal A$; the two are therefore equal.

For the limit, write
$G:=\min_{s\in\mathcal A}W_s-\max_{s\notin\mathcal A}W_s$ and suppose
$G\to\infty$.  If $L_s=0$ for some $s\in\mathcal A$ that class never exits and
its term is $0$, so assume every $L_s>0$ and choose
\[
  x:=\tfrac12\Bigl(\min_{s\in\mathcal A}W_s+\max_{s\notin\mathcal A}W_s\Bigr).
\]
For $s\notin\mathcal A$ we then have
$\lambda L_sx-\log n_s=\lambda L_s(x-W_s)\ge\lambda L_sG/2\to\infty$, so the
first sum tends to $0$; for $s\in\mathcal A$ we have
$n_se^{-\lambda L_sx}=e^{\lambda L_s(W_s-x)}\ge e^{\lambda L_sG/2}\to\infty$, so
the second sum tends to $0$.  Since $S$ is finite, both sums have finitely many
terms and \eqref{eq:persistbound} gives $\Pr[\mathrm{LAST}(\mathcal A)]\to1$.

If $\mathcal A$ separates and $s^*\in\operatorname*{argmax}_{s\in S}W_s$ lay
outside $\mathcal A$, then $\max_{s\notin\mathcal A}W_s\ge W_{s^*}\ge
\min_{s\in\mathcal A}W_s$ and $G\le0$, so no separating set omits a maximizer.

A singleton
$\mathcal A=\{s^*\}$ forces every survivor to have signature $s^*$.  For the second,
every survivor has a signature in $\mathcal A$, so every edge of $G_C$
present at that time joins two survivors; by
Lemma~\ref{lem:signatureintersection} no such edge crosses the partition, since
signatures on opposite sides are disjoint.  Both parts are nonempty among the
survivors, so $G_C$ has at least two components.
\end{proof}

\begin{proof}[Proof of Corollary~\ref{cor:loadmin}]
Since $S$ is finite and $L_s>L^*$ for every $s\notin\mathcal A$, we have
$\delta\ge1$ because each $L_s$ is a count of deficient axes; if
$S=\mathcal A$ the claim is immediate, so assume
$S\setminus\mathcal A\neq\emptyset$.  If $L^*=0$ then $W_s=\infty$ for every
$s\in\mathcal A$; otherwise (H2) gives $W_s=\log n_s/(\lambda L^*)\to\infty$
there.  For $s\notin\mathcal A$, (H1) gives
$W_s\le\log C/\bigl(\lambda(L^*+\delta)\bigr)$, a bound not growing with the
population.  Hence $\min_{s\in\mathcal A}W_s-\max_{s\notin\mathcal A}W_s\to\infty$, so
Lemma~\ref{lem:persistmax} applies.  Substituting $n_s\le C$ and
$L_s\ge L^*+\delta$ for $s\notin\mathcal A$, and $L_s=L^*$ for
$s\in\mathcal A$, into \eqref{eq:persistbound} gives the displayed bound.
\end{proof}

\begin{proof}[Proof of Proposition~\ref{prop:simultaneous}]
The next transition is the clock attaining the minimum among independent
exponentials, and that minimum is attained by a given clock with probability
proportional to its rate.  The $g$-common clock is present because
$D_g\neq\emptyset$, and its rate is $\beta\lambda>0$; the remaining clocks are
finite in number with finite rates, so $\Lambda(x)<\infty$.  Definition~\ref{def:law}
removes every actor in $D_g$ on that transition, and $|D_g|\ge2$.
\end{proof}

\begin{proof}[Proof of Proposition~\ref{prop:routeselector}]
On the frozen slice $\nu=\rho=0$, so $\varphi$ does not move and deficiency is
fixed axis by axis until an exit clock removes the deficient actors
themselves.  Axes belonging to the same requirement mask are required by
exactly the same actors (Section~\ref{sec:generation}), so every deficient
axis $g$ in mask $A$ is required by exactly the $A$- and $AB$-actors and is
deficient for all of them, giving $D_g=A\cup AB$; symmetrically $D_g=B\cup AB$
for every deficient axis in mask $B$.

By Definition~\ref{def:law} the idiosyncratic exit clocks fire at rate
$(1-\beta)\lambda$, which is $0$ at $\beta=1$, proving (1).  What remains are
the axis-common clocks, one per axis $g$ with $D_g\neq\emptyset$, each an
independent $\mathrm{Exp}(\beta\lambda)=\mathrm{Exp}(\lambda)$ clock that
removes every actor in $D_g$ at once.  There are $\ell_A$ such clocks
attached to mask $A$ and $\ell_B$ attached to mask $B$.

Let $\tau_A$ be the minimum of the $\ell_A$ mask-$A$ clocks and $\tau_B$ the
minimum of the $\ell_B$ mask-$B$ clocks.  As a minimum of independent
$\mathrm{Exp}(\lambda)$ clocks, $\tau_A\sim\mathrm{Exp}(\lambda\ell_A)$ and
$\tau_B\sim\mathrm{Exp}(\lambda\ell_B)$, independently, and
$\tau=\min\{\tau_A,\tau_B\}$.  These are continuous distributions, so
$\tau_A\neq\tau_B$ almost surely and a single axis attains $\tau$.  If that
axis lies in mask $A$, its clock removes $D_g=A\cup AB$ at once by
Definition~\ref{def:law}; every $A$- and $AB$-actor alive just before $\tau$
belongs to this set, since neither idiosyncratic exit nor any other clock has
touched it, so all of them are removed together and only $B$-actors remain
at $\tau+$.  The case where the firing axis lies in mask $B$ is symmetric.
This proves (2).

For (3), the same competing-clocks fact used in
Proposition~\ref{prop:simultaneous} gives, for two independent exponentials,
that the smaller one is the first with probability proportional to its rate:
\[
  \Pr(\tau_A<\tau_B)=\frac{\lambda\ell_A}{\lambda\ell_A+\lambda\ell_B}
  =\frac{\ell_A}{\ell_A+\ell_B} ,
\]
which is exactly $\Pr(A\text{-side first})$.
\end{proof}

\begin{proof}[Proof of Proposition~\ref{prop:divergentendpoints}]
For (i), apply Lemma~\ref{lem:persistmax} with $S=\{A,AB,B\}$ and
$\mathcal A=\{A,B\}$.  Here $L_A=L_B=\ell$ and $L_{AB}=2\ell$, and the
separation hypothesis for this $\mathcal A$ is the thinness assumed, so the last
exit outside $\mathcal A$ is $\tau$ and $A\cap B=\emptyset$ supplies the
partition of the second case.  For
(ii), Proposition~\ref{prop:routeselector} applies.  Since $\ell\ge1$ the mixed
class is deficient on axes of both masks, so the first axis-common firing
removes it whichever mask that axis lies in; that firing is $\tau$, and part (2) of
Proposition~\ref{prop:routeselector} leaves one class at $\tau+$, part (3) making the two outcomes equally likely
because $\ell_A=\ell_B$.  For (iii), the marginal law
$T_s\sim\mathrm{Exp}(\lambda L_s)$ of Proposition~\ref{prop:supporthazard} does
not depend on $\beta$.
\end{proof}

\subsection{Proofs from Section~\ref{sec:boundaries}}\label{app:proofs-boundaries}

\begin{proof}[Proof of Lemma~\ref{lem:unioncut}]
If $u\in S$ and its removal disconnects $s$ from $t$, then $u$ is a separator.
For any other $v\in S\setminus\{s,t,u\}$ we have $s\cap u=s\neq\emptyset$ and
$u\cap t=t\neq\emptyset$, so the path $s-u-t$ survives the removal of $v$ and
$v$ is not a separator.  Hence $u$ is the unique cut vertex.  Conversely, if $u$ is the
unique cut vertex then in particular $u\in S$ and removing it disconnects $s$
from $t$.
\end{proof}

\begin{proof}[Proof of Theorem~\ref{thm:failasymptotic}]
Fix a disjoint pair $\{s,t\}$ and write $q:=|\Sigma_r|=2^{r}-1$ and
$B:=B(s,t)$.  Since $u\cap s=s$ and $u\cap t=t$ are nonempty, $u$ lies in $B$,
and $u\notin\{s,t\}$ because both are nonempty and disjoint; neither $s$ nor $t$
lies in $B$, for the same disjointness.  Forced coincidence requires $u\in S$,
so $S\cap B$ contains $u$, while Lemma~\ref{lem:twobridges} forbids a second
element.  Hence $S\cap B=\{u\}$ exactly, and a realized family containing $s,t$ with
$\mathrm{GOOD}$ is free only on the $q-2-|B|$ signatures outside
$\{s,t\}\cup B$: there are at most $2^{\,q-2-|B|}$ of them.  For the
denominator, every realized family containing $[r]$ is connected, since $[r]$ meets
every nonempty signature; hence at least $2^{\,q-3}$ realized families containing $s,t$
are admissible.  The ratio is at most $2^{\,1-|B|}$, which is decreasing in
$|B|$, so Lemma~\ref{lem:bridgesize} gives the bound with $|B|$ replaced by
$2^{\,r-2}$.  The bound is uniform over pairs, and a weighted average of ratios
is at most the largest ratio, so it transfers to $\Omega_r$.
\end{proof}

\begin{proof}[Proof of Proposition~\ref{prop:forced-edge-density-cap}]
By Corollary~\ref{cor:kappaform} the vertex $u$ is a cut vertex of $G[S]$, so by
the block and cut-vertex structure of \citet{whitney1932} the graph
$G[S]\setminus\{u\}$ splits the remaining vertices into at least two
components; let $\mathcal S$ be the one containing $s$ and let $\mathcal T$ be
the union of the others, which contains $t$.  By
Remark~\ref{rem:permission} an edge of $G_C$ joins actors whose signatures
intersect, so $G_C$ is a subgraph of the graph obtained from $G[S]$ by replacing
each vertex $v$ with $n_v$ actors, each class by a clique and each edge by the
complete join.  No edge of that graph joins an actor whose signature lies in
$\mathcal S$ to one whose signature lies in $\mathcal T$.  Writing $a:=\sum_{v\in\mathcal S}n_v$ and
$b:=N-n_u-a$, at least $ab$ pairs are therefore non-adjacent in $G_C$.  The sum
$a+b=N-n_u$ is fixed and $a$ ranges over $[\,n_s,\;N-n_u-n_t\,]$, on which the
concave function $a\mapsto a\,b$ attains its minimum at an endpoint; the two
endpoint values are $n_s(N-n_u-n_s)$ and $n_t(N-n_u-n_t)$.  Dividing by
$\binom N2$ gives the bound.  At unit composition $N=m$ and
$n_u=n_s=n_t=1$, so the minimum is $m-2$ and the bound reads
$\sigma\le\bigl(\binom m2-(m-2)\bigr)/\binom m2=\mathrm{cap}(m)$.
\end{proof}

\begin{proof}[Proof of Proposition~\ref{prop:coverbounds}]
Let $D_C$ be the graph on $C$ joining pairs with disjoint signatures.  No $r+1$
nonempty subsets of $[r]$ are pairwise disjoint, so $\omega(D_C)\le r$ and $D_C$
is $K_{r+1}$-free; by Tur\'an's theorem $|E(D_C)|\le t_r(n)$.  Within a
component $Q(C)=\binom n2$, and the pairs counted by $A(C)$ are exactly the
non-edges of $D_C$, so
$A(C)=\binom n2-|E(D_C)|\ge\binom n2-t_r(n)$, which gives the bound; the
asymptotic form follows from $t_r(n)=(1-1/r)\binom n2+O(n)$.  Finally
$P_A(C)=1$ holds exactly when $D_C$ has no edge, that is, when the signatures in
$C$ form a pairwise intersecting family in the sense of
\citet{frankltokushige2018}.
\end{proof}

\subsection{Proofs from Section~\ref{sec:observables}}\label{app:proofs-observables}

\begin{proof}[Proof of Proposition~\ref{prop:balancewindow}]
Write $a=n_A^{(N)}$, $b=n_B^{(N)}$ and $S=S_N$.  With two components,
\[
  I_N=\frac{\binom a2+\binom b2}{\binom S2}
     =\frac{a^{2}+b^{2}-S}{S(S-1)} ,
\]
so $I_N<\tfrac12$ is equivalent to $2(a^{2}+b^{2}-S)<S(S-1)$, hence to
$a^{2}+b^{2}-2ab<S$, hence to $(a-b)^{2}<S$, hence to $|Z_N|<1$.  This
equivalence is exact at every $N$ with $S_N\ge2$, which is what
$S_N\to\infty$ in probability secures; $I_N$ is undefined below that.  Since $\Pr(|Z|=1)=0$, the set $\{|z|<1\}$
is a continuity set of the law of $Z$, so the portmanteau theorem gives
$\Pr(|Z_N|<1)\to\Pr(|Z|<1)$.
\end{proof}

\begin{proof}[Proof of Corollary~\ref{cor:modelbalance}]
At $\beta=0$ the lifetimes are independent, and by
Proposition~\ref{prop:supporthazard} the mixed actor has $L_{AB}=2\ell$, so it
exits at a time $\tau\sim\mathrm{Exp}(2\alpha)$, while each pure actor survives
past $\tau$ with probability $\theta:=e^{-\alpha\tau}$.  Conditionally on $\tau$
the two surviving pure counts $n_A^{(N)},n_B^{(N)}$ are independent
$\mathrm{Bin}(N,\theta)$, so their difference has mean zero and variance
$2N\theta(1-\theta)$ while $S_N/(2N\theta)\to1$ in probability; the
Lindeberg--Feller theorem and Slutsky's lemma give
$Z_N\mid\tau\Rightarrow N(0,1-\theta)$.  For the mixing law,
$\Pr(\theta\le x)=\Pr\bigl(\tau\ge-\alpha^{-1}\log x\bigr)=x^{2}$ on $(0,1)$,
which is $\mathrm{Beta}(2,1)$.  Mixing gives the
stated representation, and $Z$ has a density, so $\Pr(|Z|=1)=0$.  The displayed
expectation is $\Pr(|Z|<1)$ conditioned on $\theta$ and averaged.  Since
$0<\theta<1$ almost surely, $(1-\theta)^{-1/2}>1$ almost surely, which places the
expectation strictly above $2\Phi(1)-1$ and strictly below $1$; integrating
against the $\mathrm{Beta}(2,1)$ density $2\theta$ gives the stated decimal.
Finally, by
Theorem~\ref{thm:thinbridge} the event that both pure classes still have a
survivor when the mixed actor exits has probability tending to one, so
restricting to it does not change the limit.
\end{proof}

\input{sections/A_coordinates_proofs}

\section{Numerical audit trail}\label{sec:appendix}

The numerical statements in the body are recorded here, each with the domain it
was computed over.  Every domain is exhausted except where noted.

\subsection{Inverse observability: supports}\label{app:supports}

Domain: $r=4$, realized families of size $3\le|S|\le6$, unit composition, every nonzero
pattern of axis satisfaction, disconnected realized families admitted with the pair
required only to lie in a common component.  This gives $\pvInvGoodCount$ forced and
$\pvInvFailCount$ non-forced instances.  Entries are the minimum and maximum attained.

\begin{center}
\input{tables/support}
\end{center}

\noindent
Three features underlie the claims of Section~\ref{sec:observables}.  The maximum
of $\sigma$ under forcing equals $\mathrm{cap}(m)$ at every $m$; the non-forced
maximum exceeds it from $m=4$ on.  For $I$ the two supports share endpoints at
every $m$: no one-dimensional threshold was found.  For $K$ and $P_A$ the
supports coincide at $m=3$ and differ from $m=4$ on one or both sides, except
$P_A$ at $m=6$; no differing endpoint matches a closed form.

\subsection{Where the observables separate}\label{app:sweeps}

On the slice with every axis satisfied and one actor per signature, all
$\pvObservablesCollapsed$ connected realized families with $r\le4$ satisfy $\sigma=K=P_A$ and $I=1$:
the observables cannot be told apart there, so the sweeps below leave that
slice.

\begin{center}
\begin{tabular}{@{}p{0.38\linewidth}rrrr@{}}
\toprule
Sweep & forced instances & $\sigma$ & $K$ & $P_A$ \\
\midrule
axis satisfaction allowed to vary, unit composition & $\pvCapGoodHZero$ & 0 & 0 & $\pvCapViolPAHZero$ \\
composition $n_s\in\{1,\dots,4\}$ sampled, satisfaction varied & $\pvCapGoodComposition$ & $\pvCapViolSigmaComposition$ & $\pvCapViolSigmaComposition$ & --- \\
the same, against the composition-aware bound & $\pvCapGenInstances$ & $\pvCapGenViolSigma$ & --- & --- \\
\bottomrule
\end{tabular}
\end{center}

\noindent
Entries count violations of the cap named in the row.  The first two rows use
$\mathrm{cap}(m)$: the first shows it does not extend to $P_A$, the second that
it fails away from unit composition.  The third uses the composition-aware
bound of Proposition~\ref{prop:forced-edge-density-cap} on $\pvCapGenInstances$
sampled instances, with forcing read off $H_r[S]$, and finds no violation;
$\pvCapGenViolUnit$ of those instances exceed $\mathrm{cap}(m)$ and
$\pvCapGenTight$ attain the bound with equality.  On the $\pvCapGenUnitCases$
forced instances of that domain the two agree at unit composition without
exception.  The rows with varying composition are sampled, not exhaustively
enumerated.
Among pointed-graph isomorphism classes with $r\le4$ and $|S|\le7$,
$\pvNonInvariantIsoClasses$ contain both a forced and a non-forced instance:
the computation behind Remark~\ref{rem:notinvariant}.

\subsection{Sharpness of the edge-density cap}\label{app:extremal}

The unit-composition bound $\mathrm{cap}(m)$ of
Proposition~\ref{prop:forced-edge-density-cap} is attained for every $m$ once
the lattice is large enough.  Take $s=\{1\}$, $t=\{2\}$, $u=\{1,2\}$ and, for
the remaining $m-3$ signatures, distinct subsets of $\{2,\dots,r\}$ containing
$2$: each meets $u$ and misses $s$ and they meet one another, so the family
realizes $K_{m-1}$ with $s$ pendant at $u$.  This needs $m-2\le2^{\,r-2}$, that
is $r\ge2+\lceil\log_2(m-2)\rceil$.

\begin{center}
\input{tables/extremal}
\end{center}

\noindent
Forcing holds at $(s,t)$ in each case, with $u$ the unique cut vertex; omitted
values of $m$ behave the same way.

\subsection{Enumeration of \texorpdfstring{$\Omega_r$}{Omega r}}\label{app:orbit}

The proportions underlying Section~\ref{sec:boundaries} were computed twice: by
direct enumeration of labelled realized families, and by quotienting under the symmetric
group on mask labels, each representative weighted by $r!/|\mathrm{Stab}|$.  The
first column counts \emph{candidate sets}: subsets $S\subseteq\Sigma_r$ with
$|S|\ge2$ before the connectedness requirement of Definition~\ref{def:omega},
namely $2^{|\Sigma_r|}-1-|\Sigma_r|$.  Connectedness and disjointness are
applied when the pairs are counted, so the second column is the slice of
$\Omega_r$ at that $r$ and the first is not.  Relabelling is a bijection on
admissible pairs and preserves forcing, so the two agree exactly.

\begin{center}
\input{tables/orbit}
\end{center}

\noindent
The last column counts orbit representatives, an artifact of the second method.
Stratified by $|S|$, the forced proportion is greatest at $|S|=3$:
$\pvStratForcedRThreeSThree/\pvStratPairsRThreeSThree$ at $r=3$ and
$\pvStratForcedRFourSThree/\pvStratPairsRFourSThree$ at $r=4$, the latter above
the bound of Theorem~\ref{thm:failasymptotic}; at $|S|=4$ the values are
$\pvStratForcedRThreeSFour/\pvStratPairsRThreeSFour$ and
$\pvStratForcedRFourSFour/\pvStratPairsRFourSFour$.
The other three columns (candidate sets, admissible pairs, non-forced) agree
between the two methods at each $r$.  Direct enumeration does
not reach $r=5$, which is what Theorem~\ref{thm:failasymptotic} removes the need
for.

\section{Relation to neighbouring mathematical structures}\label{app:comparisons}

These are the comparisons announced in Section~\ref{sec:discussion}.  Each of
the three lines of work below supplies part of the characterization of
Section~\ref{sec:boundaries}.  The same thing is missing in each case: a
separator named in advance by the requirement lattice.

\begin{center}
\begin{tabular}{@{}p{0.24\linewidth}p{0.34\linewidth}p{0.34\linewidth}@{}}
\toprule
Prior work & What it supplies & Difference from the present setting \\
\midrule
Intersection graphs of lattice and ring ideals
 \citep{zelinka1973,chakrabarty2009ideals,sharmabhatwadekar1995,yewu2012comaximal}
 & a cut-vertex characterization, via Menger, for the order diagram recovered
   from the graph, and the same construction over rings
 & the condition is global and concerns the recovered diagram.  Where a join
   does appear it is the adjacency rule, not a vertex whose presence and
   separating role are in question \\
Quorum systems \citep{naorwool1998}
 & a load derived from an intersection axiom and bounded below, which is the
   move made in Section~\ref{sec:generation}
 & the load is a minimax over the whole universe.  No pair is designated and no
   separation appears \\
Vetoers and the Nakamura number \citep{nakamura1979,freixaskurz2019}
 & an indispensable position derived from an intersection structure, with a
   threshold in a dimension parameter as ours has in the number of masks
 & the vetoer is defined globally, and what concentrates on it is power rather
   than burden \\
\bottomrule
\end{tabular}
\end{center}

\noindent
Two further conditions on set families stop short of the same point.
Union-closed families \citep{bruhnschaudt2015} ask closure at every pair
rather than at a designated one.  Conformality \citep{zykov1974,boros2025conformal}
is indexed by triples and is existential over the family, and in a survey of the
Helly property \citep{dourado2009helly} we found no characterization in which a
designated member disconnects a designated pair.

\noindent
One neighbouring result makes a further point.  For a ring with no strongly
isolated maximal ideal, \citet[Lemma~3.7]{samei2014comaximal} shows that a
common neighbour of two vertices is always accompanied by a second.  A
nonadjacent pair with a common neighbour then has local vertex connectivity at
least two.  Pairwise algebraic structure forces redundancy where
Section~\ref{sec:boundaries} requires uniqueness.

%% file: sections/A_coordinates_proofs.tex
\subsection{Proofs from Section~\ref{sec:coordinates}}\label{app:proofs-coordinates}

\begin{proof}[Proof of Proposition~\ref{prop:coord-lawinvariance}]
Definition~\ref{def:law} gives a Markov jump process on a finite state space
with constant rates, so it suffices to show that its generator, applied to a
function $f$ of $(\varphi_i,\xi_i)$ alone, returns an expression in
$(\varphi_i,\xi_i)$ that contains neither $\rho_S$ nor $\beta$.  Five kinds of
clock can move those coordinates.

\emph{Shared loss at an axis $g$}, rate $\rho_S\nu$, enabled when
$P_g\neq\emptyset$.  It moves $\varphi_{ig}$ exactly when $i\in P_g$, that is
when $\xi_i=1$, $g\in s_i$ and $\varphi_{ig}=1$; and $i\in P_g$ already forces
$P_g\neq\emptyset$, so on the states where the term is nonzero the enabling
condition is a condition on $i$ alone.  The contribution is
$\rho_S\nu\,[f(\varphi_{ig}\!\to\!0)-f]$.

\emph{Idiosyncratic loss at $(i,g)$}, rate $(1-\rho_S)\nu$, enabled on the same
set, contributing $(1-\rho_S)\nu\,[f(\varphi_{ig}\!\to\!0)-f]$.  The two sum to
$\nu\,[f(\varphi_{ig}\!\to\!0)-f]$.

\emph{Axis-common exit at $g$}, rate $\beta\lambda$, enabled when
$D_g\neq\emptyset$.  It removes $i$ exactly when $i\in D_g$, that is when
$\xi_i=1$, $g\in s_i$ and $\varphi_{ig}=0$; again $i\in D_g$ forces
$D_g\neq\emptyset$.  The contribution is $\beta\lambda\,[f(\xi_i\!\to\!0)-f]$
per deficient axis of $i$.

\emph{Idiosyncratic exit at a deficient incidence $(i,g)$}, rate
$(1-\beta)\lambda$, contributing $(1-\beta)\lambda\,[f(\xi_i\!\to\!0)-f]$ per
deficient axis.  The two exit terms sum to
$\lambda\,d_i\,[f(\xi_i\!\to\!0)-f]$.

The supply clock at $(i,g)$ with $i\in D_g$ has rate $\rho$ and contributes
$\rho\,[f(\varphi_{ig}\!\to\!1)-f]$.  Every remaining clock either fires on
indices disjoint from $i$ or, when shared, leaves $(\varphi_i,\xi_i)$ fixed
because $i\notin P_g$ or $i\notin D_g$, and contributes $0$.  Hence
\begin{multline*}
  (\mathcal{A}f)(\varphi_i,\xi_i)
  =\xi_i\Bigl[\nu\!\!\sum_{g\in s_i:\varphi_{ig}=1}\!\!\bigl[f(\varphi_{ig}\!\to\!0)-f\bigr]\\
   +\rho\!\!\sum_{g\in s_i:\varphi_{ig}=0}\!\!\bigl[f(\varphi_{ig}\!\to\!1)-f\bigr]
   +\lambda d_i\bigl[f(\xi_i\!\to\!0)-f\bigr]\Bigr].
\end{multline*}
With a deterministic initial state the law of
$(\varphi_i,\xi_i)_{t\ge0}$ is determined by this generator, and is therefore the
same for all $(\rho_S,\beta)$.
\end{proof}

\begin{proof}[Proof of Proposition~\ref{prop:coord-lossdependence}]
Since $\rho=0$, every incidence moves at most once, from $1$ to $0$.  Draw
$F_g\sim\mathrm{Exp}(\rho_S\nu)$ for each axis, $X_{ig}\sim
\mathrm{Exp}((1-\rho_S)\nu)$ for each incidence with $\varphi_{ig}(0)=1$, and
$E_i\sim\mathrm{Exp}(1)$ for each actor, all independent, with
$\mathrm{Exp}(0):=\infty$.  Put $L_{ig}:=\min(F_g,X_{ig})\sim\mathrm{Exp}(\nu)$.
Before $T_i$ the actor $i$ survives, so $\varphi_{ig}=1$ puts $i\in P_g$ and
keeps the shared clock of $g$ active; $H_i$ is evaluated only on $[0,T_i)$, so by the
lack of memory the first firing of $F_g$ while $i\in P_g$ is the loss time of
$(i,g)$.  With $\beta=0$ the exit clocks are
the idiosyncratic ones, so the cumulative exit intensity of $i$ is
\[
  H_i(t)=\lambda\Bigl[d_i(0)\,t+\sum_{g\in A_i}(t-L_{ig})_+\Bigr],
  \qquad T_i=\inf\{t:H_i(t)\ge E_i\}.
\]
Write $W:=(F_g)_{g\in G}$.  Conditionally on $W$ the families
$\{(X_{ig})_g,E_i\}$ are independent across actors and $T_i$ is a function of
$W$ and of the family of $i$, so $T_i\perp T_j$ given $W$.  If $\rho_S=0$ then
$W$ is degenerate and the conditioning is vacuous; if $C=\emptyset$ then $T_i$
depends on $W$ only through $(F_g)_{g\in A_i}$ and $T_j$ only through
$(F_g)_{g\in A_j}$, two independent families.  Either way $T_i\perp T_j$.

Now let $C\neq\emptyset$ and $\rho_S>0$, fix $s,t>0$ and put
$u(W):=\Pr(T_i>s\mid W)$ and $v(W):=\Pr(T_j>t\mid W)$.  Conditional
independence gives $D(s,t)=\mathrm{Cov}\bigl(u(W),v(W)\bigr)$.  Raising $F_g$
can only delay a loss, hence only decrease $H_i$ pointwise, so $u$ and $v$ are
nondecreasing in each coordinate of $W$; $u$ does not depend on $F_g$ for
$g\notin A_i$, and likewise for $v$.  Fix $g_0\in C$ and write $W_{-}$ for the
remaining coordinates.  Then
\[
  \mathrm{Cov}(u,v)=\mathbb{E}\bigl[\mathrm{Cov}(u,v\mid W_{-})\bigr]
   +\mathrm{Cov}\bigl(\mathbb{E}[u\mid W_{-}],\mathbb{E}[v\mid W_{-}]\bigr),
\]
and the second term is nonnegative by induction on the number of coordinates,
each step being Chebyshev's association inequality.  For the first term,
condition further on $(X_{ig})_g$ and $(X_{jg})_g$: on
$\{X_{ig_0}>x\}$ the map $x\mapsto(s-\min(x,X_{ig_0}))_+$ is strictly
decreasing for $x<s$, so $u$ is strictly increasing in $F_{g_0}$ on $(0,s)$, and
$v$ is strictly increasing on $(0,t)$.  Since $\rho_S\nu>0$, $F_{g_0}$ charges
$(0,s\wedge t)$, so Chebyshev's inequality is strict there and
$\mathrm{Cov}(u,v\mid W_{-})>0$ with positive probability.  Hence $D(s,t)>0$.
\end{proof}

\begin{proof}[Proof of Proposition~\ref{prop:coord-supply}]
(i) With $\nu=0$ a satisfied incidence stays satisfied, so each of the $L$
incidences deficient at time zero carries a supply clock
$R_g\sim\mathrm{Exp}(\rho)$ and, until $R_g$ fires, an exit clock of rate
$\lambda$; all clocks are independent.  Given $(R_g)_g$ the exit clock at $g$ is
active on $[0,R_g)$, so
$\Pr(T>t\mid(R_g)_g)=\exp(-\lambda\sum_g(t\wedge R_g))$, and
\[
  \mathbb{E}\bigl[e^{-\lambda(t\wedge R)}\bigr]
  =e^{-(\lambda+\rho)t}+\int_0^t\!\rho e^{-\rho x}e^{-\lambda x}\,dx
  =\frac{\rho+\lambda e^{-(\lambda+\rho)t}}{\lambda+\rho}=h(t).
\]
Taking the product over the $L$ incidences gives $\Pr(T>t)=h(t)^L$, and letting
$t\to\infty$ gives $q^L$.  Since $h(t)\in(q,1)$ for $t>0$, both are strictly
decreasing in $L$.  The clock index sets are per-incidence and disjoint across
actors, and the two clocks with shared index sets are inactive, the shared
loss at rate $\rho_S\nu=0$ and the axis-common exit at rate $\beta\lambda=0$;
hence the lifetimes are independent across actors.

(ii) An actor with signature $s$ exits with probability $1-q^{L_s}$, independently, which
gives the extinction probability and, by the law of large numbers applied to the
$\mathrm{Bin}(n_s,q^{L_s})$ counts of permanent survivors, the composition.  For
the last claim, $\mathrm{LAST}(\mathcal A)$ implies that every actor outside
$\mathcal A$ exits, and is implied by that event together with the event that
each $s\in\mathcal A$ has a permanent survivor; by independence
\[
  \prod_{s\notin\mathcal A}(1-q^{L_s})^{n_s}
  \prod_{s\in\mathcal A}\bigl(1-(1-q^{L_s})^{n_s}\bigr)
  \le\Pr[\mathrm{LAST}(\mathcal A)]
  \le\prod_{s\notin\mathcal A}(1-q^{L_s})^{n_s},
\]
and the second factor on the left tends to $1$.

(iii) For one incidence, $\mathbb{E}[e^{-\lambda(t\wedge R)}\mathbf1\{R>t\}]
=e^{-(\lambda+\rho)t}$, so given survival to $\tau$ the incidences are
independently still deficient with the class-free probability
$\gamma(\tau):=e^{-(\lambda+\rho)\tau}/h(\tau)$, and by the lack of memory
\[
  p^{(\tau)}_s(t):=\Pr(T\le\tau+t\mid T>\tau)
  =1-\bigl(1-\gamma(\tau)(1-h(t))\bigr)^{L_s}.
\]
This is strictly increasing in $L_s$, so $p^{(\tau)}_{AB}>p^{(\tau)}_A,
p^{(\tau)}_B$ and the computation in the proof of
Proposition~\ref{prop:enrichment} applies unchanged, giving $E_{AB}>1$.  As
$\tau\to\infty$, $\gamma(\tau)\to0$ and $p^{(\tau)}_s=L_s\gamma(1-h(t))(1+o(1))$,
so $E_{AB}\to L_{AB}/\sum_s\pi_sL_s$ at every horizon; the at-risk composition
tends to the permanent-survivor composition of (ii), which at $\ell_A=\ell_B=1$
and equal multiplicities is proportional to $(q,q^2,q)$, and substituting
$L_A=L_B=1$, $L_{AB}=2$ gives $(2+q)/(1+q)$.  For the monotonicity write
$u:=2-\gamma(\tau)(1-h(t))\in(1,2)$ and $h:=h(\tau)$; the same substitution gives
$E_{AB}=u(2+h)/(2+hu)$, with $\partial_uE_{AB}=2(2+h)/(2+hu)^2>0$ and
$\partial_hE_{AB}=2u(1-u)/(2+hu)^2<0$, and $\gamma$ and $h$ both decrease in
$\tau$.

(iv) By (i) the three events are independent and an actor with load $L$ is present
at $t$ with probability $h(t)^L$.  Since $n_{AB}\ge1$ and $h\ge q$, the first
factor is at most $1-h(t)^{2\ell}\le1-q^{2\ell}$.

(v) With $\nu=0$ each $\varphi_{ig}$ is nondecreasing, and $R$ is nonincreasing.
Let $t_1\le t_2$, $i,j\in R(t_2)$ and $\{i,j\}\in E_C(t_1)$.  The axis
$g\in s_i\cap s_j$ that both satisfied at $t_1$ is satisfied by both at $t_2$,
so $\{i,j\}\in E_C(t_2)$ by Definition~\ref{def:cohesion}; an edge is lost
only through the exit of an endpoint.  Since $R_\infty:=\{i:T_i=\infty\}\subseteq R(t)$ for every $t$, the
inclusion applies at every pair of times, and $E_C$ restricted to $R_\infty$ is
nondecreasing.
On $\{T_i=\infty\}$ no exit
clock of $i$ ever fires, so every incidence of $i$ deficient at time zero is
supplied, and eventually $\varphi_{ig}=1$ for all $g\in s_i$.  Hence two
permanent survivors with $s_i\cap s_j\neq\emptyset$ are eventually adjacent.
\end{proof}

\begin{proof}[Proof of Proposition~\ref{prop:coord-loss}]
(i) With $\rho=0$ and $\beta=0$ take the feedback-free construction: draw
$\sigma_g\sim\mathrm{Exp}(\rho_S\nu)$ per axis,
$\varepsilon_{ig}\sim\mathrm{Exp}((1-\rho_S)\nu)$ per initially satisfied
incidence and $E_i\sim\mathrm{Exp}(1)$ per actor, all independent, and set
$\varphi_{ig}(t)=\mathbf1\{t<Z_{ig}\}$ with
$Z_{ig}:=\min(\sigma_g,\varepsilon_{ig})\sim\mathrm{Exp}(\nu)$ for $g\in A_i$,
and $T_i=\inf\{t:\lambda\int_0^td_i(u)\,du\ge E_i\}$.  The construction lets
losses continue after an exit, but Definition~\ref{def:cohesion} refers only to
$R(t)$, and exit times are unaffected.  The $Z_{ig}$, $g\in A_i$, are
independent, so
\[
  \Pr(T_i>t)=e^{-\lambda L_it}\prod_{g\in A_i}\mathbb{E}\bigl[e^{-\lambda(t-Z)_+}\bigr],
\]
where $Z\sim\mathrm{Exp}(\nu)$ and
\[
  \mathbb{E}\bigl[e^{-\lambda(t-Z)_+}\bigr]
  =e^{-\nu t}+\int_0^t\!\nu e^{-\nu z}e^{-\lambda(t-z)}\,dz=B(t),
\]
with the stated value at $\nu=\lambda$.  The marginal does not depend on $\rho_S$.  As $t\to\infty$,
$B(t)\sim\lambda e^{-\nu t}/(\lambda-\nu)$ for $\nu<\lambda$ and
$B(t)\sim\nu e^{-\lambda t}/(\nu-\lambda)$ for $\nu>\lambda$, so
$-t^{-1}\log\Pr(T_i>t)\to\lambda L_i+(k_i-L_i)\widehat\nu=\lambda
L^{\mathrm{eff}}_i$.

(ii) Couple $i$ and $j$ on one family: the two count inequalities give an
injection of the axes of $i$ into those of $j$ sending deficient to deficient
and satisfied to satisfied, and $i$ uses the corresponding $Z$ and the same $E$.
Then $d_j(u)\ge d_i(u)$ for every $u$ pathwise, so $T_j\le T_i$ pathwise.  For
the last sentence, let $i$ require the single axis $a$ and be deficient on it,
and let $j$ require $\{a,b\}$ and be satisfied on both, so $s_i\subseteq s_j$.
Then $\Pr(T_i>t)=e^{-\lambda t}$ and $\Pr(T_j>t)=B(t)^2$ with $B(0)=1$ and
$B'(0)=0$, so $\Pr(T_j>t)>\Pr(T_i>t)$ for all small $t>0$ and
$T_j\le_{\mathrm{st}}T_i$ fails.

(iii) At $\rho_S=0$ the lifetimes are independent with the survival function of
(i).  For $\nu\neq\lambda$ the identity
$B(t)=e^{-\widehat\nu t}\bigl[\max(\lambda,\nu)-\min(\lambda,\nu)
e^{-|\lambda-\nu|t}\bigr]/|\lambda-\nu|$ gives
$e^{-\widehat\nu t}\le B(t)\le\bigl(\max(\lambda,\nu)/|\lambda-\nu|\bigr)
e^{-\widehat\nu t}$, hence $e^{-\lambda L^{\mathrm{eff}}_st}\le\Pr(T>t)\le
C_se^{-\lambda L^{\mathrm{eff}}_st}$ with $C_s$ free of $n_s$, so the maximum of $n_s$ copies is $\log
n_s/(\lambda L^{\mathrm{eff}}_s)+O_P(1)$ by the usual extreme-value sandwich;
at $\nu=\lambda$ the polynomial factor costs a further $O_P(\log\log n_s)$.  At
$\rho_S>0$ put $\Sigma:=\max_g\sigma_g$, a maximum over the axes required by the
class and hence free of $n_s$.  The exit intensity never exceeds $\lambda k_s$,
so $T_i\ge E_i/(\lambda k_s)$; and for $t\ge\Sigma$ every axis of every actor
is deficient, so the intensity equals $\lambda k_s$ and $T_i\le\Sigma+
E_i/(\lambda k_s)$.  Taking maxima and using $\max_iE_i=\log n_s+O_P(1)$ gives
the claim.

(iv) For $s\notin\mathcal A$ use $\Pr(T>x)\le C_se^{-\lambda
L^{\mathrm{eff}}_sx}$ with
$C_s=\bigl(\max(\lambda,\nu)/|\lambda-\nu|\bigr)^{k_s-L_s}$, valid for all $x$
when $\nu\neq\lambda$, in the union bound (at $\nu=\lambda$ no such $C_s$
exists, the tail carrying the factor $(1+\lambda x)^{k_s-L_s}$, which is why
the part assumes $\nu\neq\lambda$); for $s\in\mathcal A$ use
$\Pr(T>x)\ge e^{-\lambda k_sx}$, from the intensity bound of (iii), together
with $1-u\le e^{-u}$.  The rest of the proof of Lemma~\ref{lem:persistmax} is
unchanged, and the separation condition is the divergence that sends both sums
to zero at $x$ chosen midway between the two scales.

(v) Each $Z_{ig}=\min(\sigma_g,\varepsilon_{ig})$ is $\mathrm{Exp}(\nu)$ with
$\nu>0$, hence finite almost surely, and the index set is finite, so
$\Theta'=\max_{(i,g)}Z_{ig}<\infty$ almost surely.  For $t\ge\Theta'$ every
initially satisfied incidence has been lost, and with $\rho=0$ the initially
deficient ones never recover, so $\varphi_{ig}(t)=0$ for every incidence and
Definition~\ref{def:cohesion} gives $E_C(t)=\emptyset$.

(vi) If $\rho_S>0$ then $Z_{ig}\le\sigma_g$, so $\Theta'\le\max_g\sigma_g$, whose
law does not depend on the multiplicities, while by (iii) the last mixed exit is
$\log n_{AB}/(\lambda k_{AB})+O_P(1)\to\infty$.  Hence the probability that the
bridge is exhausted before $\Theta'$ tends to zero, and on the complement every
surviving actor is isolated in $G_C$; with at least two pure-$A$ survivors at
that time the internal cohesion asserted by Theorem~\ref{thm:thinbridge} fails.
\end{proof}

%% file: tables/support.tex
\begin{tabular}{@{}llrrrr@{}}
\toprule
Observable & $m$ & \multicolumn{2}{c}{forced} & \multicolumn{2}{c}{non-forced} \\
\cmidrule(lr){3-4}\cmidrule(lr){5-6}
& & min & max & min & max \\
\midrule
$\sigma$ & 3 & 0.6667 & 0.6667 & 0.6667 & 0.6667 \\
         & 4 & 0.3333 & 0.6667 & 0.3333 & 0.8333 \\
         & 5 & 0.2000 & 0.7000 & 0.2000 & 0.9000 \\
         & 6 & 0.1333 & 0.7333 & 0.1333 & 0.9333 \\
\midrule
$I$      & 3 & 1.0000 & 1.0000 & 1.0000 & 1.0000 \\
         & 4 & 0.5000 & 1.0000 & 0.5000 & 1.0000 \\
         & 5 & 0.3000 & 1.0000 & 0.3000 & 1.0000 \\
         & 6 & 0.2000 & 1.0000 & 0.2000 & 1.0000 \\
\midrule
$K$      & 3 & 0.6667 & 0.6667 & 0.6667 & 0.6667 \\
         & 4 & 0.6667 & 0.6667 & 0.5000 & 0.8333 \\
         & 5 & 0.5000 & 0.7500 & 0.4000 & 0.9000 \\
         & 6 & 0.4000 & 0.7500 & 0.4000 & 0.9333 \\
\midrule
$P_A$    & 3 & 0.6667 & 0.6667 & 0.6667 & 0.6667 \\
         & 4 & 0.6667 & 0.8333 & 0.5000 & 0.8333 \\
         & 5 & 0.5000 & 0.9000 & 0.4000 & 0.9000 \\
         & 6 & 0.4000 & 0.9333 & 0.4000 & 0.9333 \\
\bottomrule
\end{tabular}

%% file: tables/extremal.tex
\begin{tabular}{@{}rrrrr@{}}
\toprule
$m$ & least $r$ & edges & $\binom{m-1}{2}+1$ & $\sigma=\mathrm{cap}(m)$ \\
\midrule
3 & 2 & 2 & 2 & 0.6667 \\
5 & 4 & 7 & 7 & 0.7000 \\
7 & 5 & 16 & 16 & 0.7619 \\
9 & 5 & 29 & 29 & 0.8056 \\
12 & 6 & 56 & 56 & 0.8485 \\
\bottomrule
\end{tabular}

%% file: tables/orbit.tex
\begin{tabular}{@{}rrrrr@{}}
\toprule
$r$ & candidate sets & admissible pairs & non-forced & orbits \\
\midrule
2 & $4$ & $1$ & $0$ & $3$ \\
3 & $120$ & $159$ & $132$ & $36$ \\
4 & $32{,}752$ & $202{,}198$ & $200{,}400$ & $1{,}987$ \\
\bottomrule
\end{tabular}